\let\emph\relax 
\DeclareTextFontCommand{\emph}{\sffamily\em}
\newcommand{\sign}{\mathrm{sign}}
\newcommand{\CSP}{\mathrm{CSP}}
\newcommand{\CH}{\mathrm{Conv}}
\newcommand{\OPT}{\mathrm{OPT}}
\newcommand{\E}{\mathbb{E}}
\theoremstyle{plain}
\newtheorem{theorem}{Theorem}[section]
\newtheorem{lemma}[theorem]{Lemma}
\newtheorem{corollary}[theorem]{Corollary}
\newtheorem{prop}[theorem]{Proposition}
\newtheorem{definition}[theorem]{Definition}
\newtheorem{remark}[theorem]{Remark}
\newtheorem{example}[theorem]{Example}
\title{On the Approximability of Presidential Type Predicates}
\author{Neng Huang\footnote{The University of Chicago. \texttt{nenghuang@uchicago.edu}} \qquad \qquad Aaron Potechin\footnote{The University of Chicago. \texttt{potechin@uchicago.edu}}}
\begin{document}

\maketitle

\begin{abstract}
Given a predicate $P: \{-1, 1\}^k \to \{-1, 1\}$, let $\CSP(P)$ be the set of constraint satisfaction problems whose constraints are of the form $P$. We say that $P$ is approximable if given a nearly satisfiable instance of $\CSP(P)$, there exists a probabilistic polynomial time algorithm that does better than a random assignment. Otherwise, we say that $P$ is approximation resistant.

In this paper, we analyze presidential type predicates, which are balanced linear threshold functions where all of the variables except the first variable (the president) have the same weight. We show that almost all presidential-type predicates $P$ are approximable. More precisely, we prove the following result: for any $\delta_0 > 0$, there exists a $k_0$ such that if $k \geq k_0$, $\delta \in (\delta_0,1 - 2/k]$, and ${\delta}k + k - 1$ is an odd integer then the presidential type predicate $P(x) = \sign({\delta}k{x_1} + \sum_{i=2}^{k}{x_i})$ is approximable. To prove this, we construct a rounding scheme that makes use of biases and pairwise biases. We also give evidence that using pairwise biases is necessary for such rounding schemes.
\end{abstract}

\section{Introduction}
In constraint satisfaction problems (CSPs), we have a set of constraints and we want to satisfy as many of them as possible. Many fundamental problems in computer science are CSPs, including 3-SAT, MAX CUT, $k$-colorability, and unique games.

One fundamental question about CSPs is as follows. For a given type of CSP, is there a randomized polynomial time algorithm which is significantly better than randomly guessing an assignment? More precisely, letting $r$ be the expected proportion of constraints satisfied by a random assignment, is there an $\epsilon > 0$ and a randomized polynomial time algorithm $A$ such that given a CSP instance where at least $(1-\epsilon)$ of the constraints can be satisfied, $A$ returns an $x$ which satisfies at least $(r + \epsilon)$ of the constraints in expectation? If so, we say that this type of CSP is approximable. If not, then we say that this type of CSP is approximation resistant.

For example, H{\aa}stad's 3-bit PCP theorem \cite{hastad_optimal_2001} proves that 3-XOR instances (where every constraint is a linear equation modulo $2$ over $3$ variables) are NP-hard to approximate. A direct corollary of H{\aa}stad's 3-bit PCP theorem is that 3-SAT is also NP-hard to approximate and this theorem has served as the basis for numerous other inapproximability results. On the other hand, Goemans and Williamson's \cite{goemans_improved_1995} breakthrough algorithm for MAX CUT, which gives an approximation ratio of $.878$ for MAX CUT, shows that MAX CUT is approximable as a random cut would only cut half of the edges in expectation.

However, while the approximability or approximation resistance of CSPs has been extensively investigated, there is still much that is unknown. In this paper, we investigate CSPs where every constraint has the form of some fixed presidential type predicate $P$. We show that for almost all presidential type predicates $P$, this type of CSP is approximable.  
\subsection{Definitions}
In order to better describe our results and their relationship to prior work, we need a few definitions.

\begin{definition}
A \emph{boolean predicate} $P$ of arity $k$ is a function $P: \{-1, 1\}^k \to \{-1, 1\}$.
\end{definition}

We remark that in general a predicate can be non-boolean.

\begin{definition}
A \emph{presidential type predicate} is a boolean predicate of the form
\[
P(x_1, \ldots, x_k) = \sign(a\cdot x_1 + x_2 + \cdots + x_k),
\]
where $x_i \in \{-1, 1\}$ for every $i \in [k]$ and $a = a(k)$ is a function on $k$ that takes integer values such that $a + k - 1$ is an odd integer.
\end{definition}

\begin{remark}
In the definition above we assume that $a = a(k)$ takes only integer values. This is not a serious restriction because if $a$ is not an integer, then we can shift $a$ up or down slightly to find another presidential type predicate with integer coefficient $a'$ which is equivalent to the original predicate (see the appendix for a brief proof). We require $a + k - 1$ to be odd in order to prevent a tie.
\end{remark}

We can think of the predicate as a vote where the vote of $x_1$, the ``president", has weight $a$, while votes of remaining voters, the ``citizens", have the same weight 1.

\begin{remark}
Note that presidential type predicates are balanced linear threshold functions, i.e. functions of the form $sign(\sum_{i=1}^{k}{{c_i}x_i})$ where $\forall i, c_i \in \mathbb{R}$ and $\forall x \in \{-1,1\}^k, \sum_{i=1}^{k}{{c_i}x_i} \neq 0$ (so that the function is well-defined). Note that if a predicate $P$ is a balanced linear threshold function, $P(-x) = -P(x)$ so exactly half of the assignments satisfy the predicate and thus a uniformly random assignment has expected value 0.
\end{remark}

\begin{definition}
Given a boolean predicate $P: \{-1, 1\}^k \to \{-1, 1\}$, an instance $\Phi$ of $\CSP(P)$ consists of a set of $n$ variables $x_1, \ldots, x_n$ and $m$ constraints $C_1, \ldots, C_m$ where each $C_i$ has the form
\[
C_i(x_{i_1}, \ldots, x_{i_k}) = P(z_{i, 1}x_{i_1}, \ldots, z_{i, k}x_{i_k})
\]
for some $i_1, \ldots, i_k \in [n]$ and $z_{i, 1} \ldots z_{i, k} \in \{-1, 1\}$.
\end{definition}

\begin{definition}
A boolean predicate $P$ is \emph{approximable} if there exists a constant $\epsilon > 0$ and a polynomial time algorithm, possibly randomized, that on input $\Phi \in \CSP(P)$ such that $\OPT(\Phi) \geq 1 - \epsilon$, produces an assignment to $\Phi$'s variables that in expectation satisfies $r_P + \epsilon$ fraction of the constraints in $\Phi$, where $r_P = \E_{x \in \{-1, 1\}^k}[(1 + P(x))/2]$ is the probability that a constraint in $\Phi$ is satisfied by a random assignment. Otherwise, we say $P$ is \emph{approximation resistant}.

We say that a boolean predicate $P$ is \emph{weakly approximable} if there exists a constant $\epsilon > 0$ and a polynomial time algorithm, possibly randomized, that on input $\Phi \in \CSP(P)$ such that $\OPT(\Phi) \geq 1 - \epsilon$, produces an assignment to $\Phi$'s variables that in expectation either satisfies at least $r_P + \epsilon$ fraction of the constraints in $\Phi$ or satisfies at most $r_P - \epsilon$ fraction of the constraints in $\Phi$. Otherwise, we say that $P$ is \emph{strongly approximation resistant}.
\end{definition}
\begin{remark}
For presidential type predicates, and in fact any odd predicate $P$ (i.e. a predicate $P$ where $P(-x) = -P(x)$), the notions of being approximable and being weakly approximable are equivalent.
\end{remark}

\subsection{Our Results}
In this paper, we prove the following result.

\begin{theorem}\label{thm:2}
For any $\delta_0 > 0$, there exists a $k_0 \in \mathbb{N}$ such that if $k \geq k_0$, $\delta \in (\delta_0,1 - 2/k]$, and ${\delta}k + k - 1$ is an odd integer then the presidential type predicate 
\[
P(x) = \sign\left({\delta}k{x_1} + \sum_{i=2}^{k}{x_i}\right)
\]
is approximable.
\end{theorem}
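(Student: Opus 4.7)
The plan is to round a standard SDP relaxation of $\CSP(P)$ by a Gaussian-threshold scheme that matches both the one-point and two-point marginals coming out of the SDP, then exploit the Fourier structure of $P$ to prove that the expected fraction of satisfied constraints beats $r_P = 1/2$ by $\Omega(\epsilon)$. Setting up the SDP, I associate unit vectors $v_1,\ldots,v_n$ to the variables together with a ``truth'' direction $v_0$, and read off biases $b_i = \langle v_0, v_i\rangle$ and pairwise biases $\beta_{ij} = \langle v_i, v_j\rangle - b_i b_j$. For a near-satisfiable instance ($\OPT \ge 1-\epsilon$), I would argue that on at least a $1 - O(\sqrt{\epsilon})$ fraction of constraints the local SDP view is close to that of a fully satisfying distribution.

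Next I would design and analyse the rounding. The natural choice is correlated Gaussian threshold rounding: sample $(g_1,\ldots,g_n)$ jointly Gaussian with $\E[g_ig_j]=\langle v_i, v_j\rangle$, choose thresholds $t_i$ so that $\E[\sign(g_i-t_i)] = b_i$, and output $x_i = \sign(g_i - t_i)$. This matches the biases exactly and tracks the pairwise biases to leading order, so both ingredients advertised in the abstract are put to use. For the analysis I would expand constraint-wise
\[
\E[P(x)] \;=\; \sum_{S\subseteq [k],\,|S|\text{ odd}} \hat P(S)\,\E\Big[\prod_{i\in S} x_i\Big],
\]
using the fact that the odd symmetry of $P$ kills all even-degree Fourier terms. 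For the presidential predicate a Gaussian CLT on the citizens yields $\hat P(\{1\})=\Theta(1)$ (concentrating near $\sqrt{2/\pi}\,\delta/\sqrt{\delta^2+1/k}$), each citizen $\hat P(\{i\})=\Theta(1/(\delta k))$, and the only sizeable cubic coefficients are of the form $\hat P(\{1,i,j\})$ whose signs are determined by the Gaussian density at the threshold $\delta\sqrt{k}/\sqrt{1+1/k}$. I would then split into two regimes: in a ``large-bias'' regime where a constant fraction of constraints have $|z_1 b_{i_1}| = \Omega(\sqrt{\epsilon})$ on the president, the degree-$1$ contribution $\hat P(\{1\})z_1 b_{i_1}+\sum_{j\ge 2}\hat P(\{i_j\})z_j b_{i_j}$ alone beats $r_P$ by $\Omega(\sqrt\epsilon)$, since its sign is forced by a satisfying local distribution; in the complementary ``small-bias'' regime, the SDP being close to $1$ forces $\sum_{s<t}\beta_{i_s i_t}^2$ to be bounded below, and substituting the Gaussian cubic moment $\E[x_ix_jx_k]\approx b_i\beta_{jk}+b_j\beta_{ik}+b_k\beta_{ij}+b_ib_jb_k$ shows the degree-$3$ contribution is $\Omega(\epsilon)$ with the correct sign.

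The main obstacle will be this small-bias regime, where the claimed advantage must come entirely from pairwise biases through degree-$3$ Fourier mass. I would need to show that for a presidential predicate with $\delta\in(\delta_0,1-2/k]$ the triples $\hat P(\{1,i,j\})$ have a consistent sign and dominate the all-citizen triples $\hat P(\{i,j,l\})$, which by symmetry and the CLT are subdominant by a factor of $1/(\delta k)$. The parity condition that $\delta k+k-1$ is odd enters exactly here, ruling out ties $\delta k x_1 + \sum_{i\ge 2} x_i = 0$ that would otherwise flatten the relevant Gaussian derivatives and collapse the sign analysis; the lower bound $\delta>\delta_0$ ensures the president's weight is large enough to be felt in the cubic Fourier mass, while $\delta\le 1-2/k$ keeps the predicate genuinely presidential (not a pure dictatorship) so that the pairwise-bias-based advantage does not trivialize. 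A secondary technical difficulty is producing a polynomial-time implementation of the correlated rounding that certifies the required cubic moment estimate on average over constraints; this can be handled by a hybrid argument between the Gaussian ensemble and an explicit sample, as is standard in this literature.
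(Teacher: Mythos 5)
Your proposal has several genuine gaps that make it fail, and it diverges substantially from the paper's argument.

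First, your Fourier coefficient estimates are incorrect for this predicate. For constant $\delta \in (\delta_0, 1)$ the president's weight $\delta k$ swamps the citizens' sum (which is typically $O(\sqrt{k})$), so the president's Fourier coefficient $\hat{P}_P$ is not $\Theta(1)$ in the CLT sense you describe but is exponentially close to $1$, and the citizen coefficients $\hat{P}_{tC}$ and $\hat{P}_{P+tC}$ are all \emph{exponentially small} in $k$ (see Lemma~\ref{lem:fourier_coeff_2}). Your claimed scaling $\hat{P}_{\{i\}} = \Theta(1/(\delta k))$ and the claim that the president-involving triples dominate citizen triples by a factor $1/(\delta k)$ are both wrong; the ratio $\hat{P}_{P+2C}/\hat{P}_{3C}$ is $\Theta(1/\delta)$, so they are the same order. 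Once the citizen coefficients are recognized as exponentially small, your ``large-bias regime'' argument collapses: the degree-$1$ contribution is $\hat{P}_P \alpha + \hat{P}_C \beta \approx \alpha$, and this is \emph{negative} for satisfying assignments where the president votes $-1$ and the citizens vote $+1$. The term $\hat{P}_C \beta$ that would need to compensate is exponentially small, so the degree-$1$ contribution alone is nowhere near enough; the higher-degree terms are forced to supply the entire $\beta$ contribution.

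Second, stopping at degree $3$ cannot possibly work. In the paper's framework, a degree-$3$ rounding scheme corresponds to the rounding polynomial $h(x) = a_1 x$ (degree one in $h$), so the contribution $k(\beta - \alpha/\delta)h(1+\Delta)$ tracks $1+\Delta$, which drifts arbitrarily far from $1$ as $\Delta$ ranges over its full range $(-1-O(1/k), 1/\delta^2)$. The paper explicitly notes (see the remark following the figure in Section~\ref{overviewsubsection}) that even $h(1+\Delta) = 1 + \Delta^3$, which corresponds to using monomials up to degree $7$, is \emph{not} enough for general $\delta_0 < 1$; one needs the truncated Taylor polynomial of $1 + \Delta^3 e^{-B(1+\Delta)}$ of some degree $m$ depending on $\delta_0$. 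Relatedly, fixing the rounding to a single correlated-Gaussian-threshold scheme eliminates the freedom to tune the coefficients $c_{2l+1}$ on which the argument depends.

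Third, your case split on the magnitude of the president's bias $|z_1 b_{i_1}|$ is the wrong split. The paper's case analysis is on $\Delta$, the normalized deviation of the sum of citizen pairwise biases $S_{\{\{i_1,i_2\}\}}$ from its expected value $E$; the key geometric inequality (Lemma~\ref{lem:linear2}) is $\delta k \alpha + \beta \geq \frac{(\delta^2 k - 1)|\Delta|}{4} + \frac{1}{2}$ over the KTW polytope, which is what controls the error terms. There is no analogue of this in your proposal, and the claim that ``the SDP being close to $1$ forces $\sum \beta_{ij}^2$ to be bounded below'' and hence makes the degree-$3$ contribution $\Omega(\epsilon)$ with the correct sign ignores that the sign can flip: when $\Delta$ is large and negative (near $-1$), $S_{\{\{i_1,i_2\}\}}$ is near $0$ and the degree-$3$ contribution can be negative. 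Finally, your reading of the parity condition $\delta k + k - 1$ odd is also off; it is there simply to prevent ties so the predicate is well-defined, not to control Gaussian derivatives.
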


\begin{remark}
Informally, this theorem says that if the weight of $x_1$ (the ``president'') is at least a constant times $k$, then the predicate is approximable for sufficiently large $k$. We have the condition $\delta \leq 1 - 2/k$ because if $\delta > 1 - 2/k$ and ${\delta}k + k - 1$ is an odd integer, then ${\delta}k \geq k$, which means the predicate is a dictator predicate which is trivially approximable.
\end{remark}

We will prove this theorem by constructing a \textit{rounding scheme} that makes use of \textit{biases} and \textit{pairwise biases}, which are given by a standard semi-definite program (see Section~\ref{standardSDPsubsection}). Complementarily, we also give evidence that using pairwise biases is necessary for such rounding schemes. In particular, we show that for any fixed $\delta > 0$ and degree $m$, for sufficiently large $k$ there is no rounding scheme for the predicate $P(x) = \sign\left({\delta}k{x_1} + \sum_{i=2}^{k}{x_i}\right)$ which has degree at most $m$ and does not use pairwise biases (see Theorem \ref{thm:pairwisenecessity}).

\subsection{Relationship to Prior Work}
We now describe known criteria for determining whether a predicate $P$ is approximable or approximation resistant and how our techniques compare to these criteria.

In 2008, Raghavendra \cite{raghavendra_optimal_2008} gave a characterization of which predicates are approximable and which predicates are approximation resistant. Raghavendra showed that either a standard semi-definite program (SDP) together with an appropriate rounding scheme gives a better approximation ratio than a random assignment or it is unique games hard to do so (see Section \ref{standardSDPsubsection}). However, this characterization leaves much to be desired because for a given predicate, it can be extremely hard to tell which case holds. In fact, it is not even known to be decidable!

Khot, Tulsiani, and Worah \cite{khot_characterization_2013} gave a characterization of which predicates are weakly approximable which is based on whether there exist certain vanishing measures over a polytope which we call the KTW polytope (though similar polytopes were analyzed in some earlier papers, see e.g.~\cite{austrin_randomly_2011, austrin_usefulness_2013,austrin_characterization_2013,karloff_78-approximation_1997}). Unfortunately, it is also unknown whether this characterization is decidable.

Thus, if we want to determine if a given predicate $P$ is approximable or approximation resistant, it is often better to use more direct criteria. For showing that predicates are hard to approximate, the following criterion, proved by Austrin and Mossel \cite{austrin_approximation_2009}, is extremely useful.
\begin{definition}
We say that a predicate $P$ has a balanced pairwise independent distribution of solutions if there exists a distribution $D$ on $\{-1,1\}^k$ such that 
\begin{enumerate}
    \item $D$ is supported on $\{x \in \{-1,1\}^k: P(x) = 1\}$ ($D$ is a distribution of solutions to $P$)
    \item For all $i \in [k]$, $\E_{x \in D}[x_i] = 0$ and for all $i<j \in [k]$, $\E_{x \in D}[{x_i}{x_j}] = 0$
\end{enumerate}
\end{definition}
\begin{theorem}
If $P$ has a balanced pairwise independent distribution of solutions then $P$ is unique games hard to approximate. 
\end{theorem}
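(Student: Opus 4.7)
The plan is to follow the standard Unique-Games-based hardness framework pioneered by Khot and refined in \cite{austrin_approximation_2009}: construct a dictatorship test tailored to the distribution $D$, use the multidimensional invariance principle of Mossel--O'Donnell--Oleszkiewicz for soundness, and then compose with a Unique Games instance in the usual way.

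First, I would design the \emph{dictatorship test} on the long code. For a function $f: \{-1,1\}^L \to \{-1,1\}$, the test draws $L$ i.i.d.\ samples $(x^{(1)}_\ell,\ldots,x^{(k)}_\ell) \sim D$ for $\ell \in [L]$, applies a small amount of noise $\eta>0$ to each coordinate to guarantee low-degree influence decay, queries $f$ at the strings $x^{(1)},\ldots,x^{(k)} \in \{-1,1\}^L$, and accepts iff $P(f(x^{(1)}),\ldots,f(x^{(k)})) = 1$. Completeness is immediate: if $f(x)=x_i$ is the $i$-th dictator, the queried tuple is exactly a $D$-sample, which lies in $P^{-1}(1)$ by hypothesis, so the test accepts with probability $1-O(\eta)$.

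Second, and this is the core of the argument, I would prove the \emph{soundness}: if $f$ has no coordinate with noticeable low-degree influence, then the acceptance probability is at most $r_P + \epsilon$. The tool is the invariance principle applied to correlated spaces: one replaces the correlated $\pm 1$ inputs drawn from $D^{\otimes L}$ by jointly Gaussian inputs with matching first and second moments, showing that the expectation of a smoothed version of $\mathbf{1}[P(\cdot) = 1]$ changes by $o(1)$. The crucial payoff of the pairwise independence assumption appears here: in the Gaussian world, pairwise uncorrelated jointly Gaussian variables are \emph{fully} independent, so the acceptance probability reduces to $\Pr_{y \in \{-1,1\}^k}[P(y) = 1] = r_P$. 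Some technical care is needed to smooth the discontinuous indicator $\mathbf{1}[P=1]$ in a Lipschitz manner, to apply the invariance principle coordinate-by-coordinate to each $f(x^{(j)})$, and to pass from Gaussian inputs back to Boolean ones — this smoothing/rounding interface is where I expect the main obstacle to lie, since one must quantify everything in terms of a single noise parameter $\eta$ and an influence threshold $\tau$ that the outer UG reduction can tolerate.

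Third, I would carry out the standard \emph{reduction from Unique Games}. Given a UG instance on vertex set $V$ with label set $[L]$, introduce a long-code function $f_v: \{-1,1\}^L \to \{-1,1\}$ for each $v \in V$, fold to ensure $f_v(-x) = -f_v(x)$, and for each UG edge run the dictatorship test above with each query shifted by the relevant edge permutation. A satisfying UG labeling translates into long codes that are all (consistent) dictators, yielding a CSP$(P)$ instance with value $1 - O(\eta)$; conversely, if an assignment satisfies the CSP noticeably better than $r_P$, the soundness step forces some $f_v$ to have an influential coordinate, and a standard decoding argument extracts a UG labeling of nontrivial value, contradicting the Unique Games Conjecture. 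Combining the three steps yields the theorem with gap parameters $(1-O(\eta), r_P + \epsilon)$, which can be made arbitrarily close to the claim by choosing $\eta, \tau$ small.
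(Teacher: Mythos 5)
The paper does not prove this theorem: it is quoted as a known criterion of Austrin and Mossel \cite{austrin_approximation_2009} and used as a black box, so there is no proof in the paper to compare your attempt against. That said, your sketch is a faithful high-level reconstruction of the Austrin--Mossel argument: a dictatorship test whose query distribution is $D^{\otimes L}$ perturbed by noise, completeness for dictators (which is where the support condition on $D$ is used), soundness via the multilinear invariance principle (where balancedness and pairwise independence make the first two moments of $D$ match those of the uniform measure, so low-influence folded functions cannot distinguish the two), and the standard composition with Unique Games via long codes. The place you correctly flag as the technical crux — smoothing the indicator, truncating to low degree, applying invariance coordinate-by-coordinate, and returning from Gaussian to Boolean — is precisely where the quantitative work in Austrin and Mossel's paper lives; supplying the exact invariance statement with its dependence on the noise rate and influence threshold is what would turn your outline into a complete proof.
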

This criterion captures most but not all predicates which are known to be unique games hard to approximate. One example of a predicate which is not captured by this criterion is the predicate which was recently constructed by Potechin~\cite{potechin_approximation_2018} which is unique games hard to approximate and is a balanced linear threshold function. \footnote{There were previously known predicates, such as the GLST predicate~\cite{guruswami_tight_1998} $P(x_1,x_2,x_3,x_4) = \frac{1 + x_1}{2}{x_2}{x_3} + \frac{1 - x_1}{2}{x_2}{x_4}$, which are unique games hard (in fact NP-hard) to approximate yet do not have a balanced pairwise independent distribution of solutions. However, the hardness of these predicates can be reduced to the hardness of predicates which do have a balanced pairwise independent distribution of solutions, so Austrin and Mossel's criterion can still be used for these predicates.}

For approximation resistance which does not rely on the hardness of unique games, Chan~\cite{chan_approximation_2016} gave the following stricter criterion which implies NP-hardness of approximation.
\begin{theorem}
If a predicate $P$ has a balanced pairwise independent subgroup of solutions then $P$ is NP-hard to approximate. 
\end{theorem}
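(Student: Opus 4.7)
The plan is to reduce from multi-layered smooth Label Cover using a PCP-style verifier whose acceptance test samples from the subgroup, following Chan's approach. Let $H \subseteq \{-1,1\}^k$ denote the balanced pairwise independent subgroup of solutions to $P$. First I would invoke the PCP theorem together with parallel repetition and layering to obtain a Label Cover instance which is either fully satisfiable or for which every labeling satisfies at most an arbitrarily small fraction of constraints; smoothness and layering are needed so that the Fourier decoding below can be made consistent across many queries.

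Next I would construct the CSP$(P)$ reduction. For each Label Cover vertex $v$, introduce a folded long code $f_v:\{-1,1\}^{R_v}\to\{-1,1\}$ with $f_v(-x)=-f_v(x)$ (this folding exploits the fact that $P$ is odd whenever $H$ is balanced pairwise independent). A constraint is generated by sampling a ``pivot'' vertex $u$ and $k$ neighbors $v_1,\dots,v_k$ with projections $\pi_i$, drawing $x\in\{-1,1\}^{R_u}$ uniformly, sampling a pattern $h\in H$, and placing a constraint that $P$ holds on $(h_i\cdot f_{v_i}(x\circ \pi_i))_{i=1}^{k}$. Since $H \subseteq P^{-1}(1)$, a true long-code encoding of a satisfying labeling makes every constraint pass, giving completeness arbitrarily close to $1$.

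For soundness I would assume a global assignment of long codes satisfies more than $r_P+\epsilon$ fraction of constraints and expand the acceptance advantage into Fourier coefficients of the $f_v$'s. Balanced pairwise independence of $H$ annihilates every Fourier character of degree at most $2$, so the surplus must live on characters of degree $\geq 3$. Applying the standard influence-based decoding (select a label with probability proportional to the squared Fourier weight at that coordinate, downweighted by character size), I would extract a randomized Label Cover labeling satisfying a non-negligible fraction of projections, contradicting Label Cover hardness. The subgroup hypothesis (as opposed to merely a distribution) enters precisely here: closure of $H$ under coordinatewise multiplication lets one set up Chan's ``direct sum of Label Cover instances'' so that the Fourier characters arising in the test factor through the dual group $H^\perp$, and the combinatorial hardness is preserved under combining several Label Cover instances into one PCP.

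The hard part is the soundness analysis, not the verifier construction or completeness. Specifically, getting an NP-hardness reduction (rather than a Unique Games hardness one) requires the direct-sum trick, and the subgroup closure must be used repeatedly to argue that cross-terms between different Label Cover instances in the Fourier expansion either vanish or contribute consistently; controlling these cross-terms while still being able to decode a single labeling from the surviving Fourier mass is the technical heart of the proof, and it is where Chan's use of the subgroup (as opposed to just a pairwise independent distribution) becomes essential.
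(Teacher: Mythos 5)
First, a point of orientation: the paper does not prove this statement. It is quoted as Chan's theorem with a citation to \cite{chan_approximation_2016} and is used purely as background, to contrast known approximation-resistance criteria with the approximability results the paper actually establishes. So there is no in-paper proof to compare against; the only meaningful benchmark is Chan's original argument, and your proposal should be judged as an attempted reconstruction of that.

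Measured against that benchmark, what you have is a correct roadmap rather than a proof, and it contains one concrete error. The error: you justify folding $f_v(-x)=-f_v(x)$ by asserting that ``$P$ is odd whenever $H$ is balanced pairwise independent.'' This is false. A balanced pairwise independent subgroup need not contain $-\mathbf{1}$; for example, for $k=3$ the subgroup $\{(1,1,1),(1,-1,-1),(-1,1,-1),(-1,-1,1)\}$ is balanced and pairwise independent, yet $(-1,-1,-1)$ is not in it, so a predicate accepting exactly this subgroup is not odd. Chan's construction folds the long codes over (cosets of) the subgroup $H$ itself, not over global negation, and this is exactly where the subgroup structure first enters; substituting odd-folding would break the completeness/soundness bookkeeping for general $H$. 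The larger issue is that the theorem's entire content lives in the soundness analysis, which you describe but do not carry out: the direct-sum construction of the test over several Label Cover instances, the decoupling/hybrid argument that isolates a single instance from the cross-terms, the precise use of smoothness to control large Fourier characters, and the list-decoding step are all named as ``the technical heart'' without being executed. Identifying where the difficulty lies is not the same as resolving it, so as a proof this has a genuine gap at its central step.
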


For showing that predicates are approximable, the general technique is as follows:
\begin{enumerate}
\item Run Raghavendra's SDP to obtain biases $\{b_i: i \in [n]\}$ and pairwise biases $\{b_{ij}: i < j \in [n]\}$ for the variables.
\item Construct a rounding scheme which takes these biases and pairwise biases and gives us a solution $x$ such that if the SDP ``thinks'' that almost all of the constraints are satisfiable then $x$ satisfies significantly more constraints than a random assignment in expectation.
\end{enumerate}
Based on rounding schemes which are essentially linear in the biases and pairwise biases, Hast~\cite{hast_beating_2005} obtained the following criterion for when predicates are approximable:
\begin{theorem}[Hast's criterion]
Given a predicate $P: \{-1,+1\}^k \to \{-1,+1\}$, 
\begin{enumerate}
\item Define $P_1: \{-1,+1\}^k \to \mathbb{R}$ to be $P_1(x) = \sum_{i=1}^{k}{\hat{P}_{\{i\}}x_i}$
\item Define $P_2: \{-1,+1\}^k \to \mathbb{R}$ to be $P_2(x) =  \sum_{i=1}^{k-1}{\sum_{j=i+1}^{k}{\hat{P}_{\{i,j\}}{x_i}{x_j}}}$.
\end{enumerate}
If there are constants $c_1,c_2$ such that $c_2 \geq 0$ and ${c_1}P_1(x) + {c_2}P_2(x) > 0$ for all $x$ such that $P(x) = 1$ then $P$ is approximable.
\end{theorem}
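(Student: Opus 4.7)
The plan is a standard SDP-and-round argument. First, I would solve the canonical semi-definite relaxation, obtaining unit vectors $v_0, v_1, \ldots, v_n$ with biases $b_i := \langle v_0, v_i\rangle$ and pairwise biases $b_{ij} := \langle v_i, v_j\rangle$. Since $\OPT(\Phi) \geq 1 - \epsilon$, the SDP value is $\geq 1 - \epsilon$, and for each constraint $C$ the SDP supplies a local distribution $\mu_C$ on $\{-1,1\}^k$ whose first and second moments match the restricted $(b_i)$ and $(b_{ij})$, and which is supported on $P^{-1}(1)$ for a $(1-O(\epsilon))$-fraction of constraints on average. Finiteness of $\{-1,1\}^k$ combined with the strict inequality in the hypothesis yields a uniform gap $\gamma := \min_{P(x)=1}\bigl(c_1 P_1(x) + c_2 P_2(x)\bigr) > 0$, so averaging against $\mu_C$ and using linearity of $P_1, P_2$ in the monomials gives
\[
\E_C\bigl[c_1\, P_1(b_C) + c_2\, \widetilde{P}_2(B_C)\bigr] \;\geq\; \gamma - O(\epsilon),
\]
where $\widetilde{P}_2(B_C) := \sum_{a<b}\hat{P}_{\{i_a,i_b\}}\, b_{i_a i_b}$ is the linear functional induced by $P_2$ on the pairwise biases of constraint $C$.

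Next, I would design a randomized rounding that produces $x \in \{-1,1\}^n$ as a convex combination of three pieces, parameterized by a small $\tau > 0$: (i) a uniformly random assignment; (ii) independent biased rounding with $\Pr[x_i = 1] = (1 + \tau c_1 b_i)/2$; and (iii) a small-angle Gaussian hyperplane rounding on the SDP vectors, scaled to contribute $\E[x_i x_j] \approx \tau c_2\, b_{ij}$ with negligible marginals. Here $c_2 \geq 0$ is crucial: the pairwise correlation matrix of any $\pm 1$ distribution is PSD, so $\tau c_2 (b_{ij})$ can be realized as a second-moment matrix of a rounding only when $c_2 \geq 0$. A suitable choice of mixture weights then gives $\E[x_i] = \tau c_1 b_i + O(\tau^2)$ and $\E[x_i x_j] = \tau c_2 b_{ij} + O(\tau^2)$, with higher-degree Fourier expectations $\E[\chi_S(x)] = O(\tau^2)$ for all $|S| \geq 3$.

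Third, I would evaluate the expected number of satisfied constraints by Fourier expansion. For each constraint $C$,
\[
\E\!\left[\tfrac{1+P(x)}{2}\right] = r_P + \tfrac{\tau}{2}\bigl(c_1 P_1(b_C) + c_2 \widetilde{P}_2(B_C)\bigr) + O(\tau^2),
\]
where the $O(\tau^2)$ absorbs all Fourier modes beyond the targeted linear and pairwise contributions (using that $k$ is a constant and that the Fourier spectrum of $P$ is bounded). Summing over all $m$ constraints, the linear-in-$\tau$ term contributes at least $(\tau/2)(\gamma - O(\epsilon))\, m$, while the error is at most $O(\tau^2)\, m$. Choosing $\tau$ small so the $\tau^2$ error is $\leq \tau\gamma/8$, then $\epsilon$ small so $O(\epsilon) \leq \gamma/4$, the rounding beats the random baseline by at least $(\tau\gamma/8)\, m$ in expectation, establishing approximability with $\epsilon' := \tau\gamma/8$.

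The main obstacle is the rounding design in the second step: one must realize a $\pm 1$ joint distribution whose linear and pairwise moments can be independently tuned to $\tau c_1 b_i$ and $\tau c_2 b_{ij}$, while simultaneously keeping the higher-degree Fourier expectations genuinely $O(\tau^2)$. The condition $c_2 \geq 0$ reflects the PSD-realizability constraint on second moments noted above and cannot be relaxed by this technique; this is the structural reason the criterion is stated with $c_2 \geq 0$ rather than symmetrically in $c_1$ and $c_2$. A secondary but standard subtlety is the passage from global SDP near-satisfiability to the per-constraint statement that $\mu_C$ is supported on $P^{-1}(1)$ with average error $O(\epsilon)$, which follows from a Markov-inequality argument on the SDP objective.
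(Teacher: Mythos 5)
The paper does not actually prove this statement: it is quoted as ``Hast's criterion'' with a citation to Hast's work, so there is no in-paper proof to compare your attempt against. On its own merits, your outline is the standard argument (and essentially the one behind Hast's result): solve the canonical SDP, use the fact that the linear functional $c_1P_1 + c_2\widetilde{P}_2$ is at least $\gamma$ on every vertex of $KTW_P$ and bounded below on $ALL_k$ to get an average advantage $\gamma - O(\epsilon)$ over the constraints, and then realize first moments $\approx \tau c_1 b_i$ and pairwise moments $\approx \tau c_2 b_{ij}$ with all other Fourier modes $O(\tau^2)$. (Within this paper's own framework, the same conclusion also falls out of Theorem~\ref{thm:rounding} by taking $f_1(b_i) = \epsilon c_1 b_i$ and $f_2(b_i,b_j,b_{ij}) = \epsilon c_2 b_{ij}$, with the sign restriction on $c_2$ surfacing in Remark~\ref{remark:rounding_issues}.) The one step of your plan that needs genuine care is piece (iii): a Gaussian rounding of the form $\Pr[x_i=1] = \bigl(1 + f(s\langle g, v_i\rangle)\bigr)/2$ produces pairwise correlations that are \emph{quadratic} in the strength $s$, so to get $\E[x_ix_j] = \tau c_2 b_{ij} + O(\tau^2)$ you must run it at strength $s = \sqrt{\tau c_2}$ -- this is exactly where $c_2 \geq 0$ enters -- and you must verify that the truncation of $f$ and the degree-$\geq 3$ contributions really are $O(\tau^2)$ (they are, since odd Gaussian moments vanish and the even ones scale like $s^4 = \tau^2$). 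With that detail filled in, and the routine bookkeeping for the mixture weights so that the linear and pairwise moments carry the same factor of $\tau$, your argument goes through.
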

Aside from Hast's criterion, most of the known approximability results are ad-hoc. Some such results are as follows.
\begin{enumerate}
\item Austrin, Benabbas, and Magen \cite{austrin_quadratic_2010} showed that the monarchy predicate $P(x_1,\cdots,x_k) = \sign((k-2)x_1 + \sum_{i=2}^{k}{x_i})$ is approximable and that any predicate $P$ which is a balanced symmetric quadratic threshold function is approximable. 
\item Potechin~\cite{potechin_approximation_2018} showed that the almost monarchy predicate $P(x_1,\cdots,x_k) = \sign((k-4)x_1 + \sum_{i=2}^{k}{x_i})$ is approximable for sufficiently large $k$.
\end{enumerate}
In this paper, we prove that almost all presidential-type predicates are approximable by generalizing the ideas Potechin~\cite{potechin_approximation_2018} used to prove that the almost monarchy predicate is approximable for sufficiently large $k$ and making these ideas more systematic. Our work compares to previous criteria as follows.

\begin{enumerate}
\item Raghavendra's criterion and the KTW criterion give a space of rounding schemes which should be considered but don't provide an efficient way to search for the best rounding scheme in this space. For our techniques, we take full advantage of this space of rounding schemes while also providing a way to systematically construct the rounding scheme which we need.
\item Like Hast's criterion, we need to check that a certain expression is positive for all $x$ such that $P(x) = 1$. However, there are two key differences between our techniques and Hast's criterion. First, as noted above, we use a larger space of rounding schemes. In particular, we use rounding schemes which are very much non-linear in the biases and pairwise biases. Second, because these rounding schemes are nonlinear in the biases and pairwise biases, it is actually not quite enough to check all $x$ such that $P(x) = 1$. Instead, we need to check over the entire KTW polytope.
\end{enumerate}

\section{Techniques for Analyzing Boolean Predicates}

In this section, we recall techniques for analyzing the approximability of boolean predicates.

\subsection{Fourier Analysis}
In this paper, we will make extensive use of the Fourier expansion of boolean predicates. The Fourier expansion of a $k$-ary boolean predicate $P$ is of the following form
\[
P(x) = \sum_{I \subset [k]}\hat{P}_Ix_I,
\]
where $x_I = \prod_{i \in I}x_i$ and $\{\hat{P}_I: I \subseteq [k]\}$ are the Fourier coefficients $\hat{P}_I = \E_{x \in \{-1,1\}^k}[P(x)x_I]$ of $P$. We have the following lemma for the Fourier coefficients, the proof of which can be found in the appendix.
\begin{lemma}[Fourier coefficients of presidential type predicates]\label{lem:fourier_coeff}
Let $P(x_1, \ldots, x_k) = \sign(a\cdot x_1 + x_2 + \cdots + x_k)$ be a presidential type predicate where $a \leq k - 2$ and $a + k - 1$ is an odd integer. Let $\hat{P}_{tC}$ denote the Fourier coefficient of a set of $t$ citizens (indices from $2$ to $k$) and $\hat{P}_{P+tC}$ denote the Fourier coefficient of a set of $t$ citizens together with the president (index 1). Let $\tau = \lfloor (k - a - 1)/2 \rfloor$. We have
\begin{enumerate}[(1)]
    \item $\hat{P}_P = 1 - \frac{1}{2^{k - 2}}\sum_{l = 0}^\tau\binom{k-1}{l},$
    \item $\hat{P}_{tC} = \frac{1}{2^{k - 2}}\sum_{i = 0}^\tau\sum_{j = 0}^{\tau - i}(-1)^j\binom{k-t-1}{i}\binom{t}{j}, \qquad  \forall t (1 \leq t \leq k - 1 \wedge t \textrm{ is odd}),$
    \item $\hat{P}_{P+tC} = -\frac{1}{2^{k - 2}}\sum_{i = 0}^\tau\sum_{j = 0}^{\tau - i}(-1)^j\binom{k-t-1}{i}\binom{t}{j}, \qquad  \forall t (2 \leq t \leq k - 1 \wedge t \textrm{ is even}).$
\end{enumerate}
\end{lemma}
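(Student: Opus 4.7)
The plan is to reduce all three formulas to a single expectation by conditioning on $x_1$ and exploiting a sign-flip symmetry. By the invariance of $P$ under permutations of the citizens, we may take the fixed $t$-element subset of citizens to be $C' = \{2, \ldots, t+1\}$. Write $T = \sum_{i \in C'} x_i$, $U = \sum_{i = t+2}^{k} x_i$, and let $p, q$ be the numbers of $+1$'s in $T$ and $U$ respectively, so $T + U = 2(p+q) - (k-1)$. The hypothesis that $a + k - 1$ is odd forces $a + T + U \neq 0$, and $\sign(a + T + U) = +1$ iff $p + q \geq \tau + 1$. Define
\[
A_\pm := \E[\sign(\pm a + T + U)\,x_{C'}],
\]
where the expectation is over a uniformly random assignment to $x_2, \ldots, x_k$. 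Conditioning on $x_1$ in $\hat{P}_I = \E[P(x)\,x_I]$ then gives $\hat{P}_P = \tfrac{1}{2}(A_+ - A_-)$ (with $t = 0$, so $x_{C'} = 1$ and $T = 0$), $\hat{P}_{tC} = \tfrac{1}{2}(A_+ + A_-)$, and $\hat{P}_{P+tC} = \tfrac{1}{2}(A_+ - A_-)$.

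The central step is the identity $A_- = (-1)^{t+1} A_+$. Substituting $x_i \mapsto -x_i$ for every citizen $i \in \{2, \ldots, k\}$ preserves the uniform distribution, sends $T + U \mapsto -(T+U)$, and multiplies $x_{C'}$ by $(-1)^t$; hence $A_- = \E[\sign(-a - T - U)(-1)^t x_{C'}] = -(-1)^t A_+$. Consequently, for odd $t$ we have $A_- = A_+$ and so $\hat{P}_{tC} = A_+$, while for even $t$ we have $A_- = -A_+$ and so $\hat{P}_{P+tC} = A_+$. Both (2) and (3) therefore reduce to evaluating the same quantity $A_+$, with the sign difference absorbed into the parity of $t$.

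For $t \geq 1$ we have $\E[x_{C'}] = 0$, so writing $\sign(a + T + U) = 1 - 2\cdot \mathbf{1}[p + q \leq \tau]$ and expanding yields
\[
A_+ = -\frac{2}{2^{k-1}} \sum_{p + q \leq \tau} \binom{t}{p}\binom{k-t-1}{q}(-1)^{t-p} = \frac{(-1)^{t+1}}{2^{k-2}} \sum_{i = 0}^{\tau} \sum_{j = 0}^{\tau - i} (-1)^j \binom{k-t-1}{i}\binom{t}{j},
\]
where the second equality uses $(-1)^{t-p} = (-1)^t(-1)^p$ and relabels $j = p$, $i = q$. This matches (2) for odd $t$ and (3) for even $t$. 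Part (1) follows by taking $t = 0$ directly: then $\sign(a + U) - \sign(-a + U) = 2$ precisely when the number $m$ of $+1$'s in $\{x_2, \ldots, x_k\}$ lies in $\{\tau + 1, \ldots, \tau + a\}$. The reflection $m \mapsto k - 1 - m$, together with the parity condition (under which $k - a - 2 = 2\tau$), identifies $\sum_{m = \tau + a + 1}^{k-1} \binom{k-1}{m}$ with $\sum_{l = 0}^{\tau} \binom{k-1}{l}$; so $\sum_{m = \tau + 1}^{\tau + a} \binom{k-1}{m} = 2^{k-1} - 2\sum_{l=0}^{\tau} \binom{k-1}{l}$, and dividing by $2^{k-1}$ yields (1).

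The main obstacle is purely bookkeeping: confirming that the parity-dependent sign $(-1)^{t+1}$ lines up with the $\pm$ in (2) and (3), and that the $a$-dependent threshold $\tau + a + 1$ appearing in $A_-$ collapses, via the sign-flip symmetry, to the $a$-independent threshold $\tau + 1$. Once the symmetry step in the second paragraph is in place, the remainder is a routine expansion of the Fourier definition.
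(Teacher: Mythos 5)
Your proof is correct and takes essentially the same approach as the paper's: condition on the president's vote, count citizen configurations by the numbers of $+1$'s inside and outside the chosen subset, and use a sign-flip symmetry to relate the two conditionings. Your packaging — deriving $A_- = (-1)^{t+1}A_+$ from flipping only the citizens, so that parts (2) and (3) reduce to a single evaluation of $A_+$ — is a somewhat cleaner organization of the same underlying argument.
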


\subsection{The Standard SDP for CSPs}\label{standardSDPsubsection}
In this subsection we briefly describe Raghavendra's SDP~\cite{raghavendra_optimal_2008}, which together with an appropriate rounding scheme approximates a given CSP in polynomial time as long as the CSP is not unique games hard to approximate. Note that Raghavendra~\cite{raghavendra_optimal_2008} considered CSPs for general constraints but for our discussion here we will focus only on boolean predicates.

We first define the KTW polytope, which plays a crucial role in Khot, Tulsiani and Worah's~\cite{khot_characterization_2013} characterization of which predicates are weakly approximable.
\begin{definition}
Given $x \in \{-1, 1\}^k$, let $p(x) \in \{-1, 1\}^{k + \binom{k}{2}}$ be the vector obtained by concatenating $x$ and $(x_1x_2, x_1x_3, \ldots, x_{k-1}x_k)$. Define
\[
KTW_P = \CH(\{p(x) \mid x \in \{-1, 1\}^k, P(x) = 1\}),
\]
where $P: \{-1, 1\}^k \to \{-1, 1\}$ is a boolean predicate and $\CH(S)$ is the convex hull of $S$. 
\end{definition}

Given a CSP instance on $n$ variables $x_1, \ldots, x_n$ and $m$ constraints $C_1, \ldots, C_m$, the SDP searches for biases $\{b_i\}$ and pairwise biases $\{b_{ij}\}$ whose intended meanings are $b_i = \E[x_i], b_{ij} = \E[x_ix_j]$. Then, for each constraint $C_i$, it searches for a local distribution on the variables in $C_i$ which agrees with the global biases and pairwise biases and maximizes the probability that $C_i$ is satisfied. The goal of the SDP is to find global biases and pairwise biases such that the sum of these satisfying probabilities is maximized.

Let $ALL_k = \CH(\{p(x) \mid x \in \{-1, 1\}^k\})$. For a polytope $A$ and a real number $r$, define $rA = \{rx \mid x \in A\}$. We take the following formal definition of the SDP from~\cite{potechin_approximation_2018}.

\begin{definition}
Let $\Phi$ be a CSP instance on $n$ variables $x_1, \ldots, x_n$ and $m$ constraints $C_1, \ldots, C_m$. The standard SDP for $\Phi$ has the following variables.
\begin{itemize}
    \item $a_{C_i}, p_{C_i, 1}, p_{C_i, 2}$ for each constraint $C_i$.
    \item $b_i$ for each variable $x_i$ and $b_{ij}$ for each pair of variables $x_i, x_j$ where $i < j$.
\end{itemize}
Let $B$ be the square matrix indexed by $\{0, 1, \ldots, n\}$ such that
\begin{itemize}
    \item $B_{ii} = 1$ for every $i \in \{0, 1, \ldots, n\}$.
    \item $B_{0i} = B_{i0} = b_i$ for every $i \in \{1, 2, \ldots, n\}$.
    \item $B_{ij} = B_{ji} = b_{ij}$ for every $i,j \in \{1, 2, \ldots, n\}$ such that $i < j$.
\end{itemize}
The SDP maximizes $\sum_{i = 1}^ma_{C_i}$ subject to the following constraints:
\begin{enumerate}
    \item $B \succeq 0$.
    \item For every constraint $C_i$ on variables $x_{j_1}, \ldots, x_{j_{k_i}}$, where $k_i$ is the arity of $C_i$.
    \begin{enumerate}
        \item $a_{C_i} \in [0, 1]$.
        \item $p_{C_i, 1} \in a_{C_i}KTW_{C_i}$, $p_{C_i, 2} \in (1 - a_{C_i})ALL_{k_i}$.
        \item $p_{C_i, 1} + p_{C_i, 2} = (b_{j_1}, b_{j_2}, \ldots, b_{j_{k_i}}, b_{j_1}b_{j_2}, \ldots, b_{j_{k_i - 1}}b_{j_{k_i}})$.
        
    \end{enumerate}
\end{enumerate}
\end{definition}

Raghavendra showed in~\cite{raghavendra_optimal_2008} that if for all $\epsilon > 0$, this SDP fails to distinguish instances of $\CSP(P)$ where $(1-\epsilon)$ fraction of the constraints are satisfiable from instances of $\CSP(P)$ where at most $(r_P + \epsilon)$ fraction of the constraints are satisfiable, then $P$ is unique games hard to approximate. Conversely, if this SDP does distinguish between these two cases for some $\epsilon > 0$ then $P$ can be approximated by first running this SDP to obtain biases ${b_i: i \in [n]}$ and pairwise biases ${b_{ij}: i < j \in [n]}$ and then applying a rounding scheme of the following form.
\begin{enumerate}
    \item Find unit vectors $\{\vec{v}_i\} \cup {\vec{1}}$ such that for all $i \in [n]$, $\vec{v}_i \cdot \vec{1} = b_i$ and for all $i < j \in [n]$, $\vec{v}_i \cdot \vec{v}_j = b_{ij}$ 
    \item Choose $d$ global vectors $\vec{u}_1,\ldots,\vec{u}_d$ from the multivariate normal distribution.
    \item Set each $x_i = 1$ with probability $p(b_i,\vec{v}_i \cdot \vec{u}_1,\ldots,\vec{v}_i \cdot \vec{u}_d,\vec{1} \cdot \vec{u}_1,\ldots, \vec{1} \cdot \vec{u}_d)$ for some function $p:\mathbb{R}^{2d+1} \to [0,1]$ (which is the same for all $i$).
\end{enumerate}
\begin{example}
In hyperplane rounding, we choose a single global vector $\vec{u}$ and set $x_i = sign(\vec{v}_i \cdot \vec{u})$.
\end{example}

\subsection{Choosing Rounding Schemes}
While thinking about rounding schemes in terms of vectors is general (assuming the unique games conjecture or at least that unique games is hard), it is rather unwieldy. Instead, we think of rounding schemes in terms of the expected value $\E[x_I]$ of each monomial $x_I = \prod_{i \in I}{x_i}$. Thus, we choose our rounding schemes by choosing $\E[x_I]$ for each monomial $x_I$. However, we do not have complete freedom for these choices. Intuitively, $\E[x_I]$ should obey the following constraints:
\begin{enumerate}
    \item $\E[x_I]$ is a function of $\{b_i \mid i \in I\}$ and $\{b_{ij} \mid i, j \in I\}$.
    \item $\E[x_I]$ is invariant under permutation of indices in I.
    \item If we flip the sign of any variable $x_i$ where $i \in I$ (by flipping the signs of $b_i$ and $\{b_{ij}: j \in I, j \neq i\}$), then the sign of $\E[x_I]$ should be flipped as well.
\end{enumerate}

It turns out that for determining whether a predicate $P$ is \emph{weakly approximable} (which is the same as approximable for presidential type predicates), these are the only constraints on $\E[x_I]$. More precisely, we have the following theorem from~\cite{potechin_approximation_2018}, which is also implicit in \cite{khot_characterization_2013}:

\begin{theorem}[Theorem 5.1 in~\cite{potechin_approximation_2018}]\label{thm:rounding}
Let $\{b_i \mid i \in [k]\}$ and $\{b_{ij} \mid i, j \in [k], i < j\}$ be biases and pairwise biases produced by the standard SDP. For every $a \in [k]$, let $f_a : [-1, 1]^{a + \binom{a}{2}} \to [-1, 1]$ be a continuous function satisfying the following symmetric requirements.
\begin{enumerate}
    \item For all permutations $\sigma \in S_a$,
    \[
    f_a(b_{i_{\sigma(1)}}, \ldots, b_{i_{\sigma(a)}}, b_{i_{\sigma(1)}i_{\sigma(2)}}, \ldots, b_{i_{\sigma(a-1)}i_{\sigma(a)}}) = 
    f_a(b_{i_1}, \ldots, b_{i_a}, b_{i_1i_2}, \ldots, b_{i_{a-1}i_a}).
    \]
    \item For all signs $s_{i_1}, \ldots, s_{i_a} \in \{-1, 1\}^a$,
    \[
    f_a(s_{i_1}b_{i_1}, \ldots, s_{i_a}b_{i_a}, s_{i_1}s_{i_2}b_{i_1i_2}, \ldots, s_{i_{a-1}}s_{i_a}b_{i_{a-1}i_a}) = \left(\prod_{j = 1}^as_{i_j}\right)
    f_a(b_{i_1}, \ldots, b_{i_a}, b_{i_1i_2}, \ldots, b_{i_{a-1}i_a}).
    \]
\end{enumerate}
Then there exists a sequence of rounding schemes $\{R_q\}$ and coefficients $\{c_q\}$ such that for all subsets $I = \{i_1, \ldots, i_a\}$ of size at most $k$, 
\[
\sum_qc_q\E_{R_q}[x_I] = f_a(b_{i_1}, \ldots, b_{i_a}, b_{i_1i_2}, \ldots, b_{i_{a-1}i_a}),
\]
where $\E_{R_q}[x_I]$ is the expected value of $x_I$ given by rounding scheme $R_q$. Moreover, this sum can be taken to be globally convergent.
\end{theorem}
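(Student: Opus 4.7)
The plan is to identify the linear space $\mathcal{F}$ spanned by $\E_R[x_I]$ (viewed as a function of the $\{b_i\}_{i \in I}$ and $\{b_{ij}\}_{i,j \in I}$, as $R$ ranges over rounding schemes of the stated form) with the full space of continuous families $\{f_a\}$ satisfying the symmetry conditions (1) and (2). The easy direction $\mathcal{F} \subseteq \{f_a : (1),(2)\}$ is a direct check: $\E_R[x_I]$ depends on $(\vec v_{i_1},\ldots,\vec v_{i_a},\vec 1)$ only through its Gram matrix, which gives (1); and flipping $\vec v_i \mapsto -\vec v_i$ simultaneously negates $b_i$ and all $b_{ij}$ with $j \neq i$ while, by Gaussian symmetry and the form of $p$, negating the $i$-th factor of $\prod_{j \in I}(2p_j - 1)$ in expectation, which gives (2). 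The content of the theorem is the reverse inclusion.

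For this I would first reduce to polynomial targets. Since the domain $[-1,1]^{a+\binom{a}{2}}$ is compact, Stone--Weierstrass produces polynomial uniform approximations to each $f_a$, and averaging such approximations over the symmetric group $S_a$ and over the sign-flip group $\{\pm 1\}^a$ (with the correct alternating sign weights) preserves polynomiality while enforcing (1) and (2). A standard telescoping decomposition $f_a = \sum_m (p_m - p_{m-1})$ with rapidly decaying uniform errors lifts such a uniform approximation to a \emph{globally convergent} series of rounding-scheme expectations, so the problem reduces to showing that every symmetric, sign-equivariant polynomial in $\{b_i\}\cup\{b_{ij}\}$ lies in $\mathcal{F}$.

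For the main spanning step I would use rounding schemes with $d$ independent standard Gaussian global vectors $\vec u_1,\ldots,\vec u_d$ together with a probability function of the form $p = \tfrac12(1 + \varepsilon q)$, where $q$ is a polynomial in the arguments of $p$ and $\varepsilon$ is small enough to ensure $p \in [0,1]$. Then $\E_R[x_I] = \varepsilon^{|I|}\,\E_{\vec u}[\prod_{i \in I} q_i]$, which by Isserlis' formula (Wick's theorem) is a polynomial in $\{b_i, b_{ij}\}$ whose monomial content is dictated by the Gaussian pairings of the factors of $q$. Letting $q$ range over a Hermite basis in its Gaussian arguments and varying $d$ produces a family whose span contains all symmetric, sign-equivariant polynomials in the biases, and inverting the Hermite-to-monomial change of basis realizes any chosen target polynomial as a finite $\mathbb{R}$-combination of $\E_{R_q}[x_I]$'s. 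The factor $\varepsilon^{|I|}$ additionally scales differently with the arity $|I|$, which allows us to separate contributions at different arities and thus match the whole family $\{f_a\}_{a \in [k]}$ simultaneously.

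The main obstacle is precisely this final spanning claim: proving that varying $q$ over a Hermite basis and $d$ over the positive integers truly spans all symmetric, sign-equivariant polynomials in the biases, rather than a proper subspace. Sign-equivariance (2) is already matched by parity, since each $b_i$ or $b_{ij}$ arises from an odd-degree appearance of $\vec v_i$ in any Wick diagram, so the obstruction is one of linear independence, not of parity. Establishing it requires a careful combinatorial accounting of which symmetric bias-monomials arise from Gaussian pairings of Hermite polynomials in the $\vec v_i \cdot \vec u_j$, and is the technical heart of the argument in the cited paper~\cite{potechin_approximation_2018}.
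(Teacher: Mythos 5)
The paper does not prove this theorem at all: it is quoted verbatim as Theorem~5.1 of~\cite{potechin_approximation_2018}, with the note that it is also implicit in~\cite{khot_characterization_2013}. So there is no in-paper proof to compare against, and your proposal must be judged on its own. As a plan it identifies a sensible architecture (Stone--Weierstrass and group-averaging to reduce to symmetric, sign-equivariant polynomial targets; Gaussian--Isserlis expansions of polynomial rounding functions to generate candidate basis functions; telescoping of uniform approximations to get global convergence), but it stops precisely at the step that carries all the content: showing that the resulting family actually spans every symmetric sign-equivariant polynomial in the biases and pairwise biases, rather than a proper subspace. You flag this yourself as ``the technical heart'' and defer to the cited reference, so what you have written is an outline with an acknowledged hole, not a proof.

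Two cautions on the parts you do argue. First, the ``easy direction'' $\mathcal{F} \subseteq \{f_a : (1),(2)\}$ is not a free check for arbitrary $p$: flipping $\vec v_i \mapsto -\vec v_i$ sends $p(b_i, \vec v_i\cdot\vec u_1,\ldots)$ to $p(-b_i, -\vec v_i\cdot\vec u_1,\ldots)$, and this equals $1-p(b_i,\vec v_i\cdot\vec u_1,\ldots)$ only when $p$ satisfies a particular functional equation; Gaussian symmetry negates all the $\vec v\cdot\vec u$ products at once and does nothing for an individual index. This does not hurt the theorem, which needs only the reverse inclusion, but it should not be presented as automatic. Second, when you invoke Isserlis' formula, you need to keep track of the two distinct sources of $b_i$-dependence: $b_i$ appears explicitly as an argument of $p$ and implicitly as $\vec v_i\cdot\vec 1$ (where $\vec 1$ is a fixed, non-random vector that also pairs against the $\vec u_j$); separating these contributions in the Wick expansion is one of the places where the linear-independence bookkeeping you are deferring would have to be done carefully.
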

\begin{remark}\label{remark:rounding_issues}
This theorem gives us a linear combination of rounding schemes. The coefficients $c_q$ can be thought of as a probability distribution of rounding schemes, but there are two problems:
\begin{itemize}
    \item $\sum_{q}|c_q|$ may not be 1. One fix to this issue is to scale $f$ by an appropriate constant $\epsilon$.
    \item $c_q$ may be negative. In general, this can be a real issue but here the predicates we consider are odd, which means if $c_q$ is negative we can simply flip the rounding scheme $R_q$ and take it with probability $-c_q$.
\end{itemize}
\end{remark}
\begin{example}
This theorem says the following about $\E[x_i]$ and $\E[{x_i}{x_j}]$.
\begin{itemize}
    \item We can take $\E[x_i] \sim f_1(b_i)$ for any continuous function $f_1$ such that $f_1(b_i) = -f_1(-b_i)$ (i.e. $f_1$ is odd).
    \item We can take $\E[{x_i}{x_j}] \sim f_2(b_i,b_j,b_{ij})$ for any continuous function $f_2$ such that $f_2(b_i, b_j, b_{ij}) = f_2(b_j, b_i, b_{ij}) = -f_2(-b_i, b_j, -b_{ij})$. The first equality corresponds to exchanging $i$ and $j$ while the second equality corresponds to flipping $x_i$.
\end{itemize}
\end{example}
\begin{example}\label{rounding_example}
Some examples of possible functions $f_3$ are as follows: 
\begin{enumerate}
    \item We can take $\E[{x_i}{x_j}{x_k}] \sim {x_i}{x_j}{x_k}$
    \item As discussed in the following subsections, we will take $\E[{x_i}{x_j}{x_k}] \sim ({b_i}b_{jk} + {b_j}b_{ik} + b_{k}b_{ij})$
    \item Potechin~\cite{potechin_approximation_2018} found a simpler rounding scheme for the monarchy predicate where 
    $\E[{x_i}{x_j}{x_k}] \sim sign({x_i}{x_j}{x_k})\max\{|x_i|,|x_j|,|x_k|\}$
\end{enumerate}
\end{example}
In choosing the rounding scheme, our goal is as follows. For each constraint, the standard SDP could give us any point in the KTW polytope. We need to show that no matter which point in the KTW polytope we are given, the probability that the rounding scheme satisfies the constraint is better than a random guess. Equivalently, we need to show that for all points in the KTW polytope,
\[
\sum_{I \subseteq [k]:I \neq \emptyset}{\hat{P}_{I}\E[x_I]} > 0
\]
\begin{example}
Consider the majority predicate $P(x_1,\ldots,x_k) = \sign(x_1 + \ldots + x_k)$. If we take $E[x_i] = f_1(b_i) = {\epsilon}b_i$ and take $f_a = 0$ whenever $a > 1$ then 
\[
\sum_{I \subseteq [k]:I \neq \emptyset}{\hat{P}_{I}\E[x_I]} = \epsilon{\hat{P}_{\{1\}}}\sum_{i=1}^{k}{b_i}
\]
Since $\sum_{i=1}^{k}{x_i} \geq 1$ for every satisfying assignment, for any point in the KTW polytope, $\sum_{i=1}^{k}{b_i} \geq 1$ and thus $\sum_{I \subseteq [k]:I \neq \emptyset}{\hat{P}_{I}\E[x_I]} \geq {\epsilon}\hat{P}_{\{1\}} > 0$.
\end{example}
\section{Techniques for Approximating Presidential Type Predicates}
In this section, we describe our techniques for approximating presidential type predicates. These techniques are a generalization of the techniques used in \cite{potechin_approximation_2018} to show that the almost monarchy predicate is approximable for sufficiently large $k$.
\subsection{High Level Overview}\label{overviewsubsection}
To approximate the presidential type predicate $P(x) = \sign\left({\delta}k{x_1} + \sum_{i=2}^{k}{x_i}\right)$, we use the following type of rounding scheme.
\begin{enumerate}
    \item $f_1(b_i) = {c_1}b_i$.
    \item $f_{2l+1}(b_{i_1},\ldots,b_{i_{2l+1}},b_{{i_1}{i_2}},\ldots,
    b_{{i_{2l}}{i_{2l+1}}}) = 
    c_{2l + 1}\left(b_{i_1}b_{i_2i_3}\cdots b_{i_{2l}i_{2l+1}} + \text{symmetric terms}\right)$ 
\end{enumerate}
where we need to carefully choose the coefficients $c_1,c_3,\ldots$ so that for all points in the KTW polytope,
\[
\sum_{I \subseteq [k]:I \neq \emptyset}{\hat{P}_{I}\E[x_I]} > 0
\]
Because of the symmetry of presidential type predicates $P$, we can analyze $\sum_{I \subseteq [k]:I \neq \emptyset}{\hat{P}_{I}\E[x_I]}$ in terms of a few key functions of the biases and pairwise biases.
\begin{definition}\label{def:keyquantities}
Given biases $\{b_i: i \in [k]\}$ and pairwise biases $\{b_{ij}: i < j \in [k]\}$, we make the following definitions:
\begin{enumerate}
    \item We define $\alpha = b_1$
    \item We define $\beta = \sum_{i=2}^{k}{b_i}$
    \item We define $S_{\{\{i_1,i_2\}\}} = \sum_{i < j \in [k]}{b_{ij}}$. We then write $S_{\{\{i_1,i_2\}\}} = E(1 + \Delta)$ where $E = \frac{\delta^2k^2}{2} - \frac{k}{2} + 1$ is the value we expect for $S_{\{\{i_1,i_2\}\}}$ and $\Delta$ measures how far $S_{\{\{i_1,i_2\}\}}$ is from this expected value.
\end{enumerate}
\end{definition}
With these definitions, we can approximate $\sum_{I \subseteq [k]:I \neq \emptyset}{\hat{P}_{I}\E[x_I]}$ in terms of $\alpha$, $\beta$, and $\Delta$. Our strategy is now as follows:
\begin{enumerate}
    \item We choose a polynomial $h(x) = \sum_{l = 1}^{m}{{a_l}x^l}$ so that $h(1 + \Delta) \approx 1$ except near $\Delta = -1$ as we must have that $h(0) = 0$. More precisely, we choose $h$ to satisfy certain properties (see Lemma~\ref{lem:hconditions}).
    
    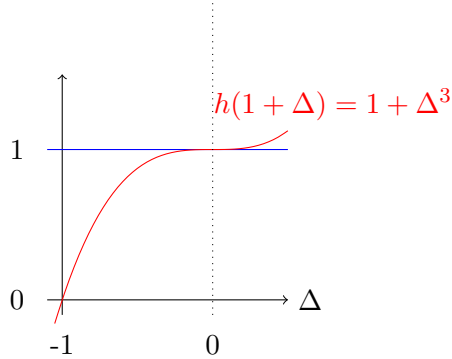
\begin{figure}[H]
\centering
\begin{tikzpicture}[scale = 2]
      \draw[->] (-0.1,0) -- (1.5,0) node[right] {$\Delta$};
      \draw[->] (0,-0.1) -- (0,1.5); 
      \draw[domain=-0.1:1.5,smooth,variable=\x,blue] plot ({\x},{1});
      \draw[domain=-0.05:1.5,smooth,variable=\x,red] plot ({\x},{3*\x-3*\x*\x+\x*\x*\x});
      \draw[dotted] (1, -0.1) -- (1, 2);
      \node[red] at (1.8, 1.3) {$h(1 + \Delta) = 1 + \Delta^3$};
      \node at (-0.3, 1) {$1$};
      \node at (-0.3, 0) {$0$};
      \node at (1, -0.3) {$0$};
      \node at (0, -0.3) {-$1$};
\end{tikzpicture}
\caption{Plot of $h(1 + \Delta) = 1 + \Delta^3$.}
\end{figure}
\begin{remark}
A reasonably good choice for $h$ is $h(1 + \Delta) = 1 + {\Delta}^3$, which was used to give an approximation algorithm for the almost monarchy predicate for sufficiently large $k$ \cite{potechin_approximation_2018}. In fact, while we don't prove it here, for quasi-monarchy predicates of the form $P(x_1,\ldots,x_k) = sign\left((k-2c)x_1 + \sum_{i=2}^{k}{x_i}\right)$ for a fixed constant $c$, $h(1 + \Delta) = 1 + {\Delta}^3$ is sufficient to give an approximation algorithm for sufficiently large $k$. However, this $h$ is not sufficient to give an approximation algorithm for more general presidential type predicates because $h(1+ \Delta) = 1 + {\Delta}^3$ is far from $1$ if $\Delta$ is much larger than $0$.
\end{remark}
    \item We choose the coefficients $\{c_1\} \cup \{c_{2l+1}: l \in [m]\}$ so that 
    \begin{align*}
    \sum_{I \subseteq [k]:I \neq \emptyset}{\hat{P}_{I}\E[x_I]} &= \left({\delta}k^2 + \frac{k}{\delta}\right)\alpha + k\left(\beta - \frac{\alpha}{\delta}\right)h(1 + \Delta) + O(k)\cdot \Delta + O(1) \\
    &= k({\delta}k\alpha + \beta) + k\left(\beta - \frac{\alpha}{\delta}\right)(h(1 + \Delta) - 1) + O(k)\cdot \Delta + O(1)
    \end{align*}
    \item Since for every satisfying assignment, ${\delta}k{x_1} + \sum_{i=2}^{k}{x_i} \geq 1$, for every point in the KTW polytope, 
    \[
    {\delta}k{b_1} + \sum_{i=2}^{k}{b_i} = {\delta}k{\alpha} + \beta \geq 1
    \]
    and thus $k({\delta}k\alpha + \beta) \geq k$. If we could show that the remaining terms $k\left(\beta - \frac{\alpha}{\delta}\right)(h(1 + \Delta) - 1) + O(k)\cdot \Delta + O(1)$ are $o(k)$, then we would be done. Unfortunately, this may not be true when $|\Delta|$ is large.
    \item To handle this, we show that if $|\Delta|$ is large then we can obtain a considerably better bound on ${\delta}k\alpha + \beta$ More precisely, we proceed as follows:
    \begin{enumerate}
        \item When $\Delta \geq -0.55$, we show that 
        $
        \delta k\alpha + \beta \geq \frac{(\delta^2 k - 1)|\Delta|}{4} + \frac{1}{2}
        $
        (see Lemma \ref{lem:linear2}). As long as $h(1+\Delta)$ is sufficiently close to $1$, this allows us to show that $k({\delta}k\alpha + \beta) + k\left(\beta - \frac{\alpha}{\delta}\right)(h(1 + \Delta) - 1) + O(k)\cdot \Delta + O(1)$ is positive.
        \item When $\Delta < -0.55$, we show that we must have $\alpha > 0$. In this case, we rewrite $\sum_{I \subseteq [k]:I \neq \emptyset}{\hat{P}_{I}\E[x_I]} = \left({\delta}k^2 + \frac{k}{\delta}\right)\alpha + k\left(\beta - \frac{\alpha}{\delta}\right)h(1 + \Delta) + O(k)\cdot \Delta + O(1)$ as
        \[
        k(\delta k\alpha + \beta)h(1 + \Delta) + \left(\delta k^2 + \frac{k}{\delta}\right)\alpha(1 - h(1 + \Delta)) + O(k)\cdot\Delta + O(1).
        \]
        and show that the sum of the first two terms is positive and $\Omega(k^2)$.
    \end{enumerate}
\end{enumerate}
\subsection{Sums of Products of Biases and Pairwise Biases}
In order to implement this strategy, we need some notations related to biases and pairwise biases. Note that similar definitions were also used in~\cite{potechin_approximation_2018}.

\begin{definition}
For $E_1 \subseteq [k]$ and $E_2 \subseteq \binom{[k]}{2}$, define
\[
B_{E_1, E_2} = \prod_{i \in E_1} b_i \prod_{\substack{\{i, j\} \in E_2 \\ i < j}}b_{ij}.
\]
\end{definition}

\begin{definition}
Let $V = \{\alpha, i_1, i_2, \ldots, i_{k-1}\}$. Let $H = H_1 \cup H_2$ where $H_1 \subseteq V$ and $H_2 \subseteq \binom{V}{2}$. Define
\[
S_H = \sum_{\substack{E_1, E_2: \exists \sigma: V \to [k] \text{ bijective}\\ \sigma(\alpha) = 1, \sigma(H_1) = E_1, \sigma(H_2) = E_2}}B_{E_1, E_2},
\]
where $\sigma(H_1) = \{\sigma(i) \mid i \in H_1\}$, $\sigma(H_2) = \{\{\sigma(i), \sigma(j)\} \mid \{i, j\} \in H_2\}$.
\end{definition}

Intuitively, $S_H$ is the sum of products $B_{E_1, E_2}$ where $E_1\cup E_2$ has the form $H$. One particularly important such sum in our algorithms is $S_{\{\{i_1, i_2\}\}}$, which is the sum of pairwise biases with indices in $[2, k]$. 

\begin{definition}\label{def:shorthand_sums}
We define the following shorthand notations for some important sums.
\begin{align*}
S_{1, l} &= S_{\{i_1, \{i_2, i_3\}, \{i_4, i_5\}, \ldots, \{i_{2l}, i_{2l+1}\}\}}, \\
S_{2, l} &= S_{\{\alpha, \{i_1, i_2\}, \{i_3, i_4\}, \ldots, \{i_{2l-1}, i_{2l}\}\}}, \\
S_{3, l} &= S_{\{i_1, \{\alpha, i_2\}, \{i_3, i_4\}, \ldots, \{i_{2l-1}, i_{2l}\}\}}.
\end{align*}
\end{definition}

\begin{example}
In the case where $k = 4, l = 1$, we have
\begin{align*}
S_{1, l} &= b_2b_{34} + b_3b_{24} + b_4b_{23}, \\
S_{2, l} &= b_1b_{23} + b_1b_{24} + b_1b_{34}, \\
S_{3, l} &= b_2b_{13} + b_2b_{14} + b_3b_{12} + b_3b_{14} + b_4b_{12} + b_4b_{13}. \\
\end{align*}
\end{example}
The reason that these sums are important is because they are the main terms which appear when we evaluate $\sum_{I \subseteq [k]:I \neq \emptyset}{\hat{P}_{I}\E[x_I]}$. 
\begin{prop}
If we take 
\[
f_{2l+1}(b_{i_1},\ldots,b_{i_{2l+1}},b_{{i_1}{i_2}},\ldots,
    b_{{i_{2l}}{i_{2l+1}}}) = 
    c_{2l + 1}\left(b_{i_1}b_{i_2i_3}\cdots b_{i_{2l}i_{2l+1}} + \text{symmetric terms}\right)
\]
then
\[
\sum_{|I| = 2l+1}{\hat{P}_I \E[X_I]} \\
 = c_{2l+1}\left(\hat{P}_{(2l+1)C}S_{1, l} + \hat{P}_{P+(2l)C}(S_{2, l} + S_{3, l})\right)
 \]
\end{prop}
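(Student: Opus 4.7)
The plan is to expand the left-hand side directly, exploiting the symmetry of presidential type predicates together with the combinatorial structure underlying the definitions of $S_{1,l}$, $S_{2,l}$, and $S_{3,l}$.

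First, I would split the outer sum based on whether $1 \in I$. Because $P$ is symmetric under permutations of the citizens $\{2,\ldots,k\}$, the Fourier coefficient $\hat{P}_I$ depends only on $|I|$ and on whether the president index $1$ lies in $I$. Hence
\[
\sum_{|I| = 2l+1} \hat{P}_I \E[x_I] = \hat{P}_{(2l+1)C}\sum_{\substack{|I| = 2l+1 \\ 1 \notin I}}\E[x_I] \;+\; \hat{P}_{P+(2l)C}\sum_{\substack{|I| = 2l+1 \\ 1 \in I}}\E[x_I].
\]

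Next, I would rewrite each $\E[x_I]$ using the stated form of $f_{2l+1}$. For a fixed $I$ of size $2l+1$, $\E[x_I]$ equals $c_{2l+1}$ times the sum, over all ways of partitioning $I$ into a singleton together with $l$ unordered pairs, of the corresponding product $b_j\, b_{p_1 q_1}\cdots b_{p_l q_l}$. I would then swap the order of summation: ``sum over $I$, then sum over shapes of $I$'' is equivalent to ``sum directly over tuples $(\{j\},\{p_1,q_1\},\ldots,\{p_l,q_l\})$ of pairwise disjoint subsets of the appropriate index range,'' since the shape uniquely recovers $I$. For the $1\notin I$ piece, the tuples range inside $[2,k]$, which is exactly the index pattern defining $S_{1,l}$ under the bijection $\sigma$ with $\sigma(\alpha)=1$; so this piece equals $c_{2l+1}\hat{P}_{(2l+1)C}\,S_{1,l}$. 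For the $1\in I$ piece, I would split further according to whether $1$ appears as the singleton of the shape or inside one of the pairs. The former gives configurations (president singleton, $l$ citizen pairs) matching $S_{2,l}$, and the latter gives configurations (citizen singleton, one president--citizen pair, $l-1$ citizen pairs) matching $S_{3,l}$. Summing yields $c_{2l+1}\hat{P}_{P+(2l)C}(S_{2,l}+S_{3,l})$, and combining the two pieces gives the proposition.

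The only real issue is combinatorial bookkeeping: I must check that translating between ``choose $I$, then choose a shape on $I$'' and ``sum over disjoint index tuples directly'' matches the definition of $S_H$ with no over- or undercounting. This is routine once one observes that a shape (unordered singleton together with an unordered collection of unordered pairs) corresponds to exactly one pair $(E_1,E_2)$ of the structural type required in the definition of $S_H$, so the correspondence is a bijection and the multiplicities agree.
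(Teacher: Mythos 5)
Your proof is correct. The paper does not actually prove this proposition --- it is stated without proof immediately after the definitions of $S_{1,l}$, $S_{2,l}$, $S_{3,l}$, being treated as a routine unpacking of those definitions --- and your argument (split on whether $1 \in I$ using the citizen-symmetry of $\hat{P}_I$, rewrite each $\E[x_I]$ as a sum over shapes, interchange to sum over disjoint index tuples directly, and match each of the three resulting families of tuples to $S_{1,l}$, $S_{2,l}$, $S_{3,l}$ respectively) is exactly the verification the paper leaves implicit.
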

To approximate these sums, we use the following proposition (recall that we set $S_{\{\{i_1,i_2\}\}} = E(1 + \Delta)$). The proof of this proposition can be found in the appendix.
\begin{prop}\label{prop:bias_sum} 
For every $l \geq 1$,
\begin{align*}
    \frac{l!}{E^l}S_{1, l} &= \beta(1 + \Delta)^l - \frac{S_{\{i_1, \{i_1, i_2\}\}}}{E}l(1 + \Delta)^{l - 1} - \frac{\beta S_{\{\{i_1, i_2\}, \{i_1, i_3\}\}}}{E^2}l(l-1)(1 + \Delta)^{l - 2} + O\left(\frac{1}{k}\right),\\
    \frac{l!}{E^l}S_{2, l} &= \alpha(1 + \Delta)^l + O\left(\frac{1}{k}\right),\\
    \frac{l!}{E^l}S_{3, l} &= \frac{\beta S_{\{\{\alpha, i_1\}\}}}{E}l(1 + \Delta)^{l-1} + O\left(\frac{1}{k}\right),\\
\end{align*}
where the hidden constants in big-O may depend on $l$.
\end{prop}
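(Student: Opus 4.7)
The proof of Proposition~\ref{prop:bias_sum} will be a combinatorial expansion that tracks how products of biases and pairwise biases overcount their corresponding $S_H$-style sums due to index collisions. The plan is to expand each of $S_{1,l}$, $S_{2,l}$, $S_{3,l}$ by comparing it to a natural ``product form'' (for example, $\beta \cdot B^l$ for $S_{1,l}$, where I abbreviate $B := S_{\{\{i_1,i_2\}\}} = E(1+\Delta)$), isolate the dominant overlap corrections explicitly, and absorb all higher-order overlap patterns into the $O(1/k)$ error. Throughout, the argument will rely on the crude size bounds $\alpha, b_i, b_{ij} = O(1)$, $\beta = O(k)$, $E = \Theta(k^2)$, $S_{\{i_1,\{i_1,i_2\}\}} = O(k^2)$, $S_{\{\{i_1,i_2\},\{i_1,i_3\}\}} = O(k^3)$, $S_{\{\{\alpha,i_1\}\}} = O(k)$, and $W_r := S_{\{\{i_1,i_2\},\{i_3,i_4\},\ldots,\{i_{2r-1},i_{2r}\}\}} = O(k^{2r})$, with hidden constants depending on the fixed integer $l$.

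The cornerstone will be an expansion of $B^l = \sum_{(p_1,\ldots,p_l)} b_{p_1}\cdots b_{p_l}$ over ordered $l$-tuples of citizen pairwise biases, classified by the multigraph formed by those pairs. First, the ``matching'' tuples (all pairs distinct and index-disjoint) contribute exactly $l!\,W_l$, since each unordered $l$-matching admits $l!$ orderings. Second, the ``single-path'' tuples (two pairs sharing one vertex, with the remaining $l-2$ pairs an index-disjoint matching avoiding the path) will contribute $l!\,S_{\{\{i_1,i_2\},\{i_1,i_3\}\}}\,W_{l-2} + O(k^{2l-2})$, with $l! = l(l-1)(l-2)!$ arising from the positional assignments and the error absorbing the restriction of $W_{l-2}$ to the $k-3$ remaining citizens. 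Third, all other overlap patterns (double paths, three edges through a common vertex, coincident pairs, and so on) use at most $2l-2$ distinct indices and thus sum to $O(k^{2l-2})$. Dividing by $E^l$ will yield, after an inductive application of the identity to $W_{l-2}$, the estimate $\frac{l!\,W_l}{E^l} = (1+\Delta)^l - \frac{l(l-1)\,S_{\{\{i_1,i_2\},\{i_1,i_3\}\}}}{E^2}(1+\Delta)^{l-2} + O(1/k^2)$, where the $l(l-1)$ comes from $\frac{l!\,W_{l-2}}{E^l} = \frac{l(l-1)}{E^2}\cdot\frac{(l-2)!\,W_{l-2}}{E^{l-2}}$.

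From this expansion the three claims will follow. For $S_{2,l} = \alpha W_l$, since $\alpha = O(1)$ and the path correction is already $O(1/k)$, multiplying through by $\alpha$ directly yields $\alpha(1+\Delta)^l + O(1/k)$. For $S_{1,l}$, I will next expand $\beta W_l = \sum_i b_i W_l$, splitting into the case where $i$ is disjoint from the matching (contributing $S_{1,l}$) versus $i$ being an endpoint of exactly one matching edge (contributing $S_{\{i_1,\{i_1,i_2\}\}}\,W_{l-1} + O(k^{2l-1})$ after replacing the avoidance-restricted matching sum by $W_{l-1}$). Solving for $S_{1,l}$, substituting the expansion of $\frac{l!W_l}{E^l}$, and using $\beta \cdot O(1/k^2) = O(1/k)$ will recover the first claim. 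For $S_{3,l}$, the analogous expansion of $\beta \cdot S_{\{\{\alpha,i_1\}\}} \cdot W_{l-1}$ will be analyzed; each overlap type (the singleton meeting the president-edge, the singleton meeting the matching, or the president-edge meeting the matching) individually contributes $O(k^{2l-1})$, so $S_{3,l} = \beta\, S_{\{\{\alpha,i_1\}\}}\, W_{l-1} + O(k^{2l-1})$, and dividing by $E^l/l!$ together with the identity for $W_{l-1}$ delivers the third claim.

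The main obstacle will be the combinatorial bookkeeping in the expansion of $B^l$: pinning down that the single-path correction has coefficient $l(l-1)$ (rather than $2l(l-1)$ or $\binom{l}{2}$) via a careful enumeration of how ordered position assignments map to unordered overlap patterns, and verifying that every ``avoidance correction'' (replacing sums over matchings on a restricted vertex set by unrestricted matching sums) is indeed of lower order. Once this counting is correctly carried out, the remaining estimates reduce to order-of-magnitude calculations with the size bounds listed above.
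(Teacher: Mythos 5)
Your proposal is correct and rests on the same combinatorial engine the paper uses: expand a power of $S_{\{\{i_1,i_2\}\}}$ as a sum over tuples of indices, read off the collision-free subsum as a multiple of the target $S$-sum, identify the dominant single-collision corrections, and absorb every pattern with at most $2l-2$ distinct indices into the error term. The organization differs. The paper expands $\beta\,(S_{\{\{i_1,i_2\}\}})^l$ in one pass, recognizes $l!\,S_{1,l}$ directly as the collision-free subsum, and leaves the corrections as $l\,S_{\{i_1,\{i_1,i_2\}\}}\,(S_{\{\{i_1,i_2\}\}})^{l-1}$ and $l(l-1)\,\beta\,S_{\{\{i_1,i_2\},\{i_1,i_3\}\}}\,(S_{\{\{i_1,i_2\}\}})^{l-2}$; it only writes out the first equality and asserts the other two are similar. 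You instead factor through the matching sum $W_r := S_{\{\{i_1,i_2\},\ldots,\{i_{2r-1},i_{2r}\}\}}$, first proving $\frac{l!W_l}{E^l} = (1+\Delta)^l - \frac{l(l-1)S_{\{\{i_1,i_2\},\{i_1,i_3\}\}}}{E^2}(1+\Delta)^{l-2} + O(1/k^2)$ and then recovering $S_{1,l}$, $S_{2,l}$, $S_{3,l}$ from $\beta W_l$, $\alpha W_l$, and $\beta\,S_{\{\{\alpha,i_1\}\}}\,W_{l-1}$ with one further collision correction each. Your route is more modular and dispatches all three equalities uniformly (e.g.\ $S_{2,l} = \alpha W_l$ becomes trivial), at the small cost of applying the $W$-estimate twice; the paper's is marginally shorter per equality but has to be redone separately for $S_{2,l}$ and $S_{3,l}$. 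Your counting ($l! = l(l-1)(l-2)!$ orderings for the single-path pattern, $O(k^{2l-2})$ for the remaining patterns) and the order-of-magnitude bounds all check out.
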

\section{Proof of Theorem~\ref{thm:2}}
In this section, we prove Theorem~\ref{thm:2}.  

{
\renewcommand{\thetheorem}{\ref{thm:2}}

\begin{theorem}[Restated]
For any $\delta_0 > 0$, there exists a $k_0 \in \mathbb{N}$ such that if $k \geq k_0$, $\delta \in (\delta_0,1 - 2/k]$, and ${\delta}k + k - 1$ is an odd integer then the presidential-type predicate 
\[
P(x) = \sign\left({\delta}k{x_1} + \sum_{i=2}^{k}{x_i}\right)
\]
is approximable.
\end{theorem}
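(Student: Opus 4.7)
The plan is to implement the high-level strategy of Section~\ref{overviewsubsection}: design a rounding scheme whose expected monomial values are symmetric polynomials in the biases and pairwise biases, then choose coefficients so that the contribution from terms of degree $2l+1$ combines into an expression of the form $\beta h(1+\Delta) - \alpha h(1+\Delta)/\delta$ plus controllable error, where $h$ is a carefully chosen ``rounding polynomial.'' Throughout, I will use the Fourier coefficient estimates from Lemma~\ref{lem:fourier_coeff}, together with Proposition~\ref{prop:bias_sum} to convert sums like $S_{1,l}, S_{2,l}, S_{3,l}$ into expressions in $\alpha, \beta, \Delta$.

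First I would specify the rounding polynomial. The choice $h(1+\Delta) = 1 + \Delta^3$ (used in \cite{potechin_approximation_2018}) is insufficient because, for general $\delta$, $\Delta$ can be positive and as large as a constant, so we need $h(1+\Delta)$ to remain close to $1$ on a longer range. I will take $h$ to be a truncated Taylor expansion of something like $1 + \Delta^3 e^{-B(1+\Delta)}$ for an appropriate constant $B = B(\delta_0)$, chosen so that $h(0) = 0$, $h(1) = 1$, $h'(1) = h''(1) = 0$, and $|h(1+\Delta) - 1|$ is bounded by a small constant on the range of $\Delta$ attainable in the KTW polytope (i.e., $\Delta \in [-1, \Delta_{\max}]$ for some $\Delta_{\max} = O(1/\delta^2)$). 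The degree $m$ of $h$ will depend only on $\delta_0$. Then I define the rounding scheme by taking $f_{2l+1}$ as in Section~\ref{overviewsubsection} with coefficients $c_{2l+1}$ chosen so that the coefficient of $(1+\Delta)^l$ in the degree-$(2l+1)$ contribution matches the coefficient of $x^l$ in $h(x)$, after dividing by the Fourier coefficients given by Lemma~\ref{lem:fourier_coeff}. The degree-$1$ coefficient $c_1$ will be chosen to cancel the exponentially small $\hat{P}_{tC}$ contributions and give a multiple of $\alpha$ of size $\Theta(\delta k^2 + k/\delta)$.

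Next, I would substitute Proposition~\ref{prop:bias_sum} into $\sum_{l\ge 0}\sum_{|I|=2l+1}\hat{P}_I\E[x_I]$ and check that the telescoping in $l$ produces the expression
\[
\left(\delta k^2 + \tfrac{k}{\delta}\right)\alpha + k\left(\beta - \tfrac{\alpha}{\delta}\right)h(1+\Delta) + O(k)\,\Delta + O(1),
\]
as claimed in step~2 of Section~\ref{overviewsubsection}. The error terms come from (i) the $O(1/k)$ terms in Proposition~\ref{prop:bias_sum}, (ii) the subleading terms in the Fourier coefficients, and (iii) the $S_{\{i_1,\{i_1,i_2\}\}}$ and $S_{\{\{i_1,i_2\},\{i_1,i_3\}\}}$ terms which are $O(1)$ per unit of $\beta$; since $h'(1)=h''(1)=0$, these contribute only $O(k)\,\Delta + O(1)$ after summing. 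Using $\delta k\alpha + \beta \geq 1$ (from the KTW polytope) gives the baseline $k(\delta k\alpha+\beta) \geq k$.

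Finally, I would do the case analysis on $\Delta$. For $\Delta \geq -0.55$, I invoke Lemma~\ref{lem:linear2} to replace the lower bound $1$ on $\delta k\alpha+\beta$ by $(\delta^2 k - 1)|\Delta|/4 + 1/2$; the resulting term $\Omega(\delta^2 k^2)|\Delta|$ dominates both the $O(k)|\Delta|$ error and the $k|\beta - \alpha/\delta|\cdot|h(1+\Delta)-1|$ correction provided $|h(1+\Delta)-1|$ is a sufficiently small constant on this range, which is guaranteed by the exponential-damping choice of $h$. For $\Delta < -0.55$, I argue (as in step~4(b)) that $\alpha > 0$ must hold, then rewrite the main expression as
\[
k(\delta k\alpha + \beta)h(1+\Delta) + \left(\delta k^2 + \tfrac{k}{\delta}\right)\alpha\bigl(1 - h(1+\Delta)\bigr) + O(k)\,\Delta + O(1).
\]
The first term is nonnegative since $h(1+\Delta) \geq 0$ for $\Delta \geq -1$ (ensured by the choice of $h$ and the fact that $\Delta \geq -1 - O(1/k)$ on KTW), and the second term is $\Omega(\delta k^2)\alpha$, which dominates the remaining error once $k$ is large enough. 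The main obstacle is constructing $h$: we need a polynomial that is flat near $1$ (so the $h(1+\Delta)-1$ correction is small and the $\Delta = 0$ regime is handled) \emph{and} bounded away from blowing up as $\Delta$ grows, while still having $h(0)=0$. The exponential-damping trick plus careful truncation of the Taylor series is the key innovation needed here, and verifying the precise bounds on $h$ (e.g., $|h(1+\Delta)-1| < \varepsilon(\delta_0)$ for $\Delta \in [-0.55, \Delta_{\max}]$) is where most of the technical work lies.
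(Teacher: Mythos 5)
Your proposal follows essentially the same route as the paper: choose a rounding scheme whose degree-$(2l+1)$ coefficients $c_{2l+1}$ are tuned (via the Fourier coefficient asymptotics of Lemma~\ref{lem:fourier_coeff_2} and the sum expansions of Proposition~\ref{prop:bias_sum}) so that the degree-$\geq 3$ contribution collapses to $k(\beta - \alpha/\delta)h(1+\Delta) + O(k)\Delta + O(1)$; take $h$ to be a truncated Taylor polynomial of $1 + \Delta^3 e^{-B(1+\Delta)}$ with $B = B(\delta_0)$; split on $\Delta \gtrless -0.55$; invoke Lemma~\ref{lem:linear2} in the first case; and use $\alpha > 0$ together with the rewriting $k(\delta k\alpha+\beta)h(1+\Delta) + (\delta k^2+k/\delta)\alpha(1-h(1+\Delta)) + \cdots$ in the second. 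This matches the paper's Lemmas~\ref{lem:hconditions}--\ref{lem:after_truncate} and their proofs. Two small points to tighten: the condition you informally state as ``$|h(1+\Delta)-1|$ is a sufficiently small constant'' should really be a bound of the form $|h(1+\Delta)-1| \le \tfrac{\delta_0^2|\Delta|}{5}$ on $\Delta \in [-0.55, 1/\delta_0^2]$ (as in the paper), since you are balancing $\delta^2 k^2|\Delta|/4$ against $k|\beta|\cdot|h(1+\Delta)-1| = O(k^2)|h(1+\Delta)-1|$ — though this linear-in-$|\Delta|$ bound does follow near $\Delta=0$ from $h'(1)=h''(1)=0$; and for the $\Delta < -0.55$ branch you need $h(1+\Delta)\le 1$ on $[-1,-0.55]$ as well as $h(1+\Delta)\ge 0$, to guarantee the second term $(\delta k^2+k/\delta)\alpha(1-h(1+\Delta))$ is positive.
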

\addtocounter{theorem}{-1}
}
In particular, we prove that for sufficiently large $k$ and a carefully chosen polynomial $h$, the following rounding scheme approximates the presidential type predicate $P(x) = \sign\left({\delta}k{x_1} + \sum_{i=2}^{k}{x_i}\right)$.
\begin{definition}
Given a polynomial $h(x) = \sum_{l = 1}^{m}{{a_l}x^l}$, we define $R_{k,\delta,h}$ to be the rounding scheme such that setting $u = \frac{1+\delta}{2}k$, $v = \frac{1-\delta}{2}k$, and $E = \frac{\delta^2k^2}{2} - \frac{k}{2} + 1$,
\begin{enumerate}
    \item $f_1(b_i) = \left({\delta}k^2 + \frac{k}{\delta}\right)b_i$
    \item For all $l \in [m]$, 
    \[
    f_{2l+1}(b_{i_1},\ldots,b_{i_{2l+1}},b_{{i_1}{i_2}},\ldots,
    b_{{i_{2l}}{i_{2l+1}}}) = 
    c_{2l + 1}\left(b_{i_1}b_{i_2i_3}\cdots b_{i_{2l}i_{2l+1}} + \text{symmetric terms}\right)
    \]
    where $c_{2l+1} = a_l \cdot \frac{2^{k-2}(u-1)!(v-1)!}{(k - 2l - 2)!\delta^{2l}k^{2l-1}E^l}$, .
\end{enumerate}
\end{definition}
\begin{theorem}\label{thm:3}
For all $\delta_0 > 0$, if $h = \sum_{l = 1}^{m}{{a_l}x^l}$ is a polynomial such that 
\begin{enumerate}
    \item $h'(1) = h''(1) = 0$,
    \item For all $\Delta \in [-0.55,\frac{1}{\delta_0^2}]$, $|h(1+\Delta)-1| \leq \frac{{\delta^2_0}|\Delta|}{5}$,
    \item For all $\Delta \in [-1,-0.55]$, $0 \leq h(1+\Delta) \leq 1$,
\end{enumerate}
then there exists a $k_0 \in \mathbb{N}$ such that for all $\delta \geq \delta_0$ and $k \geq k_0$ where ${\delta}k + k - 1$ is an odd integer, $R_{k,\delta,h}$ approximates the presidential type predicate P(x) = $\sign\left({\delta}k{x_1} + \sum_{i=2}^{k}{x_i}\right)$.
\end{theorem}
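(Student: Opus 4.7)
The plan is to realize the strategy outlined in Section~\ref{overviewsubsection}: verify that for every point of the KTW polytope, the quantity $\sum_{I \subseteq [k],\,I \neq \emptyset}\hat{P}_I\,\E[x_I]$ computed under $R_{k,\delta,h}$ is strictly positive, which certifies approximability.

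First I would evaluate this sum degree-by-degree. The degree-$1$ contribution is $(\delta k^2 + k/\delta)(\hat{P}_P\alpha + \hat{P}_C\beta)$; Lemma~\ref{lem:fourier_coeff} together with a Chernoff-type bound on $\sum_{l \leq \tau}\binom{k-1}{l}$ (valid because $\tau/(k-1)\approx(1-\delta)/2$ is bounded away from $1/2$ when $\delta \geq \delta_0$) gives $\hat{P}_P = 1 - e^{-\Omega(k)}$ and $\hat{P}_C = e^{-\Omega(k)}$, so this piece simplifies to $(\delta k^2 + k/\delta)\alpha + O(1)$. The degree-$(2l+1)$ contribution equals $c_{2l+1}\bigl(\hat{P}_{(2l+1)C}S_{1,l} + \hat{P}_{P+(2l)C}(S_{2,l}+S_{3,l})\bigr)$ by the proposition stated just before Proposition~\ref{prop:bias_sum}. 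Expanding the Fourier coefficients from Lemma~\ref{lem:fourier_coeff} asymptotically in $k$ (noting $\hat{P}_{(2l+1)C} = -\hat{P}_{P+(2l)C}$ at leading order) and substituting the bias-sum expansions of Proposition~\ref{prop:bias_sum}, the calibrated value of $c_{2l+1}$ makes the $l$-th contribution equal to
\begin{align*}
a_l\,k(\beta-\alpha/\delta)(1+\Delta)^l + a_l\,l\cdot A\,(1+\Delta)^{l-1} + a_l\,l(l-1)\cdot B\,(1+\Delta)^{l-2} + O(1/k),
\end{align*}
with $A, B$ bounded and depending only on ratios like $S_{\{i_1,\{i_1,i_2\}\}}/E$ and $S_{\{\{\alpha,i_1\}\}}/E$. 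Summing over $l \in [m]$ yields $k(\beta-\alpha/\delta)\,h(1+\Delta) + A\,h'(1+\Delta) + B\,h''(1+\Delta) + O(1)$; Taylor-expanding $h'$ and $h''$ about $\Delta = 0$ and applying hypothesis~(1), $h'(1) = h''(1) = 0$, folds these pieces into $O(k)\Delta + O(1)$. Combined with the degree-$1$ contribution, this yields the target identity from the overview.

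Second, I would split on $|\Delta|$. For $\Delta \geq -0.55$, invoke the linear lower bound $\delta k\alpha + \beta \geq (\delta^2 k - 1)|\Delta|/4 + 1/2$ (Lemma~\ref{lem:linear2}); hypothesis~(2) gives $|h(1+\Delta)-1|\leq \delta_0^2|\Delta|/5$, and the trivial estimate $|\beta - \alpha/\delta| \leq k + 1/\delta_0$ then shows that the perturbation $k(\beta-\alpha/\delta)(h(1+\Delta)-1) + O(k)\Delta + O(1)$ is strictly dominated by $k(\delta k\alpha + \beta)$ for sufficiently large $k$ (the ratio of leading constants is $5\delta^2/(4\delta_0^2) \geq 5/4$). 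For $\Delta < -0.55$, I would first show $\alpha = \Omega(1)$: satisfying assignments with $x_1 = -1$ force $\sum_{2 \leq i<j} x_ix_j \geq E - O(1)$, hence $\Delta \geq -O(1/k^2)$ for such assignments, so any convex combination with $\Delta_{\text{comb}} < -0.55$ must place majority weight on $x_1 = 1$ assignments. Then rewriting the sum as
\begin{align*}
k(\delta k\alpha + \beta)\,h(1+\Delta) + (\delta k^2 + k/\delta)\,\alpha\,(1 - h(1+\Delta)) + O(k)\Delta + O(1),
\end{align*}
hypothesis~(3) makes both leading terms non-negative, and their sum is bounded below by $\Omega(k) + \Omega(k^2)\alpha(1 - h(1+\Delta))$, which dominates the $O(k)\Delta + O(1)$ error for large $k$.

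The main obstacle is the bookkeeping in the first step: one must verify that the tuned coefficients $c_{2l+1}$ simultaneously cancel the leading \emph{and} the first two subleading orders in $\hat{P}_{(2l+1)C}$ and $\hat{P}_{P+(2l)C}$, so that only $h(1+\Delta), h'(1+\Delta), h''(1+\Delta)$ appear at the $k$ and $1$ scales with remainders uniformly $O(1/k)$ across all $l \leq m$. This requires expanding Lemma~\ref{lem:fourier_coeff} two orders beyond the leading term with error control uniform in $l$; the constraints $h'(1) = h''(1) = 0$ from hypothesis~(1) are precisely what make the subleading residuals collapse. A secondary obstacle is proving Lemma~\ref{lem:linear2}, which reduces to a quadratic-in-$t$ analysis at each vertex of $KTW_P$ (each satisfying assignment of $P$, parametrized by the number $t$ of citizens voting $+1$), analogous to the warm-up calculation for quasi-monarchy predicates.
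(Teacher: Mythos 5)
Your overall structure mirrors the paper's proof exactly: expand $\sum_I \hat{P}_I\,\E[x_I]$ degree-by-degree using Lemma~\ref{lem:fourier_coeff_2} and Proposition~\ref{prop:bias_sum}, use $h'(1)=h''(1)=0$ to reduce the degree $\geq 3$ contribution to $k(\beta - \alpha/\delta)h(1+\Delta) + O(k)\Delta + O(1)$, add in the degree-1 piece $\approx (\delta k^2 + k/\delta)\alpha$, and then split on whether $\Delta \geq -0.55$. The $\Delta \geq -0.55$ case is handled correctly (the comparison of constants $5\delta^2/(4\delta_0^2) \geq 5/4$ is the right check). Your intermediate claim that the coefficients $A,B$ multiplying $h',h''$ are ``bounded'' is off --- they are $O(k)$, since e.g.\ $kS_{\{i_1,\{i_1,i_2\}\}}/E = O(k)$ --- but your final $O(k)\Delta + O(1)$ matches the paper, so this is only a notational slip.

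The genuine gap is in the $\Delta < -0.55$ case. You claim the sum of the two leading terms is ``bounded below by $\Omega(k) + \Omega(k^2)\alpha(1-h(1+\Delta))$, which dominates the $O(k)\Delta + O(1)$ error.'' This fails in the subcase where $h(1+\Delta)$ is close to $1$: then $\Omega(k^2)\alpha(1-h(1+\Delta))$ degenerates, and the remaining $\Omega(k)$ --- which can only come from $k(\delta k\alpha + \beta)h$ via the crude bound $\delta k\alpha + \beta \geq 1$ --- is the same order as the $O(k)\Delta$ error with $|\Delta| \approx 0.55$, so the constants do not resolve. What you actually need here is Lemma~\ref{lem:linear2} again, which gives $\delta k\alpha + \beta \geq (\delta^2k-1)|\Delta|/4 + 1/2 = \Omega(k)$ once $|\Delta| > 0.55$; this upgrades the first term to $\Omega(k^2)h$, so that for every $h\in[0,1]$ at least one of the two leading terms is $\Omega(k^2)$. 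You cite Lemma~\ref{lem:linear2} only in the $\Delta \geq -0.55$ case, but it is also load-bearing here. A second, smaller omission: $\Delta$ ranges down to $-1 - O(1/k)$, and for $\Delta \in [-1-O(1/k), -1)$ hypothesis~(3) does not apply; you need the separate observation that $h$ has no constant term, so $|h(1+\Delta)| = O(1/k)$ there, making the first term $O(k)$ and the second term $\Omega(k^2)$ and positive.
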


\begin{remark}
As described in Section ~\ref{overviewsubsection}, our proof contains a case analysis of $\Delta$. The value $-0.55$ is chosen because when $\Delta < -0.55$, the bias $\alpha$ of the president is always positive.
\end{remark}

This section is organized as follows. We first compute the expected value of the rounding scheme in terms of $h$. Then, we show that if $h$ has the required properties, then the expected value is positive over the entire polytope, which implies that our predicate is approximable. Finally, we find such a polynomial with the desired properties. 

\subsection{Evaluating the Rounding Scheme}
In this subsection, we analyze $\sum_{I \subseteq [k]:I \neq \emptyset}{\hat{P}_{I}\E[x_I]}$ in terms of $h$. 

We have the following lemma for Fourier coefficients, the proof of which can be found in the appendix.
\begin{lemma}\label{lem:fourier_coeff_2}
Let $\delta_0 > 0$ be a constant. Let $P(x_1, \ldots, x_k) = \sign(\delta\cdot k x_1 + x_2 + \cdots + x_k)$ where $\delta \in [\delta_0, 1)$ such that $\delta k + k - 1$ is an odd integer. Let $u = \frac{1+\delta}{2}k$ and $v = \frac{1-\delta}{2}k$. Let $\hat{P}_{tC}$ denote the Fourier coefficient of a set of $t$ citizens and $\hat{P}_{P+tC}$ denote the Fourier coefficient of a set of $t$ citizens together with the president. We have
\begin{align*}
    &\hat{P}_P = 1 - \frac{1}{2^{k - 2}}\sum_{l = 0}^{v - 1}\binom{k-1}{l}, \\
&\hat{P}_{tC} = \frac{1}{2^{k - 2}}\cdot\frac{(k - t - 1)!}{(u-1)!(v-1)!}\left(\delta^{t-1}k^{t-1} - \frac{(t-1)(t-2)}{2}\delta^{t-3}k^{t - 2} + O(k^{t - 3})\right), \quad t \textrm{ is an odd constant}\\
    &\hat{P}_{P+tC} = -\frac{1}{2^{k - 2}}\cdot\frac{(k - t - 1)!}{(u-1)!(v-1)!}\left(\delta^{t-1}k^{t-1} - \frac{(t-1)(t-2)}{2}\delta^{t-3}k^{t - 2} + O(k^{t - 3})\right), \quad t \textrm{ is an even constant}\\
\end{align*}
where the constants inside the big $O$s grows with $t$ but not with $\delta$.
\end{lemma}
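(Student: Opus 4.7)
The plan is to specialize Lemma~\ref{lem:fourier_coeff} to $a = \delta k$ and then perform a careful asymptotic analysis of the resulting double sum. Since $\delta k + k - 1$ is an odd integer, $k - \delta k - 1 = 2v - 1$, so $\tau = \lfloor (k - \delta k - 1)/2 \rfloor = v - 1$. The formula for $\hat{P}_P$ is then immediate by plugging $\tau = v-1$ into part (1) of Lemma~\ref{lem:fourier_coeff}. For parts (2) and (3), up to the explicit signs and the global factor $1/2^{k-2}$, both reduce to estimating the same quantity
\[
T_t \;=\; \sum_{i=0}^{v-1}\sum_{j=0}^{v-1-i}(-1)^j \binom{k-t-1}{i}\binom{t}{j}
\]
to relative error $O(k^{t-3})$ against the leading $\delta^{t-1}k^{t-1}$ term.

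The first key step is to collapse the inner sum using the identity $\sum_{j=0}^{m}(-1)^j \binom{t}{j} = (-1)^m \binom{t-1}{m}$, which equals $0$ when $m \geq t$. Applied with $m = v-1-i$, this kills the summands with $i < v - t$, and after the change of variables $s = v-1-i$ we obtain the much cleaner finite sum
\[
T_t \;=\; \sum_{s=0}^{t-1}(-1)^s \binom{t-1}{s}\binom{k-t-1}{v-1-s}.
\]
Writing $\binom{k-t-1}{v-1-s} = \frac{(k-t-1)!}{(v-1-s)!(u-t+s)!}$ and pulling out the target prefactor $\frac{(k-t-1)!}{(u-1)!(v-1)!}$ converts the individual binomials into falling factorials, giving
\[
T_t \;=\; \frac{(k-t-1)!}{(u-1)!(v-1)!}\sum_{s=0}^{t-1}(-1)^s \binom{t-1}{s}(v-1)_s\,(u-1)_{t-1-s},
\]
where $(x)_n = x(x-1)\cdots(x-n+1)$.

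The remaining task is to expand each falling factorial via $(v-1)_s = v^s - \frac{s(s+1)}{2}v^{s-1} + O(v^{s-2})$ (and analogously for $(u-1)_{t-1-s}$) and evaluate the resulting three sums. The leading sum is $\sum_s (-1)^s\binom{t-1}{s} u^{t-1-s}v^s = (u-v)^{t-1} = \delta^{t-1}k^{t-1}$ by the binomial theorem. The two subleading sums, involving $s(s+1)u^{t-1-s}v^{s-1}$ and $(t-1-s)(t-s)u^{t-2-s}v^s$, are most painlessly handled by viewing them as derivatives of $(u-v)^{t-1}$ with respect to $v$ and $u$ respectively, using the algebraic rewrites $s(s+1)=s(s-1)+2s$ and $(t-1-s)(t-s)=(t-1-s)(t-2-s)+2(t-1-s)$. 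The calculation yields
\[
v(t-1)(t-2)(u-v)^{t-3}-2(t-1)(u-v)^{t-2} \quad\text{and}\quad u(t-1)(t-2)(u-v)^{t-3}+2(t-1)(u-v)^{t-2},
\]
and when multiplied by $-1/2$ and added to the leading term, the $\pm(t-1)(u-v)^{t-2}$ pieces cancel and the $(u+v)=k$ factor combines with $(u-v)^{t-3}=\delta^{t-3}k^{t-3}$ to give precisely $-\frac{(t-1)(t-2)}{2}\delta^{t-3}k^{t-2}$.

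The main obstacle is bookkeeping: ensuring the $O(k^{t-3})$ remainder is genuinely uniform in $\delta \in [\delta_0,1)$ for each fixed $t$. This holds because each hidden constant is a polynomial in $t$ times a bounded power of $1/\delta_0$, arising from the finitely many terms one discards when expanding the falling factorials and from the bounded ratios $\delta^{t-j}/\delta^{t-1}$ absorbed into the error. Finally, matching the signs from Lemma~\ref{lem:fourier_coeff}(2)--(3)---positive for odd $t$ citizens, negative for the president together with even $t$ citizens---yields the two stated formulas.
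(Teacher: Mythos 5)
Your proof is correct and follows essentially the same route as the paper: specialize Lemma~\ref{lem:fourier_coeff} with $\tau = v-1$, collapse the inner sum via the identity $\sum_{j=0}^{m}(-1)^j\binom{t}{j} = (-1)^m\binom{t-1}{m}$ (the paper's Proposition~\ref{prop:b1}), change variables to get a sum over $s \in \{0,\ldots,t-1\}$ with the prefactor $\frac{(k-t-1)!}{(u-1)!(v-1)!}$ pulled out, and then extract the top two coefficients in $k$ using the binomial theorem and the derivative-of-$(u-v)^{t-1}$ trick (the paper's Proposition~\ref{prop:b2}). The notation differs slightly (falling factorials instead of explicit products $\prod_l (v-l)$, $\prod_l (u-l)$) but the steps and the final cancellation producing $-\frac{1}{2}(t-1)(t-2)(u+v)(u-v)^{t-3} = -\frac{(t-1)(t-2)}{2}\delta^{t-3}k^{t-2}$ are identical.
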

\begin{remark}
The lemma allows $\delta$ to depend on $k$ as long as $\delta = \Omega(1)$. In particular, we can take $\delta = 1-\frac{2c}{k}$ for any constant $c \geq 1$. Also, when $\delta$ is at least a constant we have that $\hat{P}_P$ is exponentially larger than $\hat{P}_C$.
\end{remark}
Recall that we set $S_{\{\{i_1, i_2\}\}} = E(1 + \Delta)$ where $E = \frac{\delta^2k^2}{2} - \frac{k}{2} + 1$ (see Definition~\ref{def:keyquantities}). The reason for this choice for $E$ is as follows. We expect that the cases which are most difficult to round are the two cases where $\delta k \alpha + b = 1$:
\begin{itemize}
    \item The president and $\frac{1 - \delta}{2}k$ citizens vote 1, others vote $-1$. In this case,
    \[
    \sum_{i < j \in [2,k]}x_ix_j = \binom{k-1}{2} - 2\frac{(1 - \delta)k}{2}\left(\frac{(1 + \delta)k}{2} - 1\right)
    \]
    \item The president and $\frac{1 + \delta}{2}k$ citizens vote $-1$, others vote 1. In this case, 
    \[
    \sum_{i < j \in [2,k]}x_ix_j = \binom{k-1}{2} - 2\frac{(1 + \delta)k}{2}\left(\frac{(1 - \delta)k}{2} - 1\right)
    \]
\end{itemize}
For both of these cases, $\sum_{i < j \in [2,k]}x_ix_j$ is approximately $\frac{\delta^2k^2}{2}$. Taking the average of these two cases we have $E = \frac{\delta^2k^2}{2} - \frac{k}{2} + 1$. Note that since $\delta > \delta_0$ is at least a constant, we have $E = \Omega(k^2)$.

We now analyze $\sum_{I \subseteq [k]:I \neq \emptyset}{\hat{P}_{I}\E[x_I]}$ in terms of $h$.
\begin{lemma}
Assume that we have $h(x) = \sum_{l = 1}^{m}{{a_l}x^l}$ and coefficients 
\[
c_{2l+1} = a_l \cdot \frac{2^{k-2}(u-1)!(v-1)!}{(k - 2l - 2)!\delta^{2l}k^{2l-1}E^l}
\]
where $u = \frac{1+\delta}{2}k$, $v = \frac{1-\delta}{2}k$, and $E = \frac{\delta^2k^2}{2} - \frac{k}{2} + 1$. The contribution of degree $\geq 3$ terms is
\begin{align*}
&k\left(\beta - \frac{\alpha}{\delta}\right)h(1 + \Delta) - \frac{2(1 + \Delta)^2\beta h''(1+\Delta)}{\delta^2} + \frac{(1 + \Delta)\beta h'(1+\Delta)}{\delta^2} \\
& \qquad - k\left(\frac{S_{\{i_1, \{i_1, i_2\}\}}}{E}h'(1 + \Delta) + \frac{\beta S_{\{\{i_1, i_2\}, \{i_1, i_3\}\}}}{E^2}h''(1 + \Delta) + \frac{\beta S_{\{\{\alpha, i_1\}\}}}{E\delta}h'(1 + \Delta)\right) + O(1)
\end{align*}

\end{lemma}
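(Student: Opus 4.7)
The plan is to invoke the proposition preceding the lemma, which rewrites $\sum_{|I|=2l+1}\hat P_I\,\E[x_I]$ as $c_{2l+1}\bigl(\hat P_{(2l+1)C}\,S_{1,l} + \hat P_{P+(2l)C}(S_{2,l}+S_{3,l})\bigr)$, sum over $l$ from $1$ to $m$, and substitute the asymptotic formulas from Lemma~\ref{lem:fourier_coeff_2} and Proposition~\ref{prop:bias_sum}. The coefficient $c_{2l+1}$ has been chosen so that the prefactors $2^{k-2}$, $(u-1)!(v-1)!$, $(k-2l-2)!$, $\delta^{2l}k^{2l-1}$, and $E^l$ all cancel, yielding clean polynomial expressions in $k$.

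Concretely, after the cancellation one gets $c_{2l+1}\hat P_{(2l+1)C} = \frac{l!\,a_l}{E^l}(k + O(1))$ and $c_{2l+1}\hat P_{P+(2l)C} = -\frac{l!\,a_l}{\delta E^l}(k + O(1))$, where the $O(1)$ corrections are explicit and come from the subleading $-\tfrac{(t-1)(t-2)}{2}\delta^{t-3}k^{t-2}$ term of Lemma~\ref{lem:fourier_coeff_2}, together (in the even case) with the extra factor $(k-2l-1)$ present in $\hat P_{P+(2l)C}$. Multiplying by $S_{1,l}$ and $S_{2,l}+S_{3,l}$, the $l!/E^l$ in each $c_{2l+1}\hat P$ cancels the $E^l/l!$ in each $S_{\cdot,l}$ provided by Proposition~\ref{prop:bias_sum}, and every product splits into an $O(k)$ leading piece and an $O(1)$ correction.

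Summing the $O(k)$ pieces over $l$ and invoking the generating-function identities $h(1+\Delta)=\sum_l a_l(1+\Delta)^l$, $h'(1+\Delta)=\sum_l a_l l(1+\Delta)^{l-1}$, and $h''(1+\Delta)=\sum_l a_l l(l-1)(1+\Delta)^{l-2}$, the four $k$-terms of the lemma statement emerge: namely $k(\beta-\alpha/\delta)h(1+\Delta)$ along with the three $-k\cdot(\text{ratio})\cdot h'$ or $h''$ terms. The $\beta$-proportional $O(1)$ contributions come from the product of the explicit $O(1)$ subleading in $c_{2l+1}\hat P_{(2l+1)C}$ with the leading $\beta(1+\Delta)^l$ part of $S_{1,l}$; summing over $l$ yields an expression of the form $\sum_l a_l l(2l-1)(1+\Delta)^l$ which, via $l(2l-1)=2l(l-1)+l$ together with $(1+\Delta)^2 h''(1+\Delta)=\sum_l a_l l(l-1)(1+\Delta)^l$ and $(1+\Delta)h'(1+\Delta)=\sum_l a_l l(1+\Delta)^l$, reorganizes into the two explicit $\beta h''$ and $\beta h'$ terms of the lemma. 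All other $O(1)$ contributions are either $\alpha$-proportional (bounded since $|\alpha|\le 1$) or arise from cross-products of subleading factors that are bounded using the uniform estimates $|\beta|\le k$, $|S_{\{\{\alpha,i_1\}\}}|\le k$, $|S_{\{i_1,\{i_1,i_2\}\}}|=O(k^2)$, $|S_{\{\{i_1,i_2\},\{i_1,i_3\}\}}|=O(k^3)$, and $E=\Theta(k^2)$, so they all fold into the $O(1)$ remainder.

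The main obstacle is the bookkeeping in the last step: one must verify that every product of two subleading factors, and every product of a leading factor with an $O(1/k)$ tail from Proposition~\ref{prop:bias_sum}, is indeed $O(1)$ uniformly in $l \le m$, and that no $\beta$-proportional term has been overlooked when gathering corrections. Because $m = \deg h$ is a constant depending only on $h$ and not on $k$, all constants hidden in the asymptotic expansions are uniform in $l$, so this bookkeeping, while tedious, is routine.
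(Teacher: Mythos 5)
Your proposal follows the paper's proof essentially verbatim: apply the preceding proposition to get $\sum_{|I|=2l+1}\hat P_I\E[x_I] = c_{2l+1}\bigl(\hat P_{(2l+1)C}S_{1,l} + \hat P_{P+(2l)C}(S_{2,l}+S_{3,l})\bigr)$, substitute the asymptotic Fourier coefficients from Lemma~\ref{lem:fourier_coeff_2} and the asymptotic $S$-sums from Proposition~\ref{prop:bias_sum}, sum over $l$, collect the leading pieces via $\sum_l a_l(1+\Delta)^l = h$, $\sum_l a_l l(1+\Delta)^{l-1} = h'$, $\sum_l a_l l(l-1)(1+\Delta)^{l-2} = h''$, and reorganize the $\beta$-proportional subleading $-\frac{\beta}{\delta^2}\sum_l a_l l(2l-1)(1+\Delta)^l$ through $l(2l-1)=2l(l-1)+l$. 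This is exactly the displayed chain of equalities in the paper's proof.

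Two small arithmetic cautions worth flagging. First, direct substitution of the given $c_{2l+1}$ into Lemma~\ref{lem:fourier_coeff_2} gives $c_{2l+1}\hat P_{(2l+1)C} = \frac{a_l}{E^l}\bigl(k - l(2l-1)\delta^{-2} + O(1/k)\bigr)$ rather than $\frac{l!\,a_l}{E^l}(k+O(1))$ as you claim, so that after multiplying by $S_{1,l}$ (which carries a prefactor $E^l/l!$ from Proposition~\ref{prop:bias_sum}) a residual $1/l!$ remains; the paper's displayed chain implicitly suppresses the same $1/l!$. This is only a renormalization of $c_{2l+1}$ by $l!$ and does not change the structure of the argument, but you should not present the $l!$ as emerging from the cancellation, since it does not. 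Second, the identity $l(2l-1)=2l(l-1)+l$ gives $\sum_l a_l l(2l-1)(1+\Delta)^l = 2(1+\Delta)^2 h''(1+\Delta) + (1+\Delta)h'(1+\Delta)$, so the contribution $-\frac{\beta}{\delta^2}\sum_l a_l l(2l-1)(1+\Delta)^l$ carries a minus sign on \emph{both} the $h''$ and $h'$ pieces; you should double-check the sign on the $(1+\Delta)\beta h'/\delta^2$ term when you match it against the lemma's statement. Neither issue affects downstream use, since $h'(1)=h''(1)=0$ is imposed and the terms are subsumed into $O(k)\cdot\Delta$ in Corollary~\ref{cor:zeroderivatives}.
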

\begin{proof}
We have the following computation:
\begin{align*}
&\qquad\sum_{l = 1}^m\sum_{|I| = 2l+1}\hat{P}_I \E[X_I] \\
& = \sum_{l = 1}^mc_{2l+1}\left(\hat{P}_{(2l+1)C}S_{1, l} + \hat{P}_{P+(2l)C}(S_{2, l} + S_{3, l})\right) \\
& = \sum_{l = 1}^m a_l\Bigg(\left(k - l(2l-1)\delta^{-2}+O\left(\frac{1}{k}\right)\right)\cdot\\
& \qquad\left(\beta(1 + \Delta)^l - \frac{S_{\{i_1, \{i_1, i_2\}\}}}{E}l(1 + \Delta)^{l - 1} - \frac{\beta S_{\{\{i_1, i_2\}, \{i_1, i_3\}\}}}{E^2}l(l-1)(1 + \Delta)^{l - 2} + O\left(\frac{1}{k}\right)\right) - \\
& \qquad \left(1 - \frac{2l+1}{k}\right)\cdot\left(\frac{k}{\delta}- \frac{(l-1)(2l-1)}{\delta^3}+O\left(\frac{1}{k}\right)\right)\cdot\left(\alpha(1 + \Delta)^l + \frac{\beta S_{\{\{\alpha, i_1\}\}}}{E}l(1 + \Delta)^{l-1} + O\left(\frac{1}{k}\right)\right)\Bigg) \\
& = \sum_{l = 1}^m a_l\Bigg(k\left(\beta - \frac{\alpha}{\delta}\right)(1 + \Delta)^l - \frac{l(2l-1)\beta(1+\Delta)^l}{\delta^2} \\
& \qquad - k\left(\frac{S_{\{i_1, \{i_1, i_2\}\}}}{E}l(1 + \Delta)^{l - 1} + \frac{\beta S_{\{\{i_1, i_2\}, \{i_1, i_3\}\}}}{E^2}l(l-1)(1 + \Delta)^{l - 2} + \frac{\beta S_{\{\{\alpha, i_1\}\}}}{E\delta}l(1 + \Delta)^{l-1}\right) + O(1)\Bigg) \\
& = k\left(\beta - \frac{\alpha}{\delta}\right)h(1 + \Delta) - \frac{2(1 + \Delta)^2\beta h''(1+\Delta)}{\delta^2} + \frac{(1 + \Delta)\beta h'(1+\Delta)}{\delta^2} \\
& \qquad - k\left(\frac{S_{\{i_1, \{i_1, i_2\}\}}}{E}h'(1 + \Delta) + \frac{\beta S_{\{\{i_1, i_2\}, \{i_1, i_3\}\}}}{E^2}h''(1 + \Delta) + \frac{\beta S_{\{\{\alpha, i_1\}\}}}{E\delta}h'(1 + \Delta)\right) + O(1)
\end{align*}
\end{proof}
\begin{corollary}\label{cor:zeroderivatives}
If we have $h(x) = \sum_{l = 1}^ma_lx^l$ such that $h'(1) = h''(1) = 0$ and choose coefficients 
\[
c_{2l+1} = a_l \cdot \frac{2^{k-2}(u-1)!(v-1)!}{(k - 2l - 2)!\delta^{2l}k^{2l-1}E^l}
\]
then the contribution of degree $\geq 3$ terms is
\[
k\left(\beta - \frac{\alpha}{\delta}\right)h(1 + \Delta) + O(k)\cdot \Delta + O(1).
\]
\end{corollary}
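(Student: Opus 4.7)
The plan is to start from the exact formula for the degree $\geq 3$ contribution given by the previous lemma, and then use the vanishing derivatives of $h$ at $1$, together with crude size bounds on the auxiliary sums, to fold every term except the leading $k(\beta-\alpha/\delta)h(1+\Delta)$ into the stated error $O(k)\cdot\Delta+O(1)$. Since $\delta\geq\delta_0$ is bounded below by a constant and, on the KTW polytope, $\Delta = S_{\{\{i_1,i_2\}\}}/E - 1$ with $|S_{\{\{i_1,i_2\}\}}|\leq\binom{k-1}{2}$ and $E=\Theta(k^2)$, we may treat $\Delta$ as lying in a bounded interval throughout the argument.

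The first key step is a Taylor-expansion observation: since $h$ is a polynomial with $h'(1)=h''(1)=0$, there exist polynomials $r_1,r_2$ (depending on $h$ but not on $k$ or $\delta$) with $h'(1+\Delta)=\Delta\, r_1(\Delta)$ and $h''(1+\Delta)=\Delta\, r_2(\Delta)$. Restricting to the bounded range of $\Delta$, this gives $h'(1+\Delta)=O(\Delta)$ and $h''(1+\Delta)=O(\Delta)$ with constants depending only on $h$. The two ``lower-order'' terms $-2(1+\Delta)^2\beta h''(1+\Delta)/\delta^2$ and $(1+\Delta)\beta h'(1+\Delta)/\delta^2$ then collapse to $O(k)\cdot\Delta$, using the trivial bound $|\beta|\leq k-1$ and $\delta\geq\delta_0$.

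The second step is to bound the three ratios that appear inside the factor of $k$:
\begin{align*}
\frac{S_{\{i_1,\{i_1,i_2\}\}}}{E}, \qquad \frac{\beta\, S_{\{\{i_1,i_2\},\{i_1,i_3\}\}}}{E^2}, \qquad \frac{\beta\, S_{\{\{\alpha,i_1\}\}}}{E\delta}.
\end{align*}
Each of these is a sum of products of biases with coordinates in $[-1,1]$, so by counting index tuples we get $|S_{\{i_1,\{i_1,i_2\}\}}|=O(k^2)$, $|S_{\{\{i_1,i_2\},\{i_1,i_3\}\}}|=O(k^3)$, and $|S_{\{\{\alpha,i_1\}\}}|=O(k)$. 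Combined with $|\beta|=O(k)$, $E=\Theta(k^2)$, and $\delta\geq\delta_0$, each of the three ratios is $O(1)$. Multiplying by the $O(\Delta)$ values of $h'(1+\Delta)$ and $h''(1+\Delta)$ and by the outer factor of $k$, each of the three terms inside the big parentheses contributes $O(k)\cdot\Delta$.

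Adding these estimates to the $O(1)$ term already present in the lemma yields precisely $k(\beta-\alpha/\delta)h(1+\Delta)+O(k)\cdot\Delta+O(1)$, as claimed. I expect no real obstacle here: the statement is essentially a bookkeeping corollary of the preceding lemma, and the only mild point to verify carefully is that the implicit constants in the ratios above depend on $\delta_0$ (through the lower bound on $\delta$) but not on $k$, so that the $O(\cdot)$ notation is uniform in $k$ once $\delta_0$ is fixed.
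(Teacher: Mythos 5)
Your proof is correct and follows essentially the same approach as the paper, which simply cites the magnitude bounds $E=\Omega(k^2)$, $\beta=O(k)$, $S_{\{i_1,\{i_1,i_2\}\}}=O(k^2)$, $S_{\{\{i_1,i_2\},\{i_1,i_3\}\}}=O(k^3)$, $S_{\{\{\alpha,i_1\}\}}=O(k)$ and leaves the rest implicit. You make explicit the one step the paper glosses over — that $h'(1)=h''(1)=0$ forces $h'(1+\Delta)$ and $h''(1+\Delta)$ to each carry a factor of $\Delta$, which is exactly what upgrades the remainder from a crude $O(k)$ to the claimed $O(k)\cdot\Delta$ — and that filled-in detail is accurate and well placed.
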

\begin{proof}
This follows from the fact that $E = \Omega(k^2), \beta = O(k), S_{\{i_1, \{i_1, i_2\}\}} = O(k^2), S_{\{\{i_1, i_2\}, \{i_1, i_3\}\}} = O(k^3)$ and $S_{\{\{\alpha, i_1\}\}} = O(k)$.
\end{proof}

\subsection{Conditions on $h$}
\begin{lemma}\label{lem:hconditions}
For all $\delta_0 > 0$, if $h = \sum_{l = 1}^{m}{{a_l}x^l}$ is a polynomial such that
\begin{enumerate}
    \item $h'(1) = h''(1) = 0$
    \item For all $\Delta \in [-0.55,\frac{1}{\delta_0^2}]$, $|h(1+\Delta)-1| \leq \frac{{\delta^2_0}|\Delta|}{5}$
    \item For all $\Delta \in [-1,-0.55]$, $0 \leq h(1+\Delta) \leq 1$
\end{enumerate}
then there exists $k_0 \in \mathbb{N}$ such that for all $k \geq k_0$ and all $\delta \geq \delta_0$, the rounding scheme $R_{k,\delta,h}$ has positive expected value over the entire KTW polytope.
\end{lemma}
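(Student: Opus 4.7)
The plan is to combine the degree-$1$ contribution with Corollary~\ref{cor:zeroderivatives} and then case-split on $\Delta$ as outlined in Section~\ref{overviewsubsection}. The degree-$1$ contribution equals $(\delta k^2 + k/\delta)\bigl(\hat{P}_P \alpha + \hat{P}_{1C} \beta\bigr)$, and by Lemma~\ref{lem:fourier_coeff_2} we have $\hat{P}_P = 1 - o(1)$ while $|\hat{P}_{1C}|$ is exponentially small in $k$, so this reduces to $(\delta k^2 + k/\delta)\alpha$ up to negligible corrections. Combining with Corollary~\ref{cor:zeroderivatives} gives
\[
\sum_{I \ne \emptyset} \hat{P}_I \E[x_I] = (\delta k^2 + k/\delta)\alpha + k(\beta - \alpha/\delta)\, h(1+\Delta) + O(k)|\Delta| + O(1).
\]
Depending on the case, I rearrange this either as $k(\delta k\alpha + \beta) + k(\beta - \alpha/\delta)(h(1+\Delta) - 1) + O(k)|\Delta| + O(1)$ or as $k(\delta k\alpha + \beta)\, h(1+\Delta) + (\delta k^2 + k/\delta)\alpha(1 - h(1+\Delta)) + O(k)|\Delta| + O(1)$.

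For $\Delta \ge -0.55$, Lemma~\ref{lem:linear2} gives $\delta k\alpha + \beta \ge (\delta^2 k - 1)|\Delta|/4 + 1/2$, and using the first rearrangement together with condition~(2), the correction $k(\beta - \alpha/\delta)(h(1+\Delta) - 1)$ is bounded in absolute value by $k(k + 1/\delta_0)\cdot \delta_0^2 |\Delta|/5$. Since $\delta \ge \delta_0$, the leading term $(\delta_0^2 k^2/4)|\Delta|$ strictly dominates this, leaving $\Omega(\delta_0^2 k^2 |\Delta|) + k/2 - O(k)|\Delta| - O(1)$, which is positive for $k$ sufficiently large (the $k/2$ handles $\Delta$ near $0$, and the $k^2|\Delta|$ term dominates the $k|\Delta|$ error otherwise).

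For $\Delta < -0.55$, I use the second rearrangement; both displayed terms are nonnegative by condition~(3) and the basic KTW constraint $\delta k\alpha + \beta \ge 1$. To push their sum to $\Omega(k^2)$, I need two geometric facts about $KTW_P$ restricted to the slice $\{\Delta \le -0.55\}$. First, $\alpha \ge c_0$ for some absolute constant $c_0 \approx 0.1$: every satisfying assignment with $x_1 = -1$ has at least $(1+\delta)k/2$ citizens voting $+1$, forcing $\sum_{i<j\in[2,k]} x_i x_j \ge E - O(k)$ and hence $\Delta_{\mathrm{single}} \ge -O(1/k)$; a convex combination with $\Delta < -0.55$ must therefore place mass at least $0.55 - o(1)$ on $x_1 = +1$ extreme points, giving $\alpha > 2(0.55) - 1 - o(1) > 0.1$ for large $k$. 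Second, $\delta k\alpha + \beta \ge c_1 k$ for some constant $c_1 > 0$: since this function is linear in the biases, its minimum over the convex sub-polytope $KTW_P \cap \{\Delta \le -0.55\}$ is attained at an extreme point, which is either a satisfying $p(x)$ with $\Delta_x \le -0.55$ (forcing $x_1 = +1$ with $t$ citizens near $(k-1)/2$ voting $+1$, so $\delta k\alpha + \beta = \delta k + 2t - (k-1) = \Omega(k)$) or a point on the face $\Delta = -0.55$ (where Lemma~\ref{lem:linear2} already gives $\Omega(k)$). Combining these two facts with condition~(3) yields
\[
k(\delta k\alpha + \beta)\, h(1+\Delta) + (\delta k^2 + k/\delta)\alpha\,(1 - h(1+\Delta)) \;\ge\; \min(c_1,\, c_0 \delta_0)\, k^2,
\]
which dominates the $O(k)$ error for $k$ large enough.

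The hard part is the second geometric fact: the weak polytope bound $\delta k\alpha + \beta \ge 1$ alone only gives $\Omega(k)$ for the first term in the Case~2 rearrangement, which is insufficient when $h(1+\Delta)$ is close to $1$ (causing the second, quadratic-in-$k$ term to vanish). The strengthened bound $\delta k\alpha + \beta = \Omega(k)$ uniformly on the slice $\{\Delta \le -0.55\}$ — proved by the extreme-point argument together with continuity to Lemma~\ref{lem:linear2} at $\Delta = -0.55$ — is what lets the first term reach $\Omega(k^2)$ in this otherwise problematic regime.
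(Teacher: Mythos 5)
Your proof follows the paper's approach exactly: combine the degree-$1$ contribution with Corollary~\ref{cor:zeroderivatives}, then split on whether $\Delta \geq -0.55$, using the two rearrangements of $(*)$ together with Lemma~\ref{lem:linear2}. Your Case~1 analysis is the paper's. In Case~2 you correctly observe that the argument actually requires $\delta k\alpha + \beta = \Omega(k)$, not merely the trivial $\geq 1$ (since conditions (1)--(3) permit $h(1+\Delta) = 1$ somewhere on $[-1,-0.55]$, which would kill the second term); the paper buries this inside the terse assertion that ``at least one of the two terms is $\Omega(k^2)$,'' so flagging it is a genuine clarification. However, your extreme-point analysis over the sub-polytope $KTW_P \cap \{\Delta \leq -0.55\}$ to establish this bound is unnecessary: Lemma~\ref{lem:linear2} already asserts $\delta k\alpha + \beta \geq \frac{(\delta^2 k - 1)|\Delta|}{4} + \frac{1}{2}$ at \emph{every} point of the KTW polytope (its proof reduces to all satisfying assignments via concavity of $\delta k\alpha + \beta - \frac{\delta^2 k - 1}{4}|\Delta|$, not just those in a half-space), and substituting $|\Delta| > 0.55$ gives $\delta k\alpha + \beta > \frac{0.55(\delta_0^2 k - 1)}{4} + \frac{1}{2} = \Omega(k)$ immediately. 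So the ``hard part'' you identify is a one-line corollary of a lemma you already invoke, and the vertex classification of the slice can be dropped.
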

To prove this, we need the following lemma about points in the KTW polytope for $P$:
\begin{lemma}\label{lem:linear2}
For sufficiently large $k$ we have
\[
\delta k\alpha + \beta \geq \frac{(\delta^2 k - 1)|\Delta|}{4} + \frac{1}{2}.
\]
\end{lemma}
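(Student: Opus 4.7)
The plan is to exploit the observation that $\delta k\alpha + \beta$ is linear in the biases while $|\Delta|$ is the absolute value of an affine function of the pairwise biases and hence convex. Thus the function
\[
g := \delta k\alpha + \beta - \frac{(\delta^2 k - 1)|\Delta|}{4} - \frac{1}{2}
\]
is concave on the KTW polytope. A concave function on a polytope attains its minimum at an extreme point, and the extreme points of $KTW_P$ are a subset of the vectors $p(x)$ for satisfying assignments $x$. It therefore suffices to verify $g(p(x)) \geq 0$ for every satisfying assignment.

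For any assignment, let $t$ be the number of citizens voting $+1$. Then $\beta = 2t - (k-1)$, and a direct calculation (using the substitution $s = k-1-t$ for the number of citizens voting $-1$) yields
\[
\sum_{2 \leq i < j \leq k} x_i x_j \;=\; \binom{t}{2} + \binom{s}{2} - ts \;=\; \frac{\beta^2 - (k-1)}{2},
\]
so $\Delta = \frac{\beta^2 - \delta^2 k^2 - 1}{2E}$ with $2E = \delta^2 k^2 - k + 2$. Writing $u = \tfrac{(1+\delta)k}{2}$ and $v = \tfrac{(1-\delta)k}{2}$, the satisfiability condition $\delta k x_1 + \beta \geq 1$ forces $t \geq v$ when $x_1 = 1$ and $t \geq u$ when $x_1 = -1$.

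To conclude, I would drop the absolute value by splitting $|\Delta| \in \{\Delta,-\Delta\}$ and verify the two resulting inequalities separately in each of the cases $x_1 = \pm 1$. In every subcase the inequality reduces to a quadratic in $t$ over the interval $[v,k-1]$ or $[u,k-1]$, whose $t^2$ coefficient is $\frac{\delta^2 k - 1}{2E} = O(1/k)$ (from $(\delta^2 k - 1)\Delta/4$) while the $t$ coefficient coming from $\delta k \alpha + \beta$ is $2$. It then suffices to check the inequality at the endpoints $t \in \{v, u, k-1\}$ and argue that the mild concavity/convexity of the quadratic cannot flip the sign in between; this mirrors the endpoint analysis in Lemma~\ref{lem:linear1}.

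The main technical obstacle is the boundary case. At $t = v$ in Case 1 and $t = u$ in Case 2, the LHS of the target inequality is exactly $1$, so one must show $\frac{(\delta^2 k -1)|\Delta|}{4} \leq \tfrac{1}{2}$ there; using $(\beta - \delta k)(\beta + \delta k) = \beta^2 - \delta^2 k^2$ and the fact that $\beta \mp \delta k = 1$ at these boundary values, this amount of slack is easily produced from $E = \Omega(\delta^2 k^2)$. At $t = k-1$ the slack on the left side grows like $k$ while $|\Delta|$ stays $O(1)$, so that endpoint is comfortable once $k$ is large compared with $1/\delta_0$. Throughout, the assumption $\delta \geq \delta_0 > 0$ is used to guarantee $E = \Omega(k^2)$, which controls the size of the $\Delta$-term uniformly.
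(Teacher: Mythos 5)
Your proposal follows essentially the same route as the paper's proof: use convexity of $|\Delta|$ on the KTW polytope to reduce to satisfying assignments, parametrize by the number $t$ of citizens voting $1$, express $\Delta$ as an explicit quadratic in $t$, split on the sign of $\Delta$ and the value of $x_1$, and verify the four resulting quadratic inequalities over the intervals $[v,k-1]$ or $[u,k-1]$. Your reformulation $\Delta = \frac{\beta^2 - \delta^2 k^2 - 1}{2E}$ and the factorization $(\beta-\delta k)(\beta+\delta k)$ at the tight boundary $\delta k\alpha + \beta = 1$ is a clean streamlining of the hardest endpoint check. One place the sketch needs tightening: you assert that "it suffices to check the inequality at the endpoints," but that is only automatic in the subcases where the relevant quadratic in $t$ is concave (namely when the upward-opening $\Delta$ is subtracted). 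In the subcases where $\Delta$ is added, the quadratic opens upward and its minimum over the interval could a priori be interior; the paper handles these by computing that the vertex lies at negative $t$, hence outside the domain, so the left endpoint governs. Your appeal to the $t^2$ coefficient being $O(1/k)$ and to "mild concavity/convexity cannot flip the sign" gestures toward this but does not establish it --- a small positive leading coefficient alone does not place the vertex outside the domain. Supplying that vertex-location computation is exactly what completes the argument, and is what the paper does.
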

\begin{proof}
Since $|\Delta|$ is a convex function on the KTW polytope, it suffices to check that for each satisfying assignment, $\delta k\alpha + \beta \geq \frac{\delta k|\Delta|}{4} + \frac{1}{2}$. Letting $t$ be the number of ones in $x_2, \ldots, x_k$, we have that $\beta = t - (k - 1 - t) = 2t - k + 1$ and 
\[
\sum_{2 \leq i < j}x_ix_j = \binom{t}{2} + \binom{k - 1 - t}{2} - t(k - 1 - t) = 2t^2 - 2(k - 1)t + \binom{k - 1}{2},
\]
Recalling that $E = \frac{\delta^2k^2}{2} - \frac{k}{2} + 1$, this implies that
\begin{align*}
\Delta = \frac{\sum_{2 \leq i < j}x_ix_j - E}{E} & = \frac{1}{E}\left(2t^2 - 2(k-1)t+\binom{k-1}{2} - \frac{\delta^2k^2}{2} + \frac{k}{2} - 1\right) \\
& = \frac{1}{E}\left(2t^2 - 2(k-1)t + \frac{(1-\delta^2)k^2}{2} - k\right)
\end{align*}
Since $E > \frac{\delta^2k^2}{2} - \frac{k}{2}$, we have
\begin{align*}
    \frac{\delta^2 k - 1}{4}|\Delta| & =  \frac{\delta^2 k - 1}{4E}\left|2t^2 - 2(k-1)t + \frac{(1-\delta^2)k^2}{2} - k\right| \\
    & < \frac{1}{2k}\left|2t^2 - 2(k-1)t + \frac{(1-\delta^2)k^2}{2} - k\right| \\
    & = \left|\frac{t^2}{k} - \frac{(k-1)t}{k} + \frac{(1-\delta^2)k}{4} - \frac{1}{2}\right|
\end{align*}

We will show that $\delta k\alpha + \beta \geq \left|\frac{t^2}{k} - \frac{(k-1)t}{k} + \frac{(1-\delta^2)k}{4} - \frac{1}{2}\right| + \frac{1}{2}$, from which our lemma will follow. To this end, we show that $\delta k\alpha + \beta \geq \left(\frac{t^2}{k} - \frac{(k-1)t}{k} + \frac{(1-\delta^2)k}{4} - \frac{1}{2}\right) + \frac{1}{2}$ and $\delta k\alpha + \beta \geq -\left(\frac{t^2}{k} - \frac{(k-1)t}{k} + \frac{(1-\delta^2)k}{4} - \frac{1}{2}\right) + \frac{1}{2}$.

\begin{itemize}
    \item $\delta k\alpha + \beta \geq \left(\frac{t^2}{k} - \frac{(k-1)t}{k} + \frac{(1-\delta^2)k}{4} - \frac{1}{2}\right) + \frac{1}{2}$. 
    
    We have two cases, $\alpha = 1$ or $\alpha = -1$. If $\alpha = 1$, then $\delta k\alpha + \beta = \delta k + 2t - k + 1$ and since it's a satisfying assignment we have $t \geq \frac{1-\delta}{2}k$. The inequality becomes
    \[
    \frac{t^2}{k} - \frac{(k-1)t}{k} -2t + \frac{(1-\delta^2)k}{4} -\delta k + k - 1 \leq 0.
    \]
    The left hand side is a quadratic function on $t$ with positive leading coefficient, and to check it's non-positive we simply need to check its values on $t = \frac{1-\delta}{2}k$ and $t = k - 1$, the boundary points of $t$'s domain. When $t = \frac{1-\delta}{2}k$,
    \begin{align*}
    &\frac{t^2}{k} - \frac{(k-1)t}{k} -2t + \frac{(1-\delta^2)k}{4} -\delta k + k - 1 = \\
&\left(\frac{t^2}{k} - t + \frac{(1-\delta^2)k}{4}\right) + \left(-2t -\delta k + k\right) + \left(\frac{t}{k} - 1\right) = \\
&\left(\frac{{(1 - \delta)^2}k}{4} - \frac{(1-\delta)k}{2} + \frac{(1-\delta^2)k}{4}\right) + 0 + \left(\frac{1-\delta}{2} - 1\right) = \\
&\frac{1-\delta}{2} - 1 < 0
    \end{align*}
    When $t = k - 1$,
    \begin{align*}
    &\frac{t^2}{k} - \frac{(k-1)t}{k} -2t + \frac{(1-\delta^2)k}{4} -\delta k + k - 1 = \\
    &\left(\frac{t^2}{k} - \frac{(k-1)t}{k}\right) + \left(-t + k - 1\right) + \left(-t - {\delta}k + \frac{(1-\delta^2)k}{4}\right) = \\
    &-\frac{3 - 4\delta + {\delta}^2}{4}k + 1
    \end{align*}
    which is negative when $k$ is sufficiently large (note that $\delta \leq 1 - \frac{3}{k}$).
    
    If $\alpha = -1$, then $\delta k\alpha + \beta = -\delta k + 2t - k + 1$ and we have that $t \geq \frac{1 + \delta}{2}k$. The inequality becomes
    \[
    \frac{t^2}{k} - \frac{(k-1)t}{k} -2t + \frac{(1-\delta^2)k}{4} +\delta k + k - 1 \leq 0.
    \]
    We check the value of LHS on $t = \frac{1 + \delta}{k}$ and $t = k - 1$. Following exactly the same argument we used for $\alpha = 1$ except that $\delta$ is replaced by $-\delta$, when $t = \frac{1 + \delta}{2}k$,
    \[
    \frac{t^2}{k} - \frac{(k-1)t}{k} -2t + \frac{(1-\delta^2)k}{4} +\delta k + k - 1 = 
    \frac{1 + \delta}{2} - 1 < 0
    \]
    and when $t = k - 1$,
    \[
    \frac{t^2}{k} - \frac{(k-1)t}{k} -2t + \frac{(1-\delta^2)k}{4} +\delta k + k - 1 = 
    -\frac{3 - 4\delta + \delta^2}{4}k + 1 < 0
    \]
    \item $\delta k\alpha + \beta \geq -\left(\frac{t^2}{k} - \frac{(k-1)t}{k} + \frac{(1-\delta^2)k}{4} - \frac{1}{2}\right) + \frac{1}{2}$. 
    
    We again have two cases, $\alpha = 1$ or $\alpha = -1$. If $\alpha = 1$, we have $\delta k\alpha + \beta = \delta k + 2t - k + 1$ and the inequality becomes
    \[
    \frac{t^2}{k} + \left(2 - \frac{k - 1}{k}\right)t + \frac{(1-\delta^2)k}{4} + \delta k - k \geq 0.
    \]
    The left hand side is a quadratic function that achieves minimum when $t$ is negative, so we simply need the inequality to hold when $t = \frac{1 - \delta}{2}k$, at which point the value of LHS is $\frac{1 - \delta}{2} \geq 0$.
    
    If $\alpha = -1$, the inequality becomes
    \[
    \frac{t^2}{k} + \left(2 - \frac{k - 1}{k}\right)t + \frac{(1-\delta^2)k}{4} - \delta k - k \geq 0.
    \]
    Again, we simply need it to hold when $t = \frac{1 + \delta}{2}k$, at which point the value of LHS is $\frac{1 + \delta}{2} \geq 0$.
\end{itemize} 
This completes our proof.
\end{proof}
\begin{proof}[Proof of Lemma~\ref{lem:hconditions}]
Since $h'(1) = h''(1) = 0$, by Corollary~\ref{cor:zeroderivatives} the contribution of degree $\geq 3$ terms becomes
\[
k\left(\beta - \frac{\alpha}{\delta}\right)h(1 + \Delta) + O(k)\cdot \Delta + O(1).
\]

Now we add in the contribution of degree 1 terms. Since $\hat{P}_P$ is extremely close to $1$ and $\hat{P}_C$ is exponentially small, the contribution of degree 1 terms is extremely close to ${c_1}\alpha = (\delta k^2 + k/\delta)\alpha$. Adding this to the contribution from the higher degree terms, we get that
\[
\sum_{I \subseteq [k]:I \neq \emptyset}{\hat{P}_{I}\E[x_I]} = k\left(\beta - \frac{\alpha}{\delta}\right)h(1 + \Delta) + \left(\delta k^2 + \frac{k}{\delta}\right)\alpha + O(k)\cdot \Delta + O(1).\tag{$*$}
\]
To establish the theorem, we need to show that $(*)$ is positive over the entire KTW polytope. We proceed with a case analysis on $\Delta$. Note that the range of $\Delta$ is approximately $(-1 - O(1/k), 1/\delta^2)$. We have the following cases.
\begin{itemize}
    \item $\Delta \geq -0.55$. In this case we have
    \begin{align*}
        (*) & = k\left(\beta - \frac{\alpha}{\delta}\right)h(1 + \Delta) + \left(\delta k^2 + \frac{k}{\delta}\right)\alpha + O(k)\cdot \Delta + O(1) \\
        & = k(\delta k \alpha + \beta) + k\left(\beta - \frac{\alpha}{\delta}\right)(h(1 + \Delta) - 1) + O(k)\cdot \Delta + O(1)\\
        & \geq k\left(\frac{(\delta^2 k - 1)|\Delta|}{4} + \frac{1}{2}\right) + k\left(\beta - \frac{\alpha}{\delta}\right)(h(1 + \Delta) - 1) + O(k)\cdot \Delta + O(1)
    \end{align*}
    The last inequality is due to Lemma~\ref{lem:linear2}. Here the two terms which are quadratic in $k$ are ${\delta}^{2}k^2|\Delta|/4$ (note that $\delta > \delta_0$ is at least a constant) and $k\beta(h(1 + \Delta) - 1)$. Since $|\beta| < k$ and $|h(1 + \Delta) - 1| \leq \frac{\delta_0^2|\Delta|}{5}$, the above quantity is positive when $k$ is sufficiently large.
    \item $\Delta < -0.55$. Note that if $x_1 = -1$ then the minimum value of $\Delta$ is about 0. If $x_1 = 1$, then the minimum value of $\Delta$ is about $-1$. This means that when $\Delta < -0.55$, with probability $> 0.5$ we have $x_1 = 1$, which implies $\alpha > 0$ and is $\Omega(1)$. We can write $(*)$ as
    \[
    (*) = k(\delta k\alpha + \beta)h(1 + \Delta) + \left(\delta k^2 + \frac{k}{\delta}\right)\alpha(1 - h(1 + \Delta)) + O(k)\cdot\Delta + O(1).
    \]
    If $\Delta \geq -1$, then $h(1 + \Delta) \in [0,1]$ and both the first two terms are positive and at least one of the two terms is $\Omega(k^2)$. If $\Delta \leq -1$ then since $\Delta \geq -1 - O(1/k)$ we know that the first term is $O(k)$ and the second term is $\Omega(k^2)$ and positive. Either way, we get a positive value when $k$ is sufficiently large. 
\end{itemize}
\end{proof}

\subsection{Choosing the Rounding Polynomial}

In this subsection, we construct a polynomial that satisfies the conditions in Lemma~\ref{lem:hconditions}. We will first show that $h(x) = 1 - (1 - x)^3\exp(-Bx)$ works for some constant $B$ except that it's not a polynomial. We then show that by truncating the Taylor expansion of this function we can get a polynomial which also works.

\begin{lemma}\label{lem:before_truncate}
Let $h(x) = h_1(x) = 1 - (1 - x)^3\exp(-Bx)$, where $B = \max\left(\frac{5}{\delta_0}, \frac{1}{0.45}\ln\frac{5}{\delta_0^2}\right)$. Then $h$ satisfies the conditions in Lemma~\ref{lem:hconditions} except that it is not a polynomial.
\end{lemma}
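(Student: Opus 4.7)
The plan is to verify each of the three conditions of Lemma~\ref{lem:hconditions} for $h(x) = 1 - (1-x)^3 e^{-Bx}$ directly, relying on the identity $h(1+\Delta) - 1 = \Delta^3 e^{-B(1+\Delta)}$ obtained by substituting $x = 1+\Delta$ and using $(1 - (1+\Delta))^3 = -\Delta^3$. This rewriting turns each condition into a one-variable inequality in $\Delta$ with an exponential damping factor, and the three conditions cleanly decouple.

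Conditions (1) and (3) require essentially no work. For (1), $1 - h(x) = (1-x)^3 e^{-Bx}$ has a triple zero at $x = 1$, so $h'(1) = h''(1) = 0$ is immediate without computing derivatives. For (3), on $\Delta \in [-1, -0.55]$ the factor $\Delta^3$ is negative, giving $h(1+\Delta) \leq 1$ for free, while $|\Delta|^3 \leq 1$ together with $1 + \Delta \geq 0$ yields $|\Delta|^3 e^{-B(1+\Delta)} \leq 1$, hence $h(1+\Delta) \geq 0$.

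Condition (2) is where the real work lies: after cancelling a common factor of $|\Delta|$, it reduces to showing
\[
|\Delta|^2 e^{-B(1+\Delta)} \leq \frac{\delta_0^2}{5} \qquad \text{for every } \Delta \in [-0.55, 1/\delta_0^2].
\]
I would split the interval at $\Delta = 0$, matching the two arguments of the maximum that defines $B$. On $[-0.55, 0]$, use the crude bounds $|\Delta|^2 \leq 0.3025$ and $e^{-B(1+\Delta)} \leq e^{-0.45 B}$, which convert the inequality into $e^{0.45 B} \geq 1.5125/\delta_0^2$; this is implied by $B \geq (1/0.45)\ln(5/\delta_0^2)$. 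On $[0, 1/\delta_0^2]$, find the unconstrained maximum of $\Delta^2 e^{-B(1+\Delta)}$ on $[0,\infty)$ by elementary calculus: it occurs at $\Delta^\ast = 2/B$ with value $4 e^{-B-2}/B^2$, and $\Delta^\ast \leq 1/\delta_0^2$ follows from $B \geq 5/\delta_0 > 2\delta_0^2$ (using $\delta_0 < 1$). The required inequality then becomes $20 \leq B^2 \delta_0^2 e^{B+2}$, which holds with large margin because $B \geq 5/\delta_0$ forces $B^2 \delta_0^2 \geq 25$ and $e^{B+2} \geq e^2$.

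There is no serious obstacle here; the only design choice is to take $B$ large enough that both sub-cases of condition (2) are handled simultaneously, which is precisely why $B$ is defined as the maximum of $5/\delta_0$ and $(1/0.45)\ln(5/\delta_0^2)$. The first threshold controls the positive-$\Delta$ regime, where $\Delta$ can be as large as $1/\delta_0^2$ but exponential decay kicks in, while the second threshold controls the negative-$\Delta$ regime, where $1+\Delta$ is bounded away from zero only by $0.45$ and logarithmic growth of $B$ in $1/\delta_0^2$ is needed to beat the prefactor.
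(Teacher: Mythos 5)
Your proof is correct and follows essentially the same strategy as the paper: verify condition (1) from the triple zero of $1-h$, reduce condition (2) to bounding $g(\Delta)=\Delta^2 e^{-B(1+\Delta)}$ and use calculus to locate the critical point at $\Delta^\ast=2/B$, and handle condition (3) by a direct sign/size argument. The only minor divergences are that for condition (2) you split at $\Delta=0$ and use a crude bound on $[-0.55,0]$ where the paper simply observes that the interior critical point and the left endpoint $-0.55$ are the only candidates for the maximum, and for condition (3) you replace the paper's monotonicity-plus-endpoints argument with the even simpler observation that $|\Delta|^3\le 1$ and $e^{-B(1+\Delta)}\le 1$ on $[-1,-0.55]$; both are equivalent in substance.
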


Substituting $x$ by $1 + \Delta$, we have $h(1 + \Delta) = 1 + \Delta^3\exp(-B(1 + \Delta))$. Below is a plot of this function.

\begin{figure}[H]
\centering
\begin{tikzpicture}[scale = 2]
      \draw[->] (-0.1,0) -- (2.5,0) node[right] {$\Delta$};
      \draw[->] (0,-0.1) -- (0,1.5) node[above] {multiple of $\beta$};
      \draw[domain=-0.1:2.5,smooth,variable=\x,blue] plot ({\x},{1});
      \draw[domain=-0.02:2.5,smooth,variable=\x,red] plot ({\x},{1 - (1 - \x)^3 * (2.71828)^(-4 * \x)});
      \draw[dotted] (1, -0.1) -- (1, 2);
      \node[red] at (1.5, 1.3) {$h = 1 + \Delta^3\exp(-B(1 + \Delta))$};
      \node at (-0.3, 1) {$1$};
      \node at (-0.3, 0) {$0$};
      \node at (1, -0.3) {$0$};
      \node at (0, -0.3) {-$1$};
\end{tikzpicture}
\caption{Plot of $h = 1 + \Delta^3\exp(-B(1 + \Delta))$.}
\end{figure}
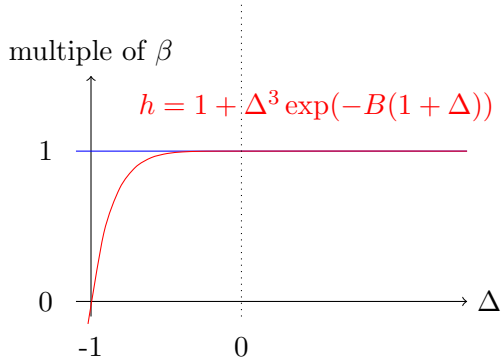

\begin{proof}[Proof of Lemma~\ref{lem:before_truncate}]
Item 1 in Lemma~\ref{lem:hconditions} is clearly satisfied. We now prove item 2. We need to show that
    \[
    \frac{\delta_0^2|\Delta|}{5} \geq |h(1 + \Delta) - 1| = |\Delta^3\exp(-B(1 + \Delta))|,
    \]
    that is,
    \[
    \Delta^2\exp(-B(1 + \Delta)) \leq \frac{\delta_0^2}{5}.
    \]
    Let $g(\Delta) = \Delta^2\exp(-B(1 + \Delta))$, we have $g'(\Delta) = \Delta(2 - B\Delta)\exp(-B(1 + \Delta))$. Since $\Delta \geq -0.55$, the maximum of $g$ is either $g(-0.55)$ or $g(2/B)$. Since $B = \max\left(\frac{5}{\delta_0}, \frac{1}{0.45}\ln\frac{5}{\delta_0^2}\right)$, we have 
    \[
    g(2/B) = \frac{4}{B^2}\exp(-B - 2)< \frac{4}{B^2} \leq \frac{4}{(5/\delta_0)^2} < \frac{\delta_0^2}{5},
    \]
    and 
    \[
    g(-0.55) = 0.55^2\exp(-0.45B) \leq 0.55^2 \cdot \frac{\delta_0^2}{5} < \frac{\delta_0^2}{5}.
    \]
Note that we actually obtained strict inequality in both cases. For item 3, notice that $h(1 + \Delta)$ is monotone for $\Delta \in [-1, 0]$ and $h(0) = 0, h(1) = 1$.
\end{proof}
We now truncate the Taylor expansion of the function in the above lemma.
\begin{lemma}\label{lem:after_truncate}
There exists $m \in \mathbb{N}$ such that, $h(x) = h_2(x) = 1 - (1 - x)^3\sum_{l = 0}^m\frac{(-Bx)^l}{l!}$ satisfies the conditions in Lemma~\ref{lem:hconditions} where $B = \max\left(\frac{5}{\delta}, \frac{1}{0.45}\ln\frac{5}{\delta^2}\right)$.
\end{lemma}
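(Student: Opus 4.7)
The plan is to verify that $h_2$ inherits the three conditions of Lemma~\ref{lem:hconditions} from $h_1$ once $m$ is chosen sufficiently large, by exploiting uniform convergence of the truncated Taylor series. Condition 1 holds for $h_2$ automatically, independently of $m$: since $h_2(x) = 1 - (1-x)^3 T_m(-Bx)$ where $T_m(y) = \sum_{l=0}^m y^l/l!$ is a polynomial, the triple zero of $(1-x)^3$ at $x=1$ forces $h_2'(1) = h_2''(1) = 0$, for exactly the same reason that $h_1$ satisfies condition 1. The quantitative tool for the remaining conditions is the Lagrange remainder bound $|R_m(-Bx)| := |\exp(-Bx) - T_m(-Bx)| \leq (Bx)^{m+1}/(m+1)!$, which is valid for $x \geq 0$ because the exponential factor in the Lagrange form is at most $1$ when the argument $-Bx$ is non-positive. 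In particular $|h_1(x) - h_2(x)| = (1-x)^3 |R_m(-Bx)|$ tends to $0$ uniformly on any bounded interval as $m \to \infty$.

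Condition 2 follows because the proof of Lemma~\ref{lem:before_truncate} actually established $|\Delta|^2 \exp(-B(1+\Delta)) < \delta_0^2/5$ \emph{strictly} on the compact set $[-0.55, 1/\delta_0^2]$, so there is a uniform gap $\eta > 0$ with $|\Delta|^2 \exp(-B(1+\Delta)) \leq \delta_0^2/5 - \eta$. Writing $h_2(1+\Delta) - 1 = \Delta^3 T_m(-B(1+\Delta))$, the triangle inequality gives $|h_2(1+\Delta) - 1|/|\Delta| \leq |\Delta|^2 \exp(-B(1+\Delta)) + |\Delta|^2 |R_m(-B(1+\Delta))|$, and the second term vanishes uniformly on the compact $\Delta$-interval, yielding the condition once $m$ is large enough.

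The main obstacle is the lower bound $h_2(1+\Delta) \geq 0$ in condition 3 on $\Delta \in [-1,-0.55]$ (equivalently $x \in [0, 0.45]$): because $h_1(0) = 0$, we cannot simply invoke $h_2 \geq h_1 - (\text{small})$. The fix exploits two facts: that $h_2(0) = 1 - T_m(0) = 0$ exactly, independent of $m$, and that our Lagrange bound carries an $x^{m+1}$ factor. A direct computation gives $h_1'(0) = 3 + B > 0$, and combined with the monotonicity of $h_1$ on $[0, 0.45]$ already noted in Lemma~\ref{lem:before_truncate}, this yields a constant $c_1 > 0$ with $h_1(x) \geq c_1 x$ throughout the interval. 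Therefore $h_2(x) \geq c_1 x - (Bx)^{m+1}/(m+1)! = x\bigl(c_1 - B^{m+1} x^m/(m+1)!\bigr)$, which is non-negative once $m$ is large enough that $(0.45\,B)^m B/(m+1)! < c_1$. The upper bound $h_2 \leq 1$ on the same interval follows directly from $h_1 \leq 1 - 0.55^3 e^{-0.45 B} < 1$ together with uniform convergence. Once all three conditions hold for $h_2$, Lemma~\ref{lem:hconditions} applies and completes the proof.
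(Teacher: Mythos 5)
Your proof is correct, and on the most delicate point it is more careful than the paper's own argument. The paper establishes condition 3 near $x=0$ by noting $h_1(0)=h_2(0)=0$ and $h_1'(0)=h_2'(0)=3+B$, concluding that $h_2 \in [0,1]$ on some $(0,\epsilon]$ and then handling $[\epsilon, 0.45]$ via uniform convergence with $\eta < \min(h_1(\epsilon), 1-h_1(0.45))$. As written, that argument leaves implicit why $\epsilon$ can be chosen independently of $m$ (one needs a uniform bound on $h_2''$ over all $m$, which holds but is not stated). Your route via the Lagrange remainder $|R_m(-Bx)| \leq (Bx)^{m+1}/(m+1)!$ sidesteps this entirely: the $x^{m+1}$ factor makes the error vanish faster than linearly at $x=0$, so $h_2(x) \geq c_1 x - (Bx)^{m+1}/(m+1)! = x\bigl(c_1 - B^{m+1}x^m/(m+1)!\bigr) \geq 0$ once $(0.45B)^m B/(m+1)! \leq c_1$, with no dependence of the small-$x$ regime on $m$. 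Your treatment of conditions 1 and 2 matches the paper's in spirit (condition 1 from the triple zero of $(1-x)^3$, condition 2 from the strict inequality in Lemma~\ref{lem:before_truncate} plus compactness), with the Lagrange bound again replacing the qualitative uniform-convergence invocation.

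One small point worth tightening: you justify $h_1(x) \geq c_1 x$ on $[0,0.45]$ by citing $h_1'(0)=3+B>0$ and monotonicity. Monotonicity alone gives only $h_1 \geq 0$; the clean way to get a linear lower bound is to observe that $g(x) = h_1(x)/x$ extends continuously to $[0,0.45]$ with $g(0)=h_1'(0)>0$, is strictly positive throughout (since $(1-x)^3 e^{-Bx} < 1$ for $x \in (0,1)$), and hence attains a positive minimum $c_1$ on the compact interval. This is the implicit content of what you wrote, but the compactness step is the load-bearing part.
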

\begin{proof}
First of all, this polynomial has no constant term as $h_2(0) = 0$. It is also straightforward to see that $h'_2(1) = h''_2(1) = 0$. Since the Taylor expansion of the exponential function is uniformly convergent\footnote{For the notion of uniform convergence, see for example in~\cite{rudin_principles_1976}. The uniform convergence of the Taylor expansion of $\exp(x)$ can be easily obtained by Weierstrass test.} on any bounded interval, for any $\eta > 0$ we can choose $m \in \mathbb{N}$ such that for every $\Delta \in [-2, 1/\delta^2]$.
\[
\left|\exp(-B(1 + \Delta)) - \sum_{l = 0}^m\frac{(-B(1 +\Delta))^l}{l!}\right| \leq \eta.
\]
We now verify the second and third items in Lemma~\ref{lem:hconditions}. For item 2, We need to show that
    \[
    \frac{\delta_0^2|\Delta|}{5} \geq |h(1 + \Delta) - 1| = \left|\Delta^3\left(\sum_{l = 0}^m\frac{(-Bx)^l}{l!}\right)\right|,
    \]
    that is,
    \[
    \Delta^2\left|\sum_{l = 0}^m\frac{(-Bx)^l}{l!}\right| \leq \frac{\delta_0^2}{5}.
    \]
    In the proof of previous lemma we showed that $\Delta^2\exp(-B(1 + \Delta)) < \frac{\delta_0^2}{5}$, which means that we can choose $\eta$ and $m$ appropriately so that the above inequality is still satisfied. For item 3, notice that $h_1(0) = h_2(0) = 0$ and as long as $m \geq 1$, $h_1'(0) = h_2'(0)$, so $h_2$ will be in $[0, 1]$ on $(0,\epsilon]$ for some small $\epsilon$. Then, if we choose $\eta < \min(h_1(\epsilon), 1 - h_1(1-0.55))$, we can make sure that $h_2$ is in $[0, 1]$ on $[\epsilon, 1-0.55]$ as well.
\end{proof}

By Lemma~\ref{lem:after_truncate}, if we choose $h_2$ as the rounding polynomial, our rounding scheme will have a positive expected value for any point in the KTW polytope. This completes the proof of Theorem~\ref{thm:3}, which then implies Theorem~\ref{thm:2}.

\section{Evidence for the Necessity of Pairwise Biases}
We have now given rounding schemes for almost all presidential predicates. These rounding schemes crucially use the pairwise biases $\{b_{ij}: i < j \in [k]\}$. A natural question is whether this is necessary or it is possible to only use the biases $\{b_i: i \in [k]\}$. If there is a rounding scheme which only uses the biases, then instead of using a semidefinite program, it is sufficient to use a linear program, which is much faster. Indeed, such rounding schemes exist for predicates which are close to the majority function \cite{hast_beating_2005} and for the monarchy predicate \cite{austrin_approximation_2009, potechin_approximation_2018}. 

In this section, we give evidence that this is not possible for more general presidential type predicates and it is necessary to use the pairwise biases. In particular, we prove the following theorem.


Recall that we choose a rounding scheme by specifying $f_a(b_{i_1}, \ldots, b_{i_a}, b_{i_1i_2}, \ldots, b_{i_{a-1}i_a})$ for each $a \in [k]$.
\begin{definition}
We say that a rounding scheme has degree $m$ if $f_{m} \neq 0$ and $f_{a} = 0$ for all $a > m$.
\end{definition}
\begin{definition}
We say that a rounding scheme does not use pairwise biases if for all $a \in [k]$, $f_a(b_{i_1}, \ldots, b_{i_a}, b_{i_1i_2}, \ldots, b_{i_{a-1}i_a})$ only depends on $\{b_{i_1}, \ldots, b_{i_a}\}$.
\end{definition}
\begin{theorem}\label{thm:pairwisenecessity}
For all $\delta_0 > 0$ and all $m \in \mathbb{N}$, there exists a $k_0$ such that for all $k \geq k_0$ and $\delta \in (\delta_0,1 - 4/k]$ where ${\delta}k + k - 1$ is an odd integer, the presidential type predicate $P(x) = \sign\left({\delta}k{x_1} + \sum_{i=2}^{k}{x_i}\right)$ cannot be approximated by any rounding scheme of degree at most $m$ which does not use pairwise biases.
\end{theorem}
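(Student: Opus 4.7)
The plan is to exhibit a single point of the KTW polytope $KTW_P$ at which any degree-$\leq m$ rounding scheme without pairwise biases gives exactly zero gain over random guessing. Since the approximation guarantee of any SDP rounding scheme is governed by its worst-case gain over $KTW_P$, this will show that no such rounding scheme approximates $P$. The candidate is the symmetric point given by the uniform distribution $D$ on $\{x : P(x) = 1\}$: because $P$ is odd, this solution set is closed under $x \mapsto -x$, so $D$ is invariant under global sign flip, forcing $\E_D[x_i] = 0$ for every $i \in [k]$. The pairwise moments $\E_D[x_i x_j]$ are generally nonzero, but the resulting vector $(0, \ldots, 0, \E_D[x_i x_j])_{i < j}$ lies in $KTW_P$ by construction.

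Next I will use the symmetry properties of the rounding scheme to show its gain vanishes at this point. By Theorem~\ref{thm:rounding} together with the definition of degree, any such scheme induces $\E[x_I] = f_{|I|}(b_{i_1}, \ldots, b_{i_{|I|}})$ for $|I| \leq m$ and $\E[x_I] = 0$ for $|I| > m$, where each $f_a$ satisfies $f_a(s_1 b_1, \ldots, s_a b_a) = \left(\prod_{j=1}^a s_j\right) f_a(b_1, \ldots, b_a)$. Setting all $b_j = 0$ and all $s_j = -1$ gives $f_a(0, \ldots, 0) = (-1)^a f_a(0, \ldots, 0)$, so $f_a(0, \ldots, 0) = 0$ for every odd $a$. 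At the point from the first paragraph this forces $\E[x_I] = 0$ for every odd $|I|$, and since $P$ is odd we also have $\hat{P}_I = 0$ for every even $|I|$. Hence $\sum_{I \neq \emptyset} \hat{P}_I \E[x_I] = 0$ at this point, and the rounding succeeds with probability exactly $r_P = 1/2$.

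The main obstacle is then translating this polytope-level vanishing into a genuine non-approximability statement for CSP instances, which I plan to do by producing a family of nearly satisfiable instances whose SDP solution is forced by symmetry to place (almost) every constraint's local view at the zero-bias point above. A natural candidate takes $n \gg k$ variables with a symmetric family of constraints built from $P$ so that the optimal SDP solution is invariant under variable permutations and hence has all biases equal to $0$; combined with Raghavendra's SDP-to-rounding correspondence this will yield the desired conclusion. The hypothesis $\delta \leq 1 - 4/k$ is important here: it keeps $P$ strictly away from the dictator and monarchy regimes, the latter being precisely where bias-only schemes succeed~\cite{austrin_quadratic_2010, potechin_approximation_2018}, and it ensures that the symmetric point is a genuinely internal point of $KTW_P$ rather than a coordinate vertex that the rounding scheme could recover directly from the biases. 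Verifying this instance construction in detail, rather than the polytope-level calculation of the previous two paragraphs, will be the technical core of the argument.
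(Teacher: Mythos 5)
There is a genuine error in the first step of your proposal, and it sinks the whole approach. You claim that because $P$ is odd, the set of solutions $\{x : P(x)=1\}$ is closed under $x \mapsto -x$, and therefore the uniform distribution $D$ on solutions has $\E_D[x_i]=0$ for all $i$. In fact, oddness of $P$ gives exactly the opposite: if $P(x)=1$ then $P(-x)=-1$, so $-x$ is \emph{not} a solution and $D$ is certainly not flip-invariant. For a presidential predicate this is easy to see concretely: every solution satisfies ${\delta}k\,x_1 + \sum_{i\geq 2} x_i \geq 1$, and this inequality is linear, so every point in the KTW polytope satisfies ${\delta}k\,b_1 + \sum_{i\geq 2} b_i \geq 1$. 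The all-zero-bias point violates this (it gives $0\geq 1$), so it simply is not in $KTW_P$. Your second paragraph (the sign-flip computation $f_a(0,\ldots,0)=(-1)^a f_a(0,\ldots,0)$ and the observation that $\hat P_I = 0$ for even $|I|$) is correct, but it is applied to a point that is outside the feasible region, so the argument proves nothing. Note also that if your claim were true as stated, it would apply uniformly to \emph{all} odd predicates and all degrees, which would contradict the known fact that the monarchy predicate is approximable with a bias-only rounding scheme.

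The paper's actual proof circumvents this by not relying on a single point. It sets up a zero-sum game in which Alice plays a distribution $\mu$ over (genuinely feasible) points of $KTW_P$, and it constructs $\mu$ so that $\E_{b\sim\mu}[R(b)] = 0$ for \emph{every} degree-$\leq m$ bias-only rounding scheme $R$, by matching contributions degree by degree. The points used are sparse $\{-1,0,1\}$ vectors such as $(0,1,0,\ldots,0)$, $(0,1,1,-1,0,\ldots)$, and the nearly-all-ones point $(-1,1,\ldots,1)$, and the probabilities are set so that the exponentially large coefficient $\hat P_P$ is balanced by an exponentially small weight $p_5$ while the middle-degree coefficients cancel via the asymptotic relation $\hat P_{P+(t-1)C}/\hat P_{tC}\to -1$. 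Two features of this construction are essential and missing from your proposal: (i) the points must actually lie in $KTW_P$, which requires $\delta$ bounded away from $1$ (the hypothesis $\delta\leq 1-4/k$); and (ii) because no single feasible point makes the sum vanish, one needs an averaging argument over several points rather than a single-point argument. Your third paragraph correctly flags the transfer to actual CSP instances as a gap you would still need to close, but the more pressing issue is the first-paragraph error.
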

\begin{proof}
Let us consider a two-player zero-sum game where Alice chooses a point $b \in [-1, 1]^k$ in the KTW polytope\footnote{The KTW polytope of $P$ actually has dimension $k + \binom{k}{2}$, but since pairwise biases play no role here, we omit those coordinates for simplicity.} of $P$ and Bob chooses a rounding scheme $R$ of degree at most $m$. The objective of Alice is to minimize $R(b) = \sum_{I \subseteq [k]:I \neq \emptyset}{\hat{P}_{I}\E[x_I]}$, the expected value of $P(x)$ if we are given the point $b$ in the KTW polytope and apply the rounding scheme $R$. 

The lemma will follow if we can show a mixed strategy for Alice, which is a distribution $\mu$ over points in KTW polytope, such that for any rounding scheme $R$, $\E_{b \sim \mu}[R(b)] = 0$. Recalling that for each $a \in [m]$ and monomial $x_{i_1}x_{i_2}\ldots x_{i_a}$ of degree $a$, $\E[x_{i_1}x_{i_2}\ldots x_{i_a}] = f_a(b_{i_1}, \ldots, b_{i_a})$, it suffices to have the sum of degree $a$ terms be zero for every $a \in [m]$, i.e.,
\[
\E_{b \sim \mu}\left[\sum_{\substack{I \subset [k], |I| = a\\I = \{i_1, \ldots, i_a\}}}{\hat{P}_I}f_a(b_{i_1}, \ldots, b_{i_a})\right] = 0, \quad \forall a \in [m].
\]

Now let us construct such a distribution $\mu$. By Lemma~\ref{lem:fourier_coeff}, $\hat{P}_P$, the Fourier coefficient of the president $x_1$, is exponentially larger than $\hat{P}_C$, and $\lim_{k \to \infty}\hat{P}_{P+(t-1)C} / \hat{P}_{tC} = -1$ for every odd integer $t \leq m$. 
For concreteness, let us assume that $m = 5$. Then we will have the following distribution for $\mu$:
\begin{center}
    \begin{tabular}{c|c|cccccccc}
        \text{Probability} & $x_1$ & $x_2$ & $x_3$ & $x_4$ & $x_5$  & $x_6$ & $x_7$ & $\cdots$ & $x_k$\\ \hline
        $p_1$ & 0 & 1 & 0 & 0 & 0 & 0 & 0 & $\cdots$ & 0 \\\hline
        $p_2$ & 0 & 1 & 1 & -1 & 0 & 0 & 0 & $\cdots$ & 0 \\\hline
        $p_3$ & 0 & 1 & 1 & 1 & 0 & 0 & 0 & $\cdots$ & 0 \\\hline
        $p_4$ & 0 & 1 & 1 & 1 & 1 & -1 & 0 & $\cdots$ & 0 \\\hline
        $p_5$ & -1 & 1 & 1 & 1 & 1 & 1 & 1 & $\cdots$ & 1 \\
    \end{tabular}   
\end{center}
First of all, it is easy to check that all these points are inside the KTW polytope for $P(x)$. The following is a table of contribution of each degree from each of these points.
\begin{center}
    \begin{tabular}{|c|c|c|c|c|c|}
    \hline \backslashbox{Degrees}{Points} & \text{1st type} & \text{2nd type}  & \text{3rd type}  & \text{4th type}  & \text{5th type} \\ \hline
    $f_1(1)$ & $p_1 \cdot \hat{P}_C$ & $p_2 \cdot \hat{P}_C$ & $p_3 \cdot 3\hat{P}_C$ & $p_4 \cdot 3\hat{P}_C$ & $p_5\cdot(-\hat{P}_P + (k-1)\hat{P}_C)$ \\ \hline
    $f_3(1,1,1)$ & 0 & $-p_2 \cdot \hat{P}_{3C}$ & $p_3 \cdot \hat{P}_{3C}$ & $-2p_4 \cdot \hat{P}_{3C}$ & $p_5\cdot(\binom{k-1}{2}\hat{P}_{P+2C} + \binom{k-1}{3}\hat{P}_{3C})$ \\ \hline
    $f_5(1,1,1,1,1)$ & 0 & 0 & 0 & $-p_4 \cdot \hat{P}_{5C}$ & $p_5\cdot(\binom{k-1}{4}\hat{P}_{P+4C} + \binom{k-1}{5}\hat{P}_{5C})$ \\ \hline
    
    \end{tabular}   
\end{center}
To balance degree 1 terms, we need
\[
p_1 \cdot \hat{P}_C + p_2 \cdot \hat{P}_C + p_3 \cdot 3\hat{P}_C + p_4 \cdot 3\hat{P}_C + p_5\cdot(-\hat{P}_P + (k-1)\hat{P}_C) = 0.
\]
Notice that every point in this distribution has a positive contribution from citizens (i.e., variables $x_2, \ldots, x_k$), so we need a negative contribution from $x_1$. Since $\hat{P}_P$ is exponentially larger than $\hat{P}_C$, we can achieve the balance by having $p_5$ be exponentially small in $k$. Then we balance degree 5 terms, for which we need
\[
-p_4 \cdot \hat{P}_{5C} + p_5\cdot\left(\binom{k-1}{4}\hat{P}_{P+4C} + \binom{k-1}{5}\hat{P}_{5C}\right) = 0.
\]
Recall that $\lim_{k \to \infty}\hat{P}_{P+4C} / \hat{P}_{5C} = -1$, so we can achieve the balance by having $p_4 = poly(k)\cdot p_5$, where $poly(k)$ is a polynomial in $k$. For degree 3 terms, we need
\[
-p_2 \cdot \hat{P}_{3C} + p_3 \cdot \hat{P}_{3C} - 2p_4 \cdot \hat{P}_{3C} + p_5\cdot\left(\binom{k-1}{2}\hat{P}_{P+2C} + \binom{k-1}{3}\hat{P}_{3C}\right) = 0.
\]
We can then use either the second type or the third type to balance degree 3 terms. Again we will only use $poly(k)\cdot p_5$ amount of probability. When $k$ is sufficiently large, $p_2 + p_3 + p_4 + p_5 \leq 1$ and we let the first type of points take up the remaining probability. This method can be easily extended to handle the case where $m$ is any fixed positive integer. 
\end{proof}
\begin{remark}
This argument fails for the monarchy predicate $P(x) = sign\left((k-2)x_1 + \sum_{i=2}^{k}{x_i}\right)$ for the following reason. The only satisfying assignment to the monarchy predicate where $x_1 = -1$ is when all of the other $x_i$ are $1$. This implies that for all $i \in [2,k]$, $b_i \geq -b_1$, which means that the point $b = (0,1,1,-1,0,\ldots,0)$ and similar points are not in the KTW polytope.
\end{remark}
\begin{remark}
This theorem rules out any fixed degree rounding schemes that use only biases, but it does not rule out the possibility that a rounding scheme might be able to succeed with just biases if its degree grows with $k$.
\end{remark}

\section{Conclusions}
In this paper, we showed that almost all presidential type predicates are approximable. To do this, we carefully constructed rounding schemes which have positive expected value over the entire KTW polytope. These rounding schemes use both the biases $\{b_i:i \in [k]\}$ and the pairwise biases $\{b_{ij}: i < j \in [k]\}$ and have relatively high (but still constant) degree.

This work raises a number of open questions, including the following:
\begin{enumerate}
    \item Which other types of predicates can this technique be applied to? For example, can we show that almost all oligarchy-type predicates are approximable, where oligarchy-type predicates are balanced LTFs where all but a few of the inputs have the same weight? 
    
    As another example, can we extend the result of Austrin, Bennabas, and Magen that all symmetric quadratic threshold functions with no constant term are approximable to show that almost all quadratic threshold functions with no constant term which are symmetric with respect to all but one variable are approximable or at least weakly approximable?
    \item Can we show that for almost all presidential type predicates, there is no rounding scheme which only uses the biases $\{b_i: i \in [k]\}$? Note that by Theorem ~\ref{thm:pairwisenecessity}, such rounding schemes would have to have degree which increases with $k$.
    \item Our results only hold if $k$ is sufficiently large. Is it true that all presidential type predicates are approximable? Less ambitiously, can we either extend our techniques or develop new techniques to handle presidential type predicates where $k$ is relatively small?
\end{enumerate}

\bibliographystyle{plain}
\bibliography{references}

\begin{thebibliography}{10}

\bibitem{austrin_quadratic_2010}
Per Austrin, Siavosh Benabbas, and Avner Magen.
\newblock On {Quadratic} {Threshold} {CSPs}.
\newblock In Alejandro López-Ortiz, editor, {\em {LATIN} 2010: {Theoretical}
  {Informatics}}, Lecture {Notes} in {Computer} {Science}, pages 332--343.
  Springer Berlin Heidelberg, 2010.

\bibitem{austrin_randomly_2011}
Per Austrin and Johan Håstad.
\newblock Randomly {Supported} {Independence} and {Resistance}.
\newblock {\em SIAM Journal on Computing}, 40(1):1--27, January 2011.

\bibitem{austrin_usefulness_2013}
Per Austrin and Johan Håstad.
\newblock On the usefulness of predicates.
\newblock {\em ACM Transactions on Computation Theory}, 5(1):1:1--1:24, May
  2013.

\bibitem{austrin_characterization_2013}
Per Austrin and Subhash Khot.
\newblock A characterization of approximation resistance for even k-partite
  {CSPs}.
\newblock In {\em Proceedings of the 4th conference on {Innovations} in
  {Theoretical} {Computer} {Science}}, {ITCS} '13, pages 187--196, Berkeley,
  California, USA, January 2013. Association for Computing Machinery.

\bibitem{austrin_approximation_2009}
Per Austrin and Elchanan Mossel.
\newblock Approximation {Resistant} {Predicates} from {Pairwise}
  {Independence}.
\newblock {\em Computational Complexity}, 18(2):249--271, June 2009.

\bibitem{chan_approximation_2016}
Siu~On Chan.
\newblock Approximation {Resistance} from {Pairwise}-{Independent} {Subgroups}.
\newblock {\em J. ACM}, 63(3):27:1--27:32, August 2016.

\bibitem{goemans_improved_1995}
Michel~X. Goemans and David~P. Williamson.
\newblock Improved {Approximation} {Algorithms} for {Maximum} {Cut} and
  {Satisfiability} {Problems} {Using} {Semidefinite} {Programming}.
\newblock {\em J. ACM}, 42(6):1115--1145, November 1995.

\bibitem{guruswami_tight_1998}
Venkatesan Guruswami, Daniel Lewin, Madhu Sudan, and Luca Trevisan.
\newblock A {Tight} {Characterization} of {NP} with 3-{Query} {PCPs}.
\newblock In {\em Proc. of the 39th {Symposium} on {Foundations} of {Computer}
  {Science}}, pages 8--17. IEEE, 1998.

\bibitem{hast_beating_2005}
Gustav Hast.
\newblock {\em Beating a {Random} {Assignment} : {Approximating} {Constraint}
  {Satisfaction} {Problems}}.
\newblock PhD thesis, KTH Royal Institute of Tehnology, Sweden, 2005.

\bibitem{hastad_optimal_2001}
Johan Håstad.
\newblock Some {Optimal} {Inapproximability} {Results}.
\newblock {\em J. ACM}, 48(4):798--859, July 2001.

\bibitem{karloff_78-approximation_1997}
Howard Karloff and Uri Zwick.
\newblock A 7/8-{Approximation} {Algorithm} for {MAX} {3SAT}?
\newblock In {\em Proceedings of the 38th {Annual} {Symposium} on {Foundations}
  of {Computer} {Science}}, {FOCS} '97, page 406, USA, October 1997. IEEE
  Computer Society.

\bibitem{khot_characterization_2013}
Subhash Khot, Madhur Tulsiani, and Pratik Worah.
\newblock A {Characterization} of {Approximation} {Resistance}.
\newblock {\em arXiv:1305.5500 [cs]}, May 2013.
\newblock arXiv: 1305.5500.

\bibitem{potechin_approximation_2018}
Aaron Potechin.
\newblock On the {Approximation} {Resistance} of {Balanced} {Linear}
  {Threshold} {Functions}.
\newblock {\em arXiv:1807.04421 [cs]}, July 2018.
\newblock arXiv: 1807.04421.

\bibitem{raghavendra_optimal_2008}
Prasad Raghavendra.
\newblock Optimal {Algorithms} and {Inapproximability} {Results} for {Every}
  {CSP}?
\newblock In {\em Proceedings of the {Fortieth} {Annual} {ACM} {Symposium} on
  {Theory} of {Computing}}, {STOC} '08, pages 245--254, New York, NY, USA,
  2008. ACM.

\bibitem{rudin_principles_1976}
Walter Rudin.
\newblock {\em Principles of {Mathematical} {Analysis}}.
\newblock McGraw-Hill, 1976.

\end{thebibliography}

\appendix
\section{Missing Proofs from Section 2}
{
\renewcommand{\thetheorem}{\ref{lem:fourier_coeff}}
\begin{lemma}[Restated]
Let $P(x_1, \ldots, x_k) = \sign(a\cdot x_1 + x_2 + \cdots + x_k)$ be a presidential type predicate where $a \leq k - 2$ and $a + k - 1$ is an odd integer. Let $\hat{P}_{tC}$ denote the Fourier coefficient of a set of $t$ citizens (indices from $2$ to $k$) and $\hat{P}_{P+tC}$ denote the Fourier coefficient of a set of $t$ citizens together with the president (index 1). Let $\tau = \lfloor (k - a - 1)/2 \rfloor$. We have
\begin{enumerate}[(1)]
    \item $\hat{P}_P = 1 - \frac{1}{2^{k - 2}}\sum_{l = 0}^\tau\binom{k-1}{l},$
    \item $\hat{P}_{tC} = \frac{1}{2^{k - 2}}\sum_{i = 0}^\tau\sum_{j = 0}^{\tau - i}(-1)^j\binom{k-t-1}{i}\binom{t}{j}, \qquad  \forall t (1 \leq t \leq k - 1 \wedge t \textrm{ is odd}),$
    \item $\hat{P}_{P+tC} = -\frac{1}{2^{k - 2}}\sum_{i = 0}^\tau\sum_{j = 0}^{\tau - i}(-1)^j\binom{k-t-1}{i}\binom{t}{j}, \qquad  \forall t (2 \leq t \leq k - 1 \wedge t \textrm{ is even}).$
\end{enumerate}
\end{lemma}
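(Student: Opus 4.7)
The plan is to compute each Fourier coefficient by a direct sum-over-satisfying-assignments argument combined with the symmetry of $P$ in the citizens. Since $P \in \{-1,1\}$ and $\E[x_I] = 0$ for nonempty $I$, one has $\hat{P}_I = 2 \cdot 2^{-k}\sum_{x: P(x)=1} x_I$. The symmetry of $P$ under permutations of $x_2,\ldots,x_k$ lets me fix $I = \{2,\ldots,t+1\}$ or $I = \{1,2,\ldots,t+1\}$ without loss of generality. Moreover $P$ is odd, so its Fourier coefficients vanish on subsets of even size; this forces $\hat{P}_{tC} = 0$ for even $t$ and $\hat{P}_{P+tC} = 0$ for odd $t$, matching the parity conditions in the statement.

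For $\hat{P}_P$, the parity assumption $a + k - 1$ odd yields $k - a$ even, so $\tau = \lfloor (k-a-1)/2\rfloor = (k-a-2)/2$, and a satisfying assignment has either $x_1 = 1$ with at least $\tau+1$ citizens voting $+1$, or $x_1 = -1$ with at least $k - \tau - 1$ citizens voting $+1$. Subtracting the two $x_1$-weighted counts and applying $\binom{k-1}{j} = \binom{k-1}{k-1-j}$ to fold the second range into the first yields $\hat{P}_P = 1 - 2^{-(k-2)}\sum_{l=0}^{\tau}\binom{k-1}{l}$.

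For $\hat{P}_{tC}$ and $\hat{P}_{P+tC}$, I would partition the $k-1$ citizen positions into the distinguished $t$-block $I\cap[2,k]$ and its complement. Letting $s$ and $r$ count $+1$s in each block, $\prod_{i=2}^{t+1} x_i = (-1)^{t-s}$, and the total number of $+1$ citizens is $s+r$. Setting
\[
f(m) := \sum_{s+r=m}\binom{t}{s}\binom{k-1-t}{r}(-1)^{t-s} = (-1)^t[z^m](1-z)^t(1+z)^{k-1-t},
\]
the bulk of the work reduces to two identities: the antisymmetry $f(k-1-m) = (-1)^t f(m)$, obtained by substituting $s \mapsto t-s$, and the vanishing $\sum_{m=0}^{k-1} f(m) = 0$, which is the generating function's value at $z=1$ (and holds for all $t \geq 1$).

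Now I finish by case analysis on the parity of $t$. For odd $t$ the contributions from $x_1 = \pm 1$ combine (with a sign) as $\sum_{m=\tau+1}^{k-1} f(m) - \sum_{m=0}^{\tau} f(m)$; antisymmetry together with the total-sum identity collapses this to $-2\sum_{m=0}^{\tau} f(m)$. Expanding $f$, substituting $(-1)^t = -1$, and reindexing via $(i,j) = (m-s, s)$ produces precisely $\hat{P}_{tC} = 2^{-(k-2)}\sum_{i=0}^{\tau}\sum_{j=0}^{\tau-i}(-1)^j\binom{k-t-1}{i}\binom{t}{j}$. For even $t$ the analogous computation for $\hat{P}_{P+tC}$ carries an extra sign from the $x_1$ factor and uses the opposite antisymmetry $f(k-1-m) = f(m)$, which exactly reverses the overall sign and gives the third formula. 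I expect the main obstacle to be the index-range and parity bookkeeping (verifying how $(k-a)/2$ and $\lfloor(k-a-1)/2\rfloor$ relate, which of the two sum ranges collapses by antisymmetry versus by the total-sum identity, and that the reindexing really produces the stated range $0 \leq j \leq \tau - i$); no deeper tool is required.
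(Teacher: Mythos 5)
Your proof is correct, and it takes a genuinely different route from the paper's. The paper works directly with $\hat{P}_I = 2^{-k}\sum_x P(x)x_I$, fixes $x_1$, counts the number of $+1$ votes inside and outside $I$, and then tracks the sign pattern of $P(x)x_I$ as the total crosses the threshold; the key simplification is the observation that $\sum_{j}(-1)^j\binom{t}{j} = 0$, which both kills the contribution when the outside votes already force $P(x)$ and lets the two sign-regimes inside $I$ fold into a single truncated sum. Your approach instead works with $\hat{P}_I = 2^{1-k}\sum_{x:P(x)=1}x_I$, packages the signed count into the coefficient sequence $f(m) = (-1)^t[z^m](1-z)^t(1+z)^{k-1-t}$, and isolates the two structural facts that do all the work: the reflection $f(k-1-m) = (-1)^t f(m)$ and the vanishing $\sum_m f(m) = 0$. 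The end computation (collapsing $\sum_{m\geq\tau+1}f(m) + \sum_{m\geq k-\tau-1}f(m)$ to $-2\sum_{m\leq\tau}f(m)$ for odd $t$, and the analogous calculation with the extra $x_1$ sign for even $t$, then reindexing $(i,j)=(r,s)$) lines up exactly with the stated formula. Your route has the advantage of exposing why the cancellation happens (it's a symmetry of a single polynomial), while the paper's is more bare-hands and avoids generating functions; both hinge on the same underlying identity $(1-1)^t=0$. Two small expository points worth tightening in a written version: state explicitly that the $x_1=-1$ range is $m \geq k-\tau-1$ before applying the reflection (your sentence currently reads as if the reflection step and the total-sum step are applied to the same intermediate expression, which makes it sound circular), and note that the parity observation $\hat{P}_I=0$ for $|I|$ even --- while true and aesthetically pleasing --- is not actually needed since the statement only asserts formulas for the parities where the coefficients are nonzero.
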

\addtocounter{theorem}{-1}
}
\begin{proof}[Proof of Lemma~\ref{lem:fourier_coeff}]
$ $
\begin{enumerate}[(1)]

    \item We have 
    \[
    \hat{P}_P = \E_{x \in \{-1, 1\}^k}P(x)x_1 = \frac{1}{2^k}\sum_{x \in \{-1, 1\}^k}P(x)x_1.
    \]
    We first choose how citizens vote. Note that if the vote is already determined by the citizens, then the contribution to the sum is 0. Suppose that at most $\tau = \lfloor (k - a - 1)/2 \rfloor$ citizens vote 1. Then, even if the president also votes 1, we have
    \[
    a\cdot x_1 + x_2 + \cdots + x_k = a + \tau - (k - 1 - \tau) = 2\tau - (k - a - 1) < 0.
    \]
    So no matter how the president votes, we always have $P(x) = -1$. Similarly, if at most $\tau$ citizens vote $-1$, then no matter how the president votes we always have $P(x) = 1$. These two cases contribute 0 to the sum. In the remaining scenarios, the vote of president determines the result, i.e., $P(x) = x_1$. This case contributes $1 - 2 \cdot\frac{1}{2^{k - 1}}\sum_{l = 0}^\tau\binom{k-1}{l} = 1 - \frac{1}{2^{k - 2}}\sum_{l = 0}^\tau\binom{k-1}{l}$.
    \item Let $I$ be a set of $t$ citizens where $t$ is an odd integer. By symmetry we have
    \[
    \hat{P}_{tC} = \E_{x \in \{-1, 1\}^k}P(x)x_I = \frac{1}{2^k}\sum_{x \in \{-1, 1\}^k}P(x)x_I.
    \]
    We analyze the sum as follows. 
    \begin{itemize}
        \item $x_1 = 1$. Assume that $i$ citizens from $\{2, 3, \ldots, k\} - I$ vote 1. If $i > \tau$, then the result is 1 no matter how people in $I$ vote, which means the contribution is 0. Now assume $i \leq \tau$. Let $j$ be the number of citizens from $I$ that vote 1. Note that $P(x) = 1$ if and only if $i + j > \tau$,
        so we have $P(x)x_I = (-1)^{t - j + 1} = (-1)^j$ if $j \leq \tau - i$ and $P(x)x_I = (-1)^{t - j} = (-1)^{j+1}$ if $j > \tau - i$. The contribution in this case is
        \[
        \sum_{i = 0}^\tau \binom{k - 1 - t}{i}\left(\sum_{j = 0}^{\tau - i}\binom{t}{j}(-1)^j + \sum_{j = \tau - i + 1}^t\binom{t}{j}(-1)^{j + 1}\right) = 2\sum_{i = 0}^\tau \binom{k - 1 - t}{i}\sum_{j = 0}^{\tau - i}\binom{t}{j}(-1)^j.
        \]
        The equality comes from the fact that $\sum_{j = 0}^t\binom{t}{j}(-1)^j = (1 - 1)^t = 0$.
        \item $x_1 = -1$. This case is symmetric. Note that since $P$ and $x_I$ are both odd, we have $P(x)x_I = P(-x)(-x)_I$. So the contribution of this case is also $2\sum_{i = 0}^\tau \binom{k - 1 - t}{i}\sum_{j = 0}^{\tau - i}\binom{t}{j}(-1)^j$.
    \end{itemize}
    Summing these contributions up, we obtain $\hat{P}_{tC} = \frac{1}{2^{k - 2}}\sum_{i = 0}^\tau\sum_{j = 0}^{\tau - i}(-1)^j\binom{k-t-1}{i}\binom{t}{j}$.
    \item The analysis in this case is almost similar to that in item (2). Let $I$ be a set of the president along with $t$ citizens where $t$ is an even integer. We have
    \[
    \hat{P}_{P + tC} = \E_{x \in \{-1, 1\}^k}P(x)x_I = \frac{1}{2^k}\sum_{x \in \{-1, 1\}^k}P(x)x_I.
    \]
    We analyze the sum as follows. 
    \begin{itemize}
        \item $x_1 = 1$. Again assume that $i$ citizens from $\{2, 3, \ldots, k\} - I$ vote 1. If $i > \tau$, then the contribution is 0. Now assume $i \leq \tau$. Let $j$ be the number of citizens from $I$ that vote 1. This time we have $P(x)x_I = (-1)^{t - j + 1} = (-1)^{j+1}$ if $j \leq \tau - i$ and $P(x)x_I = (-1)^{t - j} = (-1)^{j}$ if $j > \tau - i$. The contribution in this case is
        \[
        \sum_{i = 0}^\tau \binom{k - 1 - t}{i}\left(\sum_{j = 0}^{\tau - i}\binom{t}{j}(-1)^{j+1} + \sum_{j = \tau - i + 1}^t\binom{t}{j}(-1)^{j}\right) = -2\sum_{i = 0}^\tau \binom{k - 1 - t}{i}\sum_{j = 0}^{\tau - i}\binom{t}{j}(-1)^j.
        \]
        \item $x_1 = -1$. Similarly, the contribution of this case is also $-2\sum_{i = 0}^\tau \binom{k - 1 - t}{i}\sum_{j = 0}^{\tau - i}\binom{t}{j}(-1)^j$.
    \end{itemize}
    Summing these contributions up, we have $\hat{P}_{P+tC} = -\frac{1}{2^{k - 2}}\sum_{i = 0}^\tau\sum_{j = 0}^{\tau - i}(-1)^j\binom{k-t-1}{i}\binom{t}{j}$.
\end{enumerate}
\end{proof}

{
\renewcommand{\thetheorem}{\ref{prop:bias_sum}}
\begin{prop}[Restated] 
For every $l \geq 1$,
\begin{align*}
    \frac{l!}{E^l}S_{1, l} &= \beta(1 + \Delta)^l - \frac{S_{\{i_1, \{i_1, i_2\}\}}}{E}l(1 + \Delta)^{l - 1} - \frac{\beta S_{\{\{i_1, i_2\}, \{i_1, i_3\}\}}}{E^2}l(l-1)(1 + \Delta)^{l - 2} + O\left(\frac{1}{k}\right),\\
    \frac{l!}{E^l}S_{2, l} &= \alpha(1 + \Delta)^l + O\left(\frac{1}{k}\right),\\
    \frac{l!}{E^l}S_{3, l} &= \frac{\beta S_{\{\{\alpha, i_1\}\}}}{E}l(1 + \Delta)^{l-1} + O\left(\frac{1}{k}\right),\\
\end{align*}
where the hidden constants in big-O may depend on $l$.
\end{prop}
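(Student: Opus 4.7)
The plan is to approximate each $S_{j,l}$ by a product of simpler bias sums and to correct for coincident indices via inclusion--exclusion. All three estimates follow from the same core calculation, which I will sketch for $S_{1,l}$ (the most intricate case) and then adapt to $S_{2,l}$ and $S_{3,l}$.

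Set $T = S_{\{\{i_1,i_2\}\}} = E(1+\Delta)$. The unrestricted product expands as
\[
\beta T^l \;=\; \sum_{a \in [2,k]}\ \sum_{P_1,\ldots,P_l \in \binom{[2,k]}{2}} b_a\, b_{P_1}\cdots b_{P_l},
\]
an ordered sum with no disjointness requirement. Restricting to tuples $(a,P_1,\ldots,P_l)$ whose components are pairwise disjoint yields exactly $l!\,S_{1,l}$, because in $S_{1,l}$ the $l$ edges of $H_2$ are unordered while the restricted ordered sum counts each edge ordering. Hence $\beta T^l - l!\,S_{1,l}$ equals the sum over tuples with at least one index collision. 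I would decompose this by inclusion--exclusion over the $l + \binom{l}{2}$ elementary collision events: vertex--edge events $a\in P_i$ for $i \in [l]$, and edge--edge events $P_i\cap P_j \neq \emptyset$ for $i<j$. A single vertex--edge event contributes $S_{\{i_1,\{i_1,i_2\}\}}\,T^{l-1}$ (summing $b_a b_{P_i}$ over $a\in P_i$ and leaving the other $l-1$ edges free), and a single edge--edge event contributes $2\,\beta\,S_{\{\{i_1,i_2\},\{i_1,i_3\}\}}\,T^{l-2}$ at leading order, where the factor $2$ accounts for the two orderings of the two edges around their shared vertex and the degenerate subcase $P_i = P_j$ is of order $k^{2l-1}$.

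The key estimate is that each independent collision equates two indices and, since $|b_a|,|b_P|\le 1$, shaves one factor of $k$ off the count. Thus $\beta T^l = O(k^{2l+1})$, every single-collision term is $O(k^{2l})$, and every inclusion--exclusion term involving two or more independent collisions is $O(k^{2l-1})$, with combinatorial prefactors depending only on $l$. Summing the single-collision contributions gives $l\,S_{\{i_1,\{i_1,i_2\}\}}\,T^{l-1} + l(l-1)\,\beta\,S_{\{\{i_1,i_2\},\{i_1,i_3\}\}}\,T^{l-2}$, and dividing through by $E^l/l! = \Theta(k^{-2l})$ and substituting $T = E(1+\Delta)$ produces the claimed formula for $\tfrac{l!}{E^l}S_{1,l}$ with an $O(1/k)$ remainder.

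The cases $S_{2,l}$ and $S_{3,l}$ are easier. In $S_{2,l}$ the designated vertex $\alpha$ sits at the fixed coordinate $1$ and cannot collide with citizen indices, so only edge--edge collisions occur; since $|\alpha|\le 1$ and $S_{\{\{i_1,i_2\},\{i_1,i_3\}\}}=O(k^3)$, even the leading correction $l(l-1)\alpha\,S_{\{\{i_1,i_2\},\{i_1,i_3\}\}}\,T^{l-2}$ is already $O(k^{2l-1})$ and is absorbed into the $O(1/k)$ error. In $S_{3,l}$ the ``special edge'' $\{\alpha,i_2\}$ is pinned at coordinate $1$ and contributes $S_{\{\{\alpha,i_1\}\}}=O(k)$, one order smaller than $T$; the relaxed product $\beta\,S_{\{\{\alpha,i_1\}\}}\,T^{l-1}$ is only $O(k^{2l})$, its restricted version equals $(l-1)!\,S_{3,l}$ (only the $l-1$ non-special edges are unordered), every collision correction is immediately $O(k^{2l-1})$, and the leading term scales to $\tfrac{l\,\beta\,S_{\{\{\alpha,i_1\}\}}\,(1+\Delta)^{l-1}}{E}$ with the front factor $l=l!/(l-1)!$. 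The main obstacle I expect is making the $O(k^{2l-1})$ bound on multi-collision contributions rigorous and uniform in the overlap pattern (including awkward cases like three edges sharing one vertex or chains $a \in P_1$ with $P_1 \cap P_2 \neq \emptyset$); the cleanest way to handle this is to organise the inclusion--exclusion around connected components of the collision graph on the $2l+1$ index positions, since any non-singleton component reduces the number of free summation indices by at least one and hence shaves at least one factor of $k$, leaving only an $l$-dependent constant hidden in the big-$O$.
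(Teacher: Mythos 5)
Your proposal is correct and takes essentially the same route as the paper's proof: both compare $\beta T^l$ (resp. $\alpha T^l$, $\beta S_{\{\{\alpha,i_1\}\}}T^{l-1}$) against $l!\,S_{1,l}$ (resp. $l!\,S_{2,l}$, $(l-1)!\,S_{3,l}$), classify the discrepancy by coincident indices, extract the leading single-collision corrections with the same $l$ and $\binom{l}{2}\cdot 2$ multiplicities, and bound multi-collision terms by $O(k^{2l-1})$ via counting free indices. Your inclusion--exclusion/connected-component framing is a mild repackaging of the paper's ``zero / exactly one / two or more repetitions'' case split and leads to the same computation.
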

\addtocounter{theorem}{-1}
}

\begin{proof}[Proof of Proposition~\ref{prop:bias_sum}]
Here we only prove the first equality since the other two can be proved similarly. Recall that $S_{\{\{i_1, i_2\}\}} = E(1 + \Delta)$ and $E = \Theta(k^2)$. The first equality is equivalent to
\[
l!S_{1, l} = \beta(S_{\{\{i_1, i_2\}\}})^l - S_{\{i_1, \{i_1, i_2\}\}}l(S_{\{\{i_1, i_2\}\}})^{l - 1} - \beta S_{\{\{i_1, i_2\}, \{i_1, i_3\}\}}l(l-1)(S_{\{\{i_1, i_2\}\}})^{l - 2} + O\left(k^{2l-1}\right). 
\]
Let's analyze the term $\beta(S_{\{\{i_1, i_2\}\}})^l$, by definition, it's equal to
\[
\left(\sum_{i \geq 2}b_i\right)\left(\sum_{2 \leq i < j}b_{ij}\right)^l = \sum_{\substack{j_1, j_2, \ldots, j_{2l+1} \in \{2, 3, \ldots, k\} \\ j_2 < j_3, j_4 < j_5, \cdots ,j_{2l} < j_{2l + 1}}}b_{j_1}b_{j_2j_3}b_{j_4j_5}\cdots b_{j_{2l}j_{2l+1}}. 
\]
Let's call the sum on the right hand side $T$. We classify the terms in $T$ according to number of repetitions in indices. If there is no repetition, then the term $b_{j_1}b_{j_2j_3}b_{j_4j_5}\cdots b_{j_{2l}j_{2l+1}}$ is also in $S_{1, l}$. Note that in $S_{1, l}$ the order of the $l$ pairwise biases can be arbitrary, so the sum of terms with no repeated indices is equal to $l!S_{1, l}$. If there are two or more repetitions, then the number of distinct indices is at most $2l-1$, and the contribution of such terms is $O(k^{2l-1})$. If there is exact one repetition, then there are two cases.
\begin{itemize}
    \item $j_1$ is equal to some $j_t$ for $t \geq 2$. Without loss of generality consider the terms where the only repetition is $j_1 = j_2$ or $j_1 = j_3$ (note that $j_2 < j_3$). The contribution of these terms are
    \begin{align*}
    \sum_{\substack{j_1, j_2, \ldots, j_{2l+1} \in \{2, 3, \ldots, k\} \\ j_2 < j_3, j_4 < j_5, \cdots ,j_{2l} < j_{2l + 1} \\ j_1 = j_2 \text{ or } j_1 = j_3 \\
    j_2, j_3, \ldots, j_{2l + 1} \text{ distinct}}}b_{j_1}b_{j_2j_3}b_{j_4j_5}\cdots b_{j_{2l}j_{2l+1}} & = 
    \sum_{\substack{j_1, j_2, \ldots, j_{2l+1} \in \{2, 3, \ldots, k\} \\ j_2 < j_3, j_4 < j_5, \cdots ,j_{2l} < j_{2l + 1} \\ j_1 = j_2 \text{ or } j_1 = j_3}}b_{j_1}b_{j_2j_3}b_{j_4j_5}\cdots b_{j_{2l}j_{2l+1}} + O(k^{2l-1}) \\
    & = S_{\{i_1, \{i_1, i_2\}\}}(S_{\{\{i_1, i_2\}\}})^{l - 1} + O(k^{2l-1}).
    \end{align*}
    This is because the terms where $j_2, j_3, \ldots, j_{2l + 1}$ are not distinct have at most $2l - 1$ distinct indices and contribute $O(k^{2l-1})$. So the contribution of this case is
    \[
    lS_{\{i_1, \{i_1, i_2\}\}}(S_{\{\{i_1, i_2\}\}})^{l - 1} + O(k^{2l-1}).
    \]
    \item $j_s = j_t$ for some $s, t \geq 2$. Note that in this case $s$ and $t$ cannot appear in the same pairwise bias. Without loss of generality assume $s \in \{2, 3\}$ and $t \in \{4, 5\}$. We have
    \begin{align*}
    \sum_{\substack{j_1, j_2, \ldots, j_{2l+1} \in \{2, 3, \ldots, k\} \\ j_2 < j_3, j_4 < j_5, \cdots ,j_{2l} < j_{2l + 1} \\ \text{one repetition in } j_2, j_3, j_4, j_5 \\
    \text{other indices distinct}}}b_{j_1}b_{j_2j_3}b_{j_4j_5}\cdots b_{j_{2l}j_{2l+1}} & = 
    \sum_{\substack{j_1, j_2, \ldots, j_{2l+1} \in \{2, 3, \ldots, k\} \\ j_2 < j_3, j_4 < j_5, \cdots ,j_{2l} < j_{2l + 1} \\ \text{one repetition in } j_2, j_3, j_4, j_5}}b_{j_1}b_{j_2j_3}b_{j_4j_5}\cdots b_{j_{2l}j_{2l+1}} + O(k^{2l-1}) \\
    & = 2\beta S_{\{\{i_1, i_2\}, \{i_1, i_3\}\}}(S_{\{\{i_1, i_2\}\}})^{l - 2} + O(k^{2l-1}).
    \end{align*}
    So the contribution of this case is
    \begin{align*}
     & \binom{l}{2} \cdot \left(2\beta S_{\{\{i_1, i_2\}, \{i_1, i_3\}\}}(S_{\{\{i_1, i_2\}\}})^{l - 2} + O(k^{2l-1})\right) \\
    = \,\,  & \beta S_{\{\{i_1, i_2\}, \{i_1, i_3\}\}}l(l-1)(S_{\{\{i_1, i_2\}\}})^{l - 2} + O\left(k^{2l-1}\right).
    \end{align*}
\end{itemize}
We conclude that 
\[
\beta(S_{\{\{i_1, i_2\}\}})^l = l!S_{1, l} + lS_{\{i_1, \{i_1, i_2\}\}}(S_{\{\{i_1, i_2\}\}})^{l - 1} + \beta S_{\{\{i_1, i_2\}, \{i_1, i_3\}\}}l(l-1)(S_{\{\{i_1, i_2\}\}})^{l - 2} + O\left(k^{2l-1}\right).
\]
We get the desired equality by shifting the terms.
\end{proof}

\section{Proof of Lemma~\ref{lem:fourier_coeff_2}}

We first prove some combinatorial identities to be used later.
\begin{prop}\label{prop:b1}
For $t, l \in \mathbb{N}$, we have
\[
\sum_{j = 0}^l (-1)^j\binom{t}{j} = (-1)^l \cdot \binom{t-1}{l}.
\]
\end{prop}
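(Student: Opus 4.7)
The plan is to prove this by induction on $l$, with Pascal's identity $\binom{t}{j} = \binom{t-1}{j} + \binom{t-1}{j-1}$ as the main tool.

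For the base case $l = 0$, both sides equal $1$ (using the convention $\binom{t-1}{0} = 1$). For the inductive step, assuming $\sum_{j=0}^{l}(-1)^j\binom{t}{j} = (-1)^l\binom{t-1}{l}$, I would write
\[
\sum_{j=0}^{l+1}(-1)^j\binom{t}{j} = (-1)^l\binom{t-1}{l} + (-1)^{l+1}\binom{t}{l+1},
\]
and then apply Pascal's identity in the form $\binom{t}{l+1} - \binom{t-1}{l} = \binom{t-1}{l+1}$ to simplify the right-hand side to $(-1)^{l+1}\binom{t-1}{l+1}$, completing the induction.

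Alternatively, one could give a one-line telescoping proof by applying Pascal's identity to every term at once: substituting $\binom{t}{j} = \binom{t-1}{j} + \binom{t-1}{j-1}$ into the sum and reindexing the second piece via $i = j-1$ yields
\[
\sum_{j=0}^{l}(-1)^j\binom{t-1}{j} - \sum_{i=0}^{l-1}(-1)^i\binom{t-1}{i} = (-1)^l\binom{t-1}{l},
\]
since all terms except $j = l$ cancel. Since the identity is a completely routine binomial manipulation, I do not expect any genuine obstacle; the only minor care needed is handling the $j = 0$ boundary term, which is resolved by the convention $\binom{t-1}{-1} = 0$.
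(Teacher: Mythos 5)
Your inductive argument is exactly the paper's proof: base case $l=0$, then apply Pascal's identity $\binom{t}{l+1} = \binom{t-1}{l} + \binom{t-1}{l+1}$ to the new term. The telescoping variant you mention is a minor repackaging of the same identity and is also fine.
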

\begin{proof}
We prove by induction on $l$. If $l = 0$, then $LHS = 1 = RHS$. For $l \geq 1$, we have
\begin{align*}
\sum_{j = 0}^l (-1)^j\binom{t}{j} & = (-1)^{l-1} \cdot \binom{t-1}{l-1} + (-1)^l\binom{t}{l} \\
& = (-1)^{l-1} \cdot \binom{t-1}{l-1} + (-1)^l\left(\binom{t-1}{l-1} + \binom{t-1}{l}\right) \\
& = (-1)^l \cdot \binom{t-1}{l}.
\end{align*}
\end{proof}

\begin{prop}\label{prop:b2}
For $a, b \in \mathbb{R}$, $k \in \mathbb{N}$, we have
\begin{enumerate}
    \item $\sum_{i = 0}^k\binom{k}{i}a^{i-1}b^{k - i}i(i + 1) = 2k(a+b)^{k-1} + k(k-1)a(a+b)^{k-2}$.
    \item $\sum_{i = 0}^k\binom{k}{i}a^{i}b^{k-i-1}(k-i)(k-i+1) = 2k(a+b)^{k-1} + k(k-1)b(a+b)^{k-2}$.
\end{enumerate}
\end{prop}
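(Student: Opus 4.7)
The plan is to prove both identities by differentiating the binomial expansion of $(a+b)^k$ twice with respect to one variable, then deducing the second identity from the first by exploiting the symmetry $a \leftrightarrow b$ under the reindexing $j = k - i$.

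First I would recall the standard identity obtained by differentiating the binomial theorem: starting from $(a+b)^k = \sum_{i=0}^k \binom{k}{i} a^i b^{k-i}$ and differentiating once with respect to $a$ gives
\[
k(a+b)^{k-1} = \sum_{i=0}^k \binom{k}{i} i \, a^{i-1} b^{k-i}.
\]
Multiplying by $a$ and differentiating once more with respect to $a$ (or, equivalently, applying $a \tfrac{\partial}{\partial a}$ twice) yields
\[
\sum_{i=0}^k \binom{k}{i} i^2 \, a^{i-1} b^{k-i} = k(a+b)^{k-1} + k(k-1) a(a+b)^{k-2}.
\]
The main step is then to write $i(i+1) = i^2 + i$ and add the two expressions above. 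This produces exactly $2k(a+b)^{k-1} + k(k-1)a(a+b)^{k-2}$, establishing identity (1).

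For identity (2), I would substitute $j = k - i$ and use $\binom{k}{k-j} = \binom{k}{j}$ to rewrite the left-hand side as
\[
\sum_{j=0}^k \binom{k}{j} a^{k-j} b^{j-1} j(j+1),
\]
which is precisely the sum appearing in identity (1) with the roles of $a$ and $b$ interchanged. Applying identity (1) with that swap then gives $2k(a+b)^{k-1} + k(k-1) b(a+b)^{k-2}$, as required.

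There is no real obstacle here; the only thing to be careful about is index bookkeeping (the apparent $a^{-1}$ at $i=0$ and $b^{-1}$ at $i=k$ are harmless since the matching factors $i$ or $k-i$ annihilate those terms), and making sure that when differentiating twice one actually tracks both the $i^2$ term and the cross term $i$ arising from the product rule, so that the combination yielding $i(i+1)$ comes out with the correct coefficients $2k$ and $k(k-1)$.
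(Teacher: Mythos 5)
Your proof is correct and uses essentially the same approach as the paper: differentiating the binomial expansion twice with respect to $a$, and deducing identity (2) from identity (1) via the reindexing $j = k - i$ and the symmetry swap $a \leftrightarrow b$. The only cosmetic difference is that the paper identifies the sum directly as $\frac{\partial^2}{\partial a^2}\bigl(a(a+b)^k\bigr)$ in one stroke, whereas you split $i(i+1) = i^2 + i$ and combine two differentiated sums; these are the same calculation organized slightly differently.
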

\begin{proof}
We have
\begin{align*}
    \sum_{i = 0}^k\binom{k}{i}a^{i-1}b^{k - i}i(i + 1) & = \frac{\partial^2}{\partial^2 a}\left(\sum_{i = 0}^k\binom{k}{i}a^{i+1}b^{k - i}\right) \\
    & = \frac{\partial^2}{\partial^2 a}\left(a(a+b)^k\right) \\
    & = 2k(a+b)^{k-1} + k(k-1)a(a+b)^{k-2}.
\end{align*}
This gives Item 1. By substituting $i$ with $k-i$ and swapping $a$ and $b$ in Item 1 we get Item 2.
\end{proof}

Now we are ready to prove Lemma~\ref{lem:fourier_coeff_2}.
{
\renewcommand{\thetheorem}{\ref{lem:fourier_coeff_2}}
\begin{lemma}[Restated]
Let $P(x_1, \ldots, x_k) = \sign(\delta\cdot k x_1 + x_2 + \cdots + x_k)$ where $\delta \in (0, 1)$ such that $\delta k + k - 1$ is an odd integer. Let $u = \frac{1+\delta}{2}k$ and $v = \frac{1-\delta}{2}k$. Let $\hat{P}_{tC}$ denote the Fourier coefficient of a set of $t$ citizens and $\hat{P}_{P+tC}$ denote the Fourier coefficient of a set of $t$ citizens together with the president. We have
\begin{align*}
    &\hat{P}_P = 1 - \frac{1}{2^{k - 2}}\sum_{l = 0}^{v - 1}\binom{k-1}{l}, \\
&\hat{P}_{tC} = \frac{1}{2^{k - 2}}\cdot\frac{(k - t - 1)!}{(u-1)!(v-1)!}\left(\delta^{t-1}k^{t-1} - \frac{(t-1)(t-2)}{2}\delta^{t-3}k^{t - 2} + O(k^{t - 3})\right), \quad t \textrm{ is an odd constant}\\
    &\hat{P}_{P+tC} = -\frac{1}{2^{k - 2}}\cdot\frac{(k - t - 1)!}{(u-1)!(v-1)!}\left(\delta^{t-1}k^{t-1} - \frac{(t-1)(t-2)}{2}\delta^{t-3}k^{t - 2} + O(k^{t - 3})\right), \quad t \textrm{ is an even constant}\\
\end{align*}
where the constants inside the big $O$s depend on $t$ but not on $\delta$.
\end{lemma}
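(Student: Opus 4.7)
\textbf{Proof proposal for Lemma \ref{lem:fourier_coeff_2}.} The plan is to specialize Lemma \ref{lem:fourier_coeff} to the case $a = \delta k$ and then massage the double sum that appears using the combinatorial identities in Propositions \ref{prop:b1} and \ref{prop:b2}. The formula for $\hat{P}_P$ is immediate: since $u+v=k$ and $u-v=\delta k$, we have $k-a-1 = 2v-1$, which is odd because $\delta k + k - 1$ is odd, so $\tau=\lfloor(2v-1)/2\rfloor=v-1$, and Item (1) of Lemma \ref{lem:fourier_coeff} yields the claimed expression immediately. The work is in estimating the double sum
\[
T_t = \sum_{i=0}^{v-1}\sum_{j=0}^{v-1-i}(-1)^j\binom{k-t-1}{i}\binom{t}{j}.
\]
appearing in the other two items.

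First, I would apply Proposition \ref{prop:b1} to the inner sum, giving $\sum_{j=0}^{v-1-i}(-1)^j\binom{t}{j} = (-1)^{v-1-i}\binom{t-1}{v-1-i}$. Since $t$ is a fixed constant, $\binom{t-1}{v-1-i}$ vanishes unless $v-t\le i\le v-1$, so after the substitution $j=v-1-i$ the sum collapses to $t$ terms:
\[
T_t = \sum_{j=0}^{t-1}(-1)^j\binom{t-1}{j}\binom{k-t-1}{v-1-j}.
\]
Next, I rewrite $\binom{k-t-1}{v-1-j} = \frac{(k-t-1)!}{(v-1-j)!(u-t+j)!}$ and pull out the common factor $\frac{(k-t-1)!}{(u-1)!(v-1)!}$, which leaves the complementary products $(v-1)(v-2)\cdots(v-j)$ (of length $j$) and $(u-1)(u-2)\cdots(u-t+j+1)$ (of length $t-j-1$). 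Denote these $B_j$ and $A_j$ respectively. Thus
\[
T_t = \frac{(k-t-1)!}{(u-1)!(v-1)!}\sum_{j=0}^{t-1}(-1)^j\binom{t-1}{j}A_jB_j,
\]
and the problem reduces to expanding $\sum_j(-1)^j\binom{t-1}{j}A_jB_j$ to second order in $k$.

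For the leading term, I expand $A_j=u^{t-j-1}+O(k^{t-j-2})$ and $B_j=v^j+O(k^{j-1})$. The leading contribution is $\sum_{j=0}^{t-1}(-1)^j\binom{t-1}{j}u^{t-j-1}v^j = (u-v)^{t-1} = \delta^{t-1}k^{t-1}$ by the binomial theorem. For the $k^{t-2}$ correction, the subleading parts of $A_j$ and $B_j$ are $-\tfrac{(t-j-1)(t-j)}{2}u^{t-j-2}$ and $-\tfrac{j(j+1)}{2}v^{j-1}$ (sums of the roots being $\tfrac{n(n+1)}{2}$ for $n$ consecutive integers). This produces exactly the two sums treated by Proposition \ref{prop:b2}: Item 2 with $a=-v,b=u,k=t-1$ evaluates the first and Item 1 with the same substitution evaluates the second. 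Adding the two results, the $2(t-1)(u-v)^{t-2}$ contributions cancel and the surviving piece is $-\tfrac{(t-1)(t-2)}{2}(u+v)(u-v)^{t-3} = -\tfrac{(t-1)(t-2)}{2}\delta^{t-3}k^{t-2}$, matching the statement. The residual terms coming from cubic-and-higher corrections in $A_j, B_j$ are uniformly $O(k^{t-3})$, with constants depending only on $t$ (since the binomial coefficients $\binom{t-1}{j}$ and the symmetric functions of $\{1,\ldots,t\}$ that appear are bounded in terms of $t$ alone, independently of $\delta$). For the case of $\hat P_{P+tC}$ with $t$ even, the only change is the overall sign supplied by Item (3) of Lemma \ref{lem:fourier_coeff}.

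The main obstacle I anticipate is the bookkeeping in the second-order step: I need to match up the subleading expansion of $A_jB_j$ with precisely the combinations handled by the two items of Proposition \ref{prop:b2}, and verify that the $(u-v)^{t-2}$ contributions cancel so that what remains collapses via $u+v=k$ and $u-v=\delta k$ into a single clean $k^{t-2}$ term. Checking that the $\delta$-independence of the hidden $O(k^{t-3})$ constants survives this rearrangement requires tracking that every coefficient appearing in the expansion of $A_jB_j$ is a polynomial in $j,t$ rather than in $u,v$ themselves, which is straightforward but needs to be stated explicitly.
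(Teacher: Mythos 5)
Your proposal follows the paper's proof essentially verbatim: specialize Lemma~\ref{lem:fourier_coeff} with $\tau = v-1$, collapse the inner sum via Proposition~\ref{prop:b1}, re-index and factor out $\frac{(k-t-1)!}{(u-1)!(v-1)!}$, extract the leading $k^{t-1}$ term as $(u-v)^{t-1}$, and evaluate the $k^{t-2}$ correction with the two identities of Proposition~\ref{prop:b2} under the substitution $a=-v$, $b=u$, $k\to t-1$, observing the cancellation of the $(t-1)(u-v)^{t-2}$ pieces. The steps, including the sign bookkeeping and the final collapse via $u+v=k$, $u-v=\delta k$, match the paper's computation exactly.
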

\addtocounter{theorem}{-1}
}
\begin{proof}[Proof of Lemma~\ref{lem:fourier_coeff_2}]
We have $\tau = \lfloor (k - \delta k - 1)/2 \rfloor = v - 1$. It follows from Lemma~\ref{lem:fourier_coeff} that 
\begin{align*}
    &\hat{P}_P = 1 - \frac{1}{2^{k - 2}}\sum_{l = 0}^{v - 1}\binom{k-1}{l}, \\
&\hat{P}_{tC} = \frac{1}{2^{k - 2}}\sum_{i = 0}^{v-1}\sum_{j = 0}^{v-1 - i}(-1)^j\binom{k-t-1}{i}\binom{t}{j}, \qquad  \forall t (1 \leq t \leq k - 1 \wedge t \textrm{ is odd}),\\
    &\hat{P}_{P+tC} = -\frac{1}{2^{k - 2}}\sum_{i = 0}^{v-1}\sum_{j = 0}^{v-1 - i}(-1)^j\binom{k-t-1}{i}\binom{t}{j}, \qquad  \forall t (2 \leq t \leq k - 1 \wedge t \textrm{ is even}).\\
\end{align*}
We have
\begin{align*}
\sum_{i = 0}^{v-1}\sum_{j = 0}^{v-1 - i}(-1)^j\binom{k-t-1}{i}\binom{t}{j} & = \sum_{i = 0}^{v-1}\binom{k-t-1}{i}\left(\sum_{j = 0}^{v-1 - i}(-1)^j\binom{t}{j}\right) \\
& = \sum_{i = 0}^{v-1}\binom{k-t-1}{i}(-1)^{v-1-i}\binom{t-1}{v-1-i}\\
& = \sum_{i = v-t}^{v-1}\binom{k-t-1}{i}(-1)^{v-1-i}\binom{t-1}{v-1-i}. 
\end{align*}
The second to last equaltiy comes from Proposition~\ref{prop:b1}, and the last equality comes from the fact that $\binom{t-1}{v-1-i} = 0$ if $t-1 < v-1-i$. For $v-t\leq i \leq v-1$ we have
\begin{align*}
\binom{k-t-1}{i} & = \frac{(k-t-1)!}{i!(k-t-1-i)!} \\ 
& =\frac{(k-t-1)!}{(u-1)!(v-1)!}\cdot(i+1)(i+2)\cdots(v-1)(k-t-i)(k-t-i+1)\cdots(u-1) \\
& = \frac{(k-t-1)!}{(u-1)!(v-1)!}\prod_{l = 1}^{v - 1 - i}(v-l)\prod_{l = 1}^{t-v+i}(u-l).
\end{align*}
Here we used the fact that $u = k - v$. By substituting $i$ with $v - 1 - i$ we get
\begin{align*}
    \sum_{i = v-t}^{v-1}\binom{k-t-1}{i}(-1)^{v-1-i}\binom{t-1}{v-1-i} = \frac{(k-t-1)!}{(u-1)!(v-1)!}\sum_{i = 0}^{t-1}(-1)^i\binom{t-1}{i}\prod_{l = 1}^i\left(v - l\right)\prod_{l = 1}^{t-1-i}\left(u - l\right).
\end{align*}

We need to estimate the sum $\sum_{i = 0}^{t-1}(-1)^i\binom{t-1}{i}\prod_{l = 1}^i\left(v - l\right)\prod_{l = 1}^{t-1-i}\left(u - l\right)$. Since $u = \frac{1+\delta}{2}k$, $v = \frac{1-\delta}{2}k$ and $t$ is a constant, this sum is a polynomial in $k$ with degree $t-1$. The degree $t-1$ term is
\begin{align*}
    \sum_{i = 0}^{t-1}(-1)^i\binom{t-1}{i}v^iu^{t-1-i} = \left(u - v\right)^{t-1} = \delta^{t-1}k^{t-1}.
\end{align*}
Note that degree $t-2$ term is formed by taking one $l$ in one of the factors in $\prod_{l = 1}^i\left(v - l\right)\prod_{l = 1}^{t-1-i}\left(u - l\right)$ and taking $u$ or $v$ in the remaining factors, so it is
\begin{align*}
    &\sum_{i = 0}^{t-1}(-1)^i\binom{t-1}{i}\left(v^iu^{t-i-2}\sum_{l=1}^{t-1-i}(-l) + v^{i-1}u^{t-1-i}\sum_{l=1}^i(-l)\right) \\
    =\,\, & \sum_{i = 0}^{t-1}(-1)^i\binom{t-1}{i}\left(-v^iu^{t-i-2}\cdot\frac{(t-1-i)(t-i)}{2} - v^{i-1}u^{t-1-i}\cdot\frac{i(i+1)}{2}\right) \\
    =\,\, & \left((t-1)(u-v)^{t-2} - \frac{1}{2}(t-1)(t-2)v(u-v)^{t-3}\right) - \left((t-1)(u-v)^{t-2} + \frac{1}{2}(t-1)(t-2)u(u-v)^{t-3}\right) \\
    =\,\, & -\frac{1}{2}(t-1)(t-2)(u+v)(u-v)^{t-3} \\
    =\,\, & -\frac{1}{2}(t-1)(t-2)\delta^{t-3}k^{t-2}.
\end{align*}
Hence, the lemma follows.
\end{proof}

\section{Converting Non-integer Coefficient to Integer Coefficient}
In this section, we show that the assumption that the coefficient of the president is integer is not a serious restriction by proving the following theorem.
\begin{lemma}
Let $\delta$ be a constant with $0 < \delta < 1$. There exists a function $\delta' = \delta'(k)$ such that $\delta'\cdot k \in \mathbb{N}$ and
\[
\sign\left(\delta kx_1 + \sum_{i = 2}^k x_i\right) = \sign\left(\delta' kx_1 + \sum_{i = 2}^k x_i\right)
\]
for every $k \in \mathbb{N}$ and $x \in \{-1, 1\}^k$.
\end{lemma}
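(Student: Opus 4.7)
The idea is to define $\delta'(k)$ by rounding $\delta k$ to a nearby integer whose parity matches that of $k$, and then to check that the sign of $\delta k x_1 + \sum_{i=2}^k x_i$ is unchanged under this rounding. The rounding is constrained because the perturbation has to be small enough to preserve the sign, while the rounded value must still yield a well-defined (tie-free) predicate.

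Concretely, set $N := \delta k$. If $N$ is already an integer with the same parity as $k$ (equivalently, $N + k - 1$ is odd), take $\delta'(k) = \delta$. Otherwise, the two consecutive integers $\lfloor N \rfloor$ and $\lceil N \rceil$ have opposite parities, so exactly one of them is congruent to $k$ modulo $2$; take $\delta'(k) \cdot k$ to be that integer. Provided $k$ is large enough that $N \geq 1$, we have $N' := \delta'(k) \cdot k \in \mathbb{N}$ and $|N - N'| < 1$. (The remaining case in which $N$ is an integer of the wrong parity corresponds to the original predicate having a tie at some $x$, which is precisely the situation excluded in the paper's setup.)

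The verification is a short parity argument. Fix $x \in \{-1,1\}^k$ and put $M := \sum_{i=2}^k x_i$. Being a sum of $k-1$ terms drawn from $\{-1,+1\}$, $M$ is an integer with $M \equiv k - 1 \pmod 2$. Since $N'$ has the same parity as $k$, for any $x_1 \in \{-1,1\}$ the quantity $N' x_1 + M$ is an integer of parity $2k - 1$, i.e.\ odd, and therefore nonzero; hence $|N' x_1 + M| \geq 1$. On the other hand $|(N - N') x_1| = |N - N'| < 1$, so
\[
\sign(N x_1 + M) = \sign\bigl((N' x_1 + M) + (N - N') x_1\bigr) = \sign(N' x_1 + M),
\]
which is exactly the claimed identity.

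I do not anticipate a real obstacle. The only place one has to be careful is the parity bookkeeping: the defining condition ``$N' + k - 1$ is odd'' is chosen precisely so that $N' x_1 + M$ is a nonzero integer for every valid $M$, which in turn provides the $|N' x_1 + M| \geq 1$ slack needed to absorb the sub-unit perturbation $|N - N'| < 1$ without flipping any sign.
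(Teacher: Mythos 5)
Your proposal is correct and follows essentially the same route as the paper: round $\delta k$ to an adjacent integer of the parity making $\delta' k + k - 1$ odd, observe that this forces $\delta' k x_1 + \sum_{i\geq 2} x_i$ to be an odd (hence nonzero) integer for every $x$, and absorb the sub-unit perturbation $|\delta k - \delta' k| < 1$ without changing any sign. Your parity bookkeeping ($M \equiv k-1$, $N' \equiv k$, so $N'x_1 + M$ odd) is just a more explicit version of the paper's one-line observation that every element of the range is odd; the caveat about needing $N \geq 1$ is unnecessary, since the two candidates $\{\lfloor N\rfloor, \lceil N\rceil\}$ always contain one of the right parity even when $N < 1$.
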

\begin{proof}
If $\delta k$ is already an integer, then let $\delta'(k) = \delta$. Otherwise, there exists $m \in \mathbb{N}$ such that $\delta k \in (m, m + 1)$. We will choose $\delta'(k) = m/k$ or $(m+1)/k$ depending on the following: if $k - 1$ is even, we choose $\delta'(k) \cdot k$ to be odd, and if $k - 1$ is odd, we choose $\delta'(k) \cdot k$ to be even. Note that the range of $\delta' kx_1 + \sum_{i = 2}^k x_i$ is 
\[
R = \{\delta' k + k - 1, \delta' k + k - 3, \ldots, \delta' k - k + 1, -\delta' k + k - 1, \delta' k + k - 3, \ldots, \delta' k - k + 1\}.
\]
By our choice of $\delta'$, every $t \in R$ is odd and therefore $|t| \geq 1$. For every $x \in \{-1, 1\}^k$, we have
\[
\left|\left(\delta kx_1 + \sum_{i = 2}^k x_i\right) - \left(\delta' kx_1 + \sum_{i = 2}^k x_i\right)\right| = \left| \delta k x_1 - \delta'k x_1\right | =  \left| \delta k - \delta'k\right | < 1,
\]
which implies that 
\[
\sign\left(\delta kx_1 + \sum_{i = 2}^k x_i\right) = \sign\left(\delta' kx_1 + \sum_{i = 2}^k x_i\right).
\]
\end{proof}
What we essentially did here is that we rounded $\delta k$ to either below or above. Note that this gives $|\delta'(k) - \delta| \leq 1/k$, which means when $k$ is sufficiently large, $\delta'$ and $\delta$ will be very close to each other. 
\end{document}